\newtheorem{theorem}{Theorem}[section]
\newtheorem{lemma}[theorem]{Lemma}
\newtheorem{definition}[theorem]{Definition}
\newtheorem{case}{Case}
\theoremstyle{remark}
\definecolor{newblue}{rgb}{0.19, 0.55, 0.91}
\newcommand{\eps}{\varepsilon}
\def\1{\bm{1}}
\DeclareMathAlphabet{\mathsfit}{\encodingdefault}{\sfdefault}{m}{sl}
\SetMathAlphabet{\mathsfit}{bold}{\encodingdefault}{\sfdefault}{bx}{n}
\newcommand{\calA}{\mathcal{A}}
\newcommand{\calB}{\mathcal{B}}
\newcommand{\R}{\mathbb{R}}
\renewcommand{\tilde}{\widetilde}
\renewcommand{\bar}{\overline}
\newcommand{\norm}[1]{\left\|#1\right\|}
\providecommand{\expect}[2]{\ensuremath{\ifthenelse{\equal{#1}{}}{\mathbb{E}}{\mathbb{E}_{#1}}\!\left[#2\right]}\xspace}
\providecommand{\prob}[2]{\ensuremath{\ifthenelse{\equal{#1}{}}{\Pr}{\Pr_{#1}}\!\left[#2\right]}\xspace}
\newcommand{\inner}[1]{\langle #1\rangle}
\DeclareMathOperator{\poly}{poly}
\newcommand{\sk}{\mathrm{sk}}
\newcommand{\card}[1]{\left \vert #1 \right \vert}
\begin{document}

\title{Tight Lower Bounds for Directed Cut Sparsification and Distributed Min-Cut}

\author{Yu Cheng\footnote{Brown University. \texttt{yu\_cheng@brown.edu}}
        \hspace{5.5em}
        Max Li\footnote{Carnegie Mellon University. \texttt{mlli@andrew.cmu.edu}}
        \hspace{5.5em}
        Honghao Lin\footnote{Carnegie Mellon University. \texttt{honghaol@andrew.cmu.edu}}
        \\[0.5em]
        Zi-Yi Tai\footnote{Carnegie Mellon University. \texttt{ztai@andrew.cmu.edu}}
        \hspace{3em}
        David P. Woodruff\footnote{Carnegie Mellon University. \texttt{dwoodruf@andrew.cmu.edu}}
        \hspace{3em}
        Jason Zhang\footnote{Carnegie Mellon University. \texttt{jasonz3@andrew.cmu.edu}}
       }
\date{\vspace{-5ex}}

\maketitle

\newcolumntype{L}[1]{>{\raggedright\let\newline\\\arraybackslash\hspace{0pt}}m{#1}}
\newcolumntype{C}[1]{>{\centering\let\newline\\\arraybackslash\hspace{0pt}}m{#1}}
\newcolumntype{R}[1]{>{\raggedleft\let\newline\\\arraybackslash\hspace{0pt}}m{#1}}

% END author macros

% \begin{document}

\maketitle

% \input{math_commends_icalp.tex}

%TODO mandatory: add short abstract of the document
\begin{abstract}
In this paper, we consider two fundamental cut approximation problems on large graphs.
We prove new lower bounds for both problems that are optimal up to logarithmic factors.

The first problem is to approximate cuts in balanced directed graphs.
In this problem, the goal is to build a data structure that $(1 \pm \eps)$-approximates cut values in graphs with $n$ vertices.
For arbitrary directed graphs, such a data structure requires $\Omega(n^2)$ bits even for constant $\eps$. 
To circumvent this, recent works study $\beta$-balanced graphs, meaning that for every directed cut, the total weight of edges in one direction is at most $\beta$ times that in the other direction.
We consider two models: the {\em for-each} model, where the goal is to approximate each cut with constant probability, and the {\em for-all} model, where all cuts must be preserved simultaneously.
We improve the previous $\Omega(n \sqrt{\beta/\eps})$ lower bound to $\tilde{\Omega}(n \sqrt{\beta}/\eps)$ in the for-each model, and we improve the previous $\Omega(n \beta/\eps)$ lower bound to $\Omega(n \beta/\eps^2)$ in the for-all model.~\footnote{In this paper, we use $\tilde O(\cdot)$ and $\tilde \Omega(\cdot)$ to hide logarithmic factors in its parameters.} This resolves the main open questions of (Cen et al., ICALP, 2021).

The second problem is to approximate the global minimum cut in a local query model, where we can only access the graph via degree, edge, and adjacency queries. We improve the previous $\Omega\bigl(\frac{m}{k}\bigr)$ query complexity lower bound to $\Omega\bigl(\min\{m, \frac{m}{\eps^2 k}\}\bigr)$ for this problem, where $m$ is the number of edges, $k$ is the size of the minimum cut, and we seek a $(1+\eps)$-approximation.
In addition, we show that existing upper bounds with slight modifications match our lower bound up to logarithmic factors.
\end{abstract}

\maketitle

\section{Introduction}
The notion of cut sparsifiers has been extremely influential.
It was introduced by Bencz{\'{u}}r and Karger~\cite{BK96} and it is the following:
Given a graph $G = (V, E, w)$ with $n = |V|$ vertices, $m = |E|$ edges, edge weights $w_e \ge 0$, and a desired error parameter $\eps > 0$, a $(1\pm \eps)$ {\em cut sparsifier} of $G$ is a subgraph $H$ on the same vertex set $V$ with (possibly) different edge weights, such that $H$ approximates the value of every cut in $G$ within a factor of $(1\pm\eps)$. Bencz{\'{u}}r and Karger~\cite{BK96} showed that every undirected graph has a $(1\pm\eps)$ cut sparsifier with only $O(n \log n/\eps^2)$ edges. This was later extended to the stronger notion of spectral sparsifiers~\cite{ST11} and the number of edges was improved to $O(n/\eps^2)$~\cite{BSS12}; see also related work with different bounds for both cut and spectral sparsifiers \cite{FHHP11,KP12,ST04,SS11,LeeS17,CKST19}.

In the database community, a key result is the work of \cite{AGM12}, which shows how to construct a sparsifer using $\tilde O(n/\eps^2)$ linear measurements to $(1+\eps)$-approximate all cut values. Sketching massive graphs arises in various applications where there are entities and relationships between them, such as webpages and hyperlinks, people and friendships, and IP addresses and data flows. As large graph databases are often distributed or stored on external memory, sketching algorithms are useful for reducing communication and memory usage in distributed and streaming models. We refer the readers to \cite{mcgregor2014graph} for a survey of graph stream algorithms in the database community. 

For very small values of $\eps$, the $1/\eps^2$ dependence in known cut sparsifiers may be prohibitive. Motivated by this, the work of \cite{ACK+15} relaxed the cut sparsification problem to outputting a data structure $D$, such that for any fixed cut $S \subset V$, the value $D(S)$ is within a $(1 \pm \eps)$ factor of the cut
value of $S$ in $G$ with probability at least $2/3$. Notice the order of quantifiers --- the data structure only needs to preserve the value of any fixed cut (chosen independently of its randomness) with high constant probability. This is referred to as the {\em for-each} model, and the data structure is called a {\em for-each cut sketch}. Surprisingly, \cite{ACK+15} showed that every undirected graph has a $(1\pm\eps)$ for-each cut sketch of size $\tilde{O}(n/\eps)$ bits, reducing the dependence on $\eps$ to linear.
They also showed an $\Omega(n/\eps)$ bits lower bound in the for-each model.
The improved dependence on $\eps$ is indeed coming from relaxing the original sparsification problem to the for-each model: \cite{ACK+15} proved an $\Omega(n/\eps^2)$ bit lower bound on any data structure that preserves all cuts simultaneously, which is referred to as the {\em for-all} model. This lower bound in the for-all model was strengthened to $\Omega(n \log n/\eps^2)$ bits in \cite{CKST19}. 

While the above results provide a fairly complete picture for undirected graphs, a natural question is whether similar improvements are possible for directed graphs.
This is the main question posed by \cite{CCP+21}. For directed graphs, even in the for-each model, there is an $\Omega(n^2)$ lower bound without any assumptions on the graph.
Motivated by this, \cite{EMPS16, IT18, CCP+21} introduced the notion of {\em $\beta$-balanced} directed graphs, meaning that for every directed cut $(S, V \setminus S)$, the total weight of edges from $S$ to $V \setminus S$ is at most $\beta$ times that from $V \setminus S$ to $S$.
The notion of $\beta$-balanced graphs turned out to be very useful for directed graphs, as \cite{IT18,CCP+21} showed an $\tilde{O}(n \sqrt{\beta}/\eps)$ upper bound in the for-each model, and an $\tilde{O}(n \beta/\eps^2)$ upper bound in the for-all model, thus giving non-trivial bounds for both problems for small values of $\beta$.
The work of \cite{CCP+21} also proved lower bounds:
they showed an $\Omega(n \sqrt{\beta/\eps})$ lower bound in the for-each model, and an $\Omega(n \beta/\eps )$ lower bound in the for-all model.
While their lower bounds are tight for constant $\eps$, there is a quadratic gap for both models in terms of the dependence on $\eps$. The main open question of \cite{CCP+21} is to determine the optimal dependence on $\eps$, which we resolve in this work.

Recent work further explored {\em spectral sketches}, faster computation of sketches, and sparsification of Eulerian graphs ($\beta$-balanced graphs with $\beta = 1$)~\cite{ACK+15,JambulapatiS18,CohenKKPPRS18,ChuGPSSW23,SaranurakW19}.
In this paper, we focus on the space complexity of cut sketches for general values of $\beta$.

As observed in \cite{ACK+15}, one of the main ways to use for-each cut sketches is to solve the distributed minimum cut problem.
This is the problem of computing a $(1+\eps)$-approximate global minimum cut of a graph whose edges are distributed across multiple servers.
One can ask each server to compute a $(1 \pm 0.2)$ for-all cut sketch and a $(1 \pm \eps)$ for-each cut sketch.
This allows one to find all $O(1)$-approximate minimum cuts, and because there are at most $n^{O(C)}$ cuts with value within a factor of $C$ of the minimum cut, one can query all these $\poly(n)$ cuts using the more accurate for-each cut sketches, resulting in an optimal linear in $1/\eps$ dependence in the communication.

Motivated by this connection to distributed minimum cut estimation, we also consider the problem of directly approximating the minimum cut in a {\em local query} model, which was introduced in \cite{RSW18} and studied for minimum cut in \cite{eden2017lower,globalmincut}. The model is defined as follows. 

Let $G(V, E)$ be an unweighted and undirected graph, where the vertex set $V$ is known but the edge set $E$ is unknown. In the {\em local query} model, we have access to an oracle that can answer the following three types of local queries:
\begin{enumerate}[leftmargin=*]
    \item Degree query: Given $u \in V$, the oracle returns the degree of $u$.
    \item Edge query: Given $u \in V$ and index $i$, the oracle returns the $i$-th neighbor of $u$, or $\bot$ if the edge does not exist.
    \item Adjacency query: Given $u, v \in V$, the oracle returns whether $(u ,v) \in E$.
\end{enumerate}

In the $\textsc{Min-Cut}$ problem, our goal is to estimate the global minimum cut up to a $(1 \pm \eps)$-factor using these local queries. The complexity of the problem is measured by the number of queries, and we want to use as few queries as possible. For this problem we focus on undirected graphs. 

Previous work \cite{eden2017lower} showed an $\Omega(\frac{m}{k})$ query complexity lower bound, where $k$ is the size of the minimum cut.
The main open question is what the dependence on $\eps$ should be. There is also an $\tilde{O}(\frac{m}{k \poly(\eps)})$ upper bound in \cite{globalmincut}, and a natural question is to close this gap.

\subsection{Our Results}
We resolve the main open questions mentioned above.

\medskip
{\noindent \bf Cut Sketch for Balanced (Directed) Graphs.}
We study the space complexity of $(1\pm\eps)$ cut sketches for $n$-node $\beta$-balanced (directed) graphs.
Previous work~\cite{IT18,CCP+21} gave an $\tilde{O}(n\beta/\eps^2)$ upper bound in the for-all model and an $\tilde{O}(n\sqrt{\beta}/\eps)$ upper bound in the for-each model, along with an $\Omega(n\beta/\eps)$ lower bound and an $\Omega(n\sqrt{\beta/\eps})$ lower bound, respectively.

We close these gaps and resolve the dependence on $\eps$, improving the lower bounds to match the upper bounds for all parameters $n, \beta$, and $\eps$ (up to logarithmic factors). Formally, we have:

\begin{restatable}[For-Each Cut Sketch for Balanced Graphs]{theorem}{ForEachCutSketch}
\label{thm:for_each}
Let $\beta \ge 1$ and $0 < \eps < 1$. Assume $\sqrt{\beta}/\eps \le n / 2$. Any $(1 \pm \eps)$ for-each cut sketching algorithm for $\beta$-balanced $n$-node graphs must output $\tilde{\Omega}(n \sqrt{\beta}/ \eps)$ bits.
\end{restatable}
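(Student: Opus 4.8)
I would prove this by an information-theoretic (one-way communication) argument. The plan is to exhibit a distribution over $\beta$-balanced $n$-vertex graphs $G_x$ indexed by a uniformly random bit string $x$ of length $N=\tilde\Omega(n\sqrt\beta/\eps)$, together with a fixed family of cut queries, such that from any $(1\pm\eps)$ for-each cut sketch of $G_x$ one can reconstruct $\Omega(N)$ coordinates of $x$ with constant probability. Since the coordinates of $x$ are independent and uniform, a standard Fano/entropy argument then shows that the sketch has mutual information $\Omega(N)$ with $x$, hence at least $\Omega(N)=\tilde\Omega(n\sqrt\beta/\eps)$ bits. The for-each model enters exactly here: a cut query is answered within $(1\pm\eps)$ with probability only $\ge 2/3$, but because the queries used to decode different coordinates are fixed in advance, a constant fraction of them succeed, which is all the reconstruction needs — and this constant-probability-per-query budget is what should allow the bound to have a $1/\eps$ rather than a $1/\eps^2$ dependence.

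\textbf{The hard instance.} I would build $G_x$ as a vertex-disjoint union of $\Theta(n/g)$ identical gadgets on $g$ vertices each, where $g$ is chosen so that one gadget encodes $\tilde\Theta(g\sqrt\beta/\eps)$ bits of $x$; the gadget is a small directed graph in which arc weights are laid out on geometric grids $\{(1+\Theta(\eps))^j\}$, so each arc carries $\Theta(\log(1/\eps))$ bits, with forward and backward arcs paired so that their weights stay within a factor $\beta$ of one another. The key structural property is that $G_x$ is $\beta$-balanced for every $x$: by disjointness it suffices to check one gadget, and that reduces to a short case analysis over how a cut splits the gadget's vertices, using the forward/backward pairing. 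A second easy but essential observation is that a for-each cut sketch of $G_x$ restricts to a for-each cut sketch of each gadget — to query a cut of a gadget, query the same set in $G_x$ with every other gadget placed entirely on one side, which changes nothing — so it suffices to reconstruct a constant fraction of a single gadget's bits and then sum Fano over the $\Theta(n/g)$ gadgets.

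\textbf{Decoding a gadget.} For each bit encoded in a gadget I would specify a probe cut whose value in $G_x$ equals the weight of the arc carrying that bit plus a fixed, data-independent offset coming from the other arcs the cut happens to cross; a $(1\pm\eps)$-approximate answer then pins that weight down on its grid, hence recovers the bit, as long as the offset does not overwhelm the arc weight. Since each probe is answered correctly with probability $\ge 2/3$ and we use one probe per coordinate, in expectation a constant fraction of the gadget's $\tilde\Theta(g\sqrt\beta/\eps)$ coordinates are recovered, so with constant probability $\Omega(g\sqrt\beta/\eps)$ of them are; Fano per gadget and summation then give the claimed $\tilde\Omega(n\sqrt\beta/\eps)$.

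\textbf{Main obstacle.} The heart of the matter — and where this must improve on the $\Omega(n\sqrt{\beta/\eps})$ construction of \cite{CCP+21}, which in effect packs only a $1/\sqrt\eps$ factor per unit of imbalance — is designing the gadget so that it \emph{simultaneously} (i) remains $\beta$-balanced, (ii) has polynomially bounded weights, (iii) admits, for each of its $\tilde\Theta(g\sqrt\beta/\eps)$ bits, a probe cut that isolates the corresponding arc up to a known offset, and (iv) keeps that offset below the isolated arc weight. Requirements (i) and (iv) pull against (iii): $\beta$-balance couples the backward weights to the forward ones, so the offset a probe sees is of order (number of crossed arcs)$\times$(backward weight), which one can only keep below the target forward weight by making the gadget small — but a small gadget cannot hold enough bits under bounded weights. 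I expect the resolution to require a multi-scale gadget whose arc weights occupy geometrically separated bands, so that a probe at a given scale sees contributions only from that band and coarser ones — which, in an augmented-indexing-style reduction where the decoder already knows the coarser bits, can be subtracted off — together with a matching family of probes, one per scale; making the $\beta$-balance case analysis and the per-scale isolation hold together is the technical core. The remaining step, turning ``each probe correct with probability $\ge 2/3$'' into ``$\Omega$(gadget bits) recovered with constant probability'' despite correlated sketch errors, is routine, since each probe only needs to be correct for its own coordinate and a Markov bound on the number of failed probes suffices.
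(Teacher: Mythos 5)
There is a genuine gap at exactly the point you flag as the ``main obstacle'': you want, per bit, a probe cut whose value isolates a single target arc up to a data-independent offset, with the $(1\pm\eps)$ answer pinning the weight to its $(1+\Theta(\eps))$-spaced grid. On a $\beta$-balanced gadget of the required size this cannot work. To hold $\Theta(\beta/\eps^2)$ bits at $\Theta(\log(1/\eps))$ bits per arc you need a gadget with $\Theta(k^2)$ encoding arcs where $k=\Theta(\sqrt\beta/\eps)$, e.g.\ a $k\times k$ bipartite block. Applying $\beta$-balance to the cut $(L,R)$ forces the total backward weight to be $\ge 1/\beta$ times the total forward weight, so backward arcs have average weight $\Omega(W/\beta)$ where $W$ is the average forward weight. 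Any probe separating one vertex of $L$ and one of $R$ from the rest crosses $\Theta(k^2)=\Theta(\beta/\eps^2)$ backward arcs, giving an offset $\Omega(W/\eps^2)$; the cut value is therefore $\Theta(W/\eps^2)$, and a $(1\pm\eps)$ answer carries additive error $\Theta(W/\eps)$ --- a factor $1/\eps^2$ larger than the grid spacing $\Theta(\eps W)$. Your multi-scale workaround (geometric bands plus augmented indexing) is left as a sketch, and the natural instantiations do not close this: forcing the off-scale arcs crossed by a probe to be negligible needs a $\mathrm{poly}(1/\eps)$ separation between bands, which caps the number of bands at $O(\log n/\log(1/\eps))$; and any two arcs probed at the same band must share an endpoint (otherwise each one's backward arc is seen at full weight by the other's probe), which caps each band at $O(k)$ arcs. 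The resulting bit budget falls short of $\tilde\Omega(n\sqrt\beta/\eps)$ by roughly a $\sqrt\beta$ factor.

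The paper resolves this with a decoder of a genuinely different kind: it abandons arc-isolation entirely. Each $1/\eps\times 1/\eps$ block of forward weights is set to $w = \eps x + \Theta(\log(1/\eps))\mathbf 1$ with $x=\sum_t z_t M_t$, where the rows $M_t$ of a matrix built from a Hadamard matrix (Lemma~\ref{lem:matrix}) are pairwise-orthogonal $\pm 1$ vectors of the form $h_A\otimes h_B$ with $\langle h_A,\mathbf 1\rangle=\langle h_B,\mathbf 1\rangle=0$. Bob decodes $z_t$ by forming
\[
\langle w,M_t\rangle=w(A,B)-w(\bar A,B)-w(A,\bar B)+w(\bar A,\bar B)
\]
from four cut queries. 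Orthogonality of $M_t$ to $\mathbf 1$ and to all other rows makes the constant offset and every other bit $z_{t'}$ cancel \emph{exactly}, so $\langle w,M_t\rangle=z_t/\eps$; meanwhile each of the four cut values is $\Theta(\log(1/\eps)/\eps^2)$, so a $(1\pm c_2\eps/\log(1/\eps))$ answer contributes only $O(c_2/\eps)$ additive error. Signal and noise are both $\Theta(1/\eps)$, and a small universal constant $c_2$ separates them. This matched scaling is precisely how the paper gains the extra $1/\sqrt\eps$ over \cite{CCP+21}. Your surrounding machinery --- chained bipartite gadgets partitioning the $n$ vertices, a one-way Index/Fano argument, and boosting per-query success probability --- matches the paper's reduction; the missing ingredient is the orthogonal linear-functional encoding and decoding inside each gadget.
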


\begin{restatable}[For-All Cut Sketch for Balanced Graphs]{theorem}{ForAllCutSketch}
\label{thm:for_all}
Let $\beta \ge 1$ and $0 < \eps < 1$. Assume $\beta/\eps^2 \le n / 2$.
Any $(1 \pm \eps)$ for-all cut sketching algorithm for $\beta$-balanced $n$-node graphs must output $\Omega(n \beta/ \eps^2)$ bits.
\end{restatable}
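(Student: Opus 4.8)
The plan is to prove the bound by an incompressibility argument, equivalently a one-way communication reduction from \textsc{Index}. It suffices to exhibit a family $\mathcal{G}=\{G_x : x\in\{0,1\}^N\}$ of $\beta$-balanced $n$-node directed graphs with $N=\Omega(n\beta/\eps^2)$ such that for every $x\neq x'$ some directed cut $(S,V\setminus S)$ has values in $G_x$ and $G_{x'}$ that differ by more than a factor $\tfrac{1+\eps}{1-\eps}>1+2\eps$. Given such a family, for each $x$ fix (on the probability-$\ge 2/3$ event that the sketch is correct, which is therefore nonempty) an output $D_x$ that $(1\pm\eps)$-approximates every cut of $G_x$; the separating cut then witnesses that $D_x$ is not a correct sketch for $G_{x'}$, so all the $D_x$ are distinct and the algorithm uses at least $\log_2|\mathcal{G}|=N$ bits. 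The real content is the construction of $\mathcal G$ together with the decoding procedure that certifies the separation; the communication wrapper and the pigeonhole step are routine.

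\textbf{The hard instances.} Each $G_x$ is a fixed symmetric \emph{background} $B$ plus a directed \emph{perturbation} $P_x$ encoding the $N$ bits of $x$. The background is a disjoint union of $\Theta(n\eps^2/\beta)$ identical gadgets, each on $\Theta(\beta/\eps^2)$ vertices, realized as a weighted bidirected bipartite graph engineered so that (i) along its ``thin'' side every cut has value $\Theta(1)$ while along its ``thick'' side every cut can carry value $\Theta(\beta)$, and (ii) a structured family of cuts inside the gadget reads off prefix/rectangular sums of the bits assigned to its edges. The perturbation $P_x$ increases the forward weight of each gadget edge by $0$ or $\delta$, where $\delta$ (relative to the background edge weights) is chosen as the \emph{largest} value for which a single $\delta$-change is still visible to a $(1\pm\eps)$ sketch on the cuts used for decoding; by (i), each gadget then hosts $\Theta(\beta/\eps^2)$ bits per vertex, hence $N=\Theta(n\beta/\eps^2)$ overall. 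One must verify $\beta$-balance cut by cut; the tight case is a cut isolating a single thin-side vertex of one gadget, and the calibration of $\delta$ makes the forward-to-backward weight ratio there --- and, by a short computation exploiting the bipartite structure, at every cut --- at most $\beta$. For $\beta=O(1)$ this specializes to the symmetric for-all hard instance of \cite{ACK+15,CKST19}; the point of the construction is that $\beta$-balance lets the asymmetric perturbation be a factor $\Theta(\beta)$ larger than in the symmetric case, which is precisely the slack that produces the extra factor $\beta$ in $N$.

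\textbf{Decoding and the main obstacle.} From a correct for-all sketch one recovers all $N$ bits as follows: for each gadget, query the structured family of cuts from (ii) and solve the resulting (triangular) system for the per-edge bits by successive differencing. The crux is the error analysis: every queried cut is known only up to relative error $\eps$, i.e., additive error proportional to that cut's value, and these errors must not overwhelm the $\delta$-sized signal after differencing. This is why the background is calibrated so that the decoding cuts have value $\Theta(\delta/\eps)$ --- the threshold beyond which a single $\delta$-change becomes invisible --- and why the decoding is scheduled in the augmented-indexing style, so that only $O(1)$ cut-errors accumulate per recovered bit. The factor $\beta$ then survives because $\beta$-balance lets $\delta$ be $\Theta(\beta)$ times larger while the decoding cuts stay at scale $\Theta(\delta/\eps)$, so a $\Theta(\beta/\eps^2)$-dimensional pattern per vertex is recoverable; by contrast a single-cut/single-edge argument as in \cite{CCP+21} can only detect isolated perturbations and yields merely the weaker $\Omega(n\sqrt{\beta/\eps})$ or $\Omega(n\beta/\eps)$. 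I expect the principal difficulty to be exhibiting a background that \emph{simultaneously} meets all three requirements --- $\beta$-balance under $\Theta(\beta)$-scale forward perturbations, decoding cuts at scale $\Theta(\delta/\eps)$, and enough density to host $\Theta(\beta/\eps^2)$ bits per vertex --- and admits an $O(1)$-error decoding schedule; once the gadget is pinned down, the remaining steps are mechanical.
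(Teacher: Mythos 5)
Your proposal takes a genuinely different route from the paper, and the route has a gap at exactly the point you flag as ``the principal difficulty.'' The paper does not recover all $N$ bits from the sketch. It reduces from the distributional Gap-Hamming problem of \cite{ACK+15} (Lemma~\ref{lem:gap_hamming}) and only asks Bob to decide, for a single random index $(i,j)$, whether the Hamming distance $\Delta(s_{i,j},t)$ is above or below a threshold. Crucially, even this single decision cannot be made by querying one cut: the natural cut $S=\{\ell_i\}\cup(R\setminus T)$ has value $\Theta(\beta/\eps^4)$ because of the $\Theta(\beta^2/\eps^4)$ backward edges of weight $1/\beta$ that $\beta$-balance forces into the construction, so a $(1\pm\eps)$ sketch has additive error $\Theta(\beta/\eps^3)$, which swamps the $\Theta(1/\eps)$ signal. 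The paper gets around this with the subset-enumeration trick of \cite{ACK+15}: Bob enumerates all $Q\subset L$ of size $|L|/2$, picks the $Q$ maximizing the approximate $w(Q,T)$, and uses the fact that the $\Theta(c/\eps)$ per-vertex bias, aggregated over $|Q|=\Theta(\beta/\eps^2)$ vertices, produces a $\Theta(c\beta/\eps^3)$ total signal that survives the $\Theta(\beta/\eps^3)$ error. This is an inherently distributional argument; it is where the for-all guarantee (all cuts preserved simultaneously) is used, and it cannot be phrased as exhibiting a single cut separating $G_x$ from $G_{x'}$ for every pair $x\ne x'$.

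Your plan instead wants a background gadget in which the decoding cut has value only $\Theta(\delta/\eps)$, so that a single $\delta$-perturbation is visible after subtracting the background. That calibration is inconsistent with the other constraints. With $\Theta(\beta/\eps^2)$ vertices on each side of a gadget and forward perturbation scale $\delta$, $\beta$-balance forces total backward weight incident to the gadget to be $\Omega(\delta/\beta)$ per forward edge, and any cut that isolates one vertex $u$ (or a pair $u,v$) carries $\Omega\bigl((\beta/\eps^2)^2\cdot\delta/\beta\bigr)=\Omega(\beta\delta/\eps^4)$ backward weight. For the additive error $\eps$ times this to be $O(\delta)$ you would need $\beta/\eps^3=O(1)$, i.e.\ the bound collapses outside the constant-$\beta$, constant-$\eps$ regime. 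This is not a calibration issue that can be fixed by a cleverer gadget; it is the quantitative reason the paper explicitly abandons bit-by-bit decoding (``with this much error, Bob cannot distinguish between the two cases'') and switches to the aggregation argument. The incompressibility wrapper is fine as a wrapper, but the family $\mathcal G$ you would need --- pairwise $(1+2\eps)$-separated on some cut, $\beta$-balanced, $2^{\Omega(n\beta/\eps^2)}$ members --- is not known to exist, and the paper's proof deliberately avoids needing it.

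If you want to salvage the plan, the fix is not a better gadget but a different decoding model: replace ``recover every bit'' with ``decide a Gap-Hamming instance for a random $(i,j)$'' and replace single-cut decoding with the maximize-over-$Q$ aggregation (Lemmas~\ref{lem:for_all_1} and \ref{lem:for_all_2}). That is the paper's proof.
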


\medskip
{\noindent \bf Query Complexity of Min-Cut in the Local Query Model.}
We study the problem of $(1 \pm \eps)$-approximating the (undirected) global minimum cut in a local query model, where we can only access the graph via degree, edge, and adjacency queries.

We close the gap on the $\eps$ dependence in the query complexity of this problem by proving a tight $\Omega(\min\{m, \frac{m}{\eps^2 k}\})$ lower bound, where $m$ is the number of edges and $k$ is the size of the minimum cut.
This improves the previous $\Omega\big(\frac{m}{k}\big)$ lower bound in \cite{eden2017lower}. Formally, we have:

\begin{restatable}[Approximating Min-Cut using Local Queries]{theorem}{MinCutLocalQuery}
\label{thm:min_cut}
Any algorithm that estimates the size of the global minimum cut of a graph $G$ up to a $(1 \pm \eps)$ factor requires $\Omega(\min\{m, \frac{m}{\eps^2k}\})$ queries in expectation in the local query model, where $m$ is the number of edges in $G$ and $k$ is the size of the minimum cut.
\end{restatable}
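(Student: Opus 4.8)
The plan is a Yao-style distributional lower bound: exhibit two distributions $\mathcal{D}_+$ and $\mathcal{D}_-$ over $n$-vertex, $m$-edge, unweighted undirected graphs whose global min-cut values are $k$ and $\Theta(k)(1+\Theta(\eps))$ respectively, so that any correct $(1\pm\eps)$-estimator distinguishes them with constant advantage, and then show that distinguishing them requires $\Omega(\min\{m,\tfrac{m}{\eps^2 k}\})$ local queries. We may assume $\eps$ is below a fixed constant, since the range $\eps=\Theta(1)$ is already covered by the $\Omega(m/k)$ bound of \cite{eden2017lower}, and the reduction from ``expected query count'' to ``worst-case bounded query count'' is the standard one: if the algorithm is correct with probability $\ge 2/3$ and makes $q$ queries in expectation, truncating after $100q$ queries (Markov) yields an algorithm that is still correct with probability bounded away from $1/2$ and always makes $O(q)$ queries, and by Yao's principle it may be taken deterministic on the mixture $\tfrac12\mathcal{D}_++\tfrac12\mathcal{D}_-$. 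So it suffices to lower bound deterministic, query-budget-$O(q)$ distinguishers.

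\textbf{The hard instances.} Fix $d:=\Theta(m/n)$ with $d\ge 6k$ (so $m=\Theta(dn)$ and the target bound is $\Omega(\min\{m,m/(\eps^2k)\})\ge\Omega(n/\eps^2)$, consistent with $k\le d/6$). Split the vertex set into halves $R$ and $\bar R$ of size $n/2$, and \emph{give the algorithm the partition $(R,\bar R)$ for free} (this only strengthens the bound and removes any ``which edges cross'' subtlety). Draw the graph by a configuration-model process: each vertex gets $d$ half-edges; a uniformly random set of $g$ of the $N:=dn/2$ half-edges incident to $R$ is declared \emph{crossing} and matched by a uniform perfect matching to a uniformly random set of $g$ crossing half-edges of $\bar R$; the remaining half-edges of $R$ (resp.\ $\bar R$) are matched uniformly among themselves. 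Set $g=k$ in $\mathcal{D}_+$ and $g=\lceil(1+10\eps)k\rceil$ in $\mathcal{D}_-$. Conditioning on the high-probability events that the realized graph is simple and that the induced graphs on $R$ and on $\bar R$ are $\Omega(1)$-edge-expanders (standard for near-random $d$-regular graphs with $d$ a large constant multiple of $k$), the global min-cut is exactly $g$: it is at most $|E(R,\bar R)|=g$ and at most $d$, and any other cut is $\Omega(d)\ge g$ by expansion of the two halves. A $(1\pm\eps)$-estimator outputs a value in $[(1-\eps)k,(1+\eps)k]$ on $\mathcal{D}_+$ and in $[(1-\eps)(1+10\eps)k,(1+\eps)(1+10\eps)k]$ on $\mathcal{D}_-$, and for $\eps<1/2$ these intervals are disjoint (the latter exceeds $(1+4\eps)k$), so thresholding the estimate at $(1+2\eps)k$ gives a distinguisher with constant advantage.

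\textbf{The information-theoretic core.} Run the ``query process'', revealing matched pairs of half-edges on demand. A degree query returns $d$ and reveals nothing about $g$. An edge query reveals the partner of a single half-edge. An adjacency query on a pair $(u,v)$ reveals at most whether $uv$ is an edge; if $u\in R$, $v\in\bar R$ this event has probability $O(g/n^2)=O(g/N)$ (each of $u$'s $\approx 2g/n$ crossing half-edges lands on a uniformly random one of $g$ crossing $\bar R$-half-edges), so no query type discovers a crossing $R$--$\bar R$ edge at rate exceeding $O(g/N)$ per query. Since the algorithm is already told $R$, the internal (random $d$-regular) structure of the two halves carries no information about $g$; by exchangeability of the uniform crossing-matching, the number of crossing edges discovered after $q$ queries is stochastically dominated by a sum of at most $q$ indicators, each marked with probability $O(g/N)$ (equivalently, up to $o(1)$ corrections from the simple/expander conditioning over a run of $q\le m/2$ queries, a $\mathrm{Hypergeometric}(N,g,O(q))$ count). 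Distinguishing $g=k$ from $g=(1+10\eps)k$ from such a count requires its expectation to be $\Omega(1/\eps^2)$ (two $\mathrm{Poisson}$-like laws with rates differing by a $(1+\Theta(\eps))$ factor have constant total variation distance only once the rate is $\Omega(1/\eps^2)$), i.e.\ $q\cdot\tfrac{g}{N}=\Omega(1/\eps^2)$, so $q=\Omega(N/(\eps^2 k))=\Omega(m/(\eps^2 k))$; and when $\eps^2 k<1$ the finite-population/without-replacement correction forces instead $q=\Omega(N)=\Omega(m)$. Together this gives $q=\Omega(\min\{m,m/(\eps^2k)\})$, contradicting the assumed budget.

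\textbf{Main obstacle.} The delicate step is the last paragraph's reduction of adaptive local querying to hypergeometric urn sampling of crossing half-edges: one must show (i) the conditioning on a simple graph with expanding halves shifts the ``freshly revealed half-edge is crossing'' probability from $g/N$ by only $o(1)$ summed over the entire $q\le m/2$-query run, and (ii) neither adjacency/degree queries nor any adaptive routing that chases previously discovered crossings can raise the discovery rate above $O(g/N)$ — which follows from exchangeability but needs a careful coupling between the query process and an i.i.d./urn process. The other technical point, less subtle but needing quantitative random-regular-graph expansion, is certifying that the global min-cut is \emph{exactly} $g$ rather than some smaller cut not of the form $(R,\bar R)$. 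The Yao/Markov wrapper and the binomial/hypergeometric distinguishing estimate are routine.
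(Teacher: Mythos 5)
Your proposal takes a genuinely different route from the paper. The paper's proof is a black-box communication-complexity reduction: it defines a variant of the $\textsc{2-SUM}(t,L,\alpha)$ problem (a direct sum of $t$ set-disjointness instances with an $\alpha$-intersection promise), imports an $\Omega(tL)$ bound for $\alpha=1$ from prior work, pads to get $\Omega(tL/\alpha)$, and then builds a deterministic graph $G_{x,y}$ from Alice's and Bob's strings whose global min-cut is \emph{exactly} $2\cdot\textsc{INT}(x,y)$ whenever $\sqrt{N}\ge 3\cdot\textsc{INT}(x,y)$; each local query (degree, edge, adjacency) is simulated with $O(1)$ bits of communication, so a $(1\pm\eps)$ min-cut estimator with $T$ queries yields a $\textsc{2-SUM}$ protocol with $O(T)$ communication. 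Your approach instead is a direct distributional (Yao) argument over random regular graphs with a planted crossing set of size $g\in\{k,\lceil(1+10\eps)k\rceil\}$, reducing query complexity to the problem of distinguishing two hypergeometric counts.

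The route is plausible but has a real gap, and it is precisely the one you flag as the ``main obstacle.'' You must show that the \emph{entire adaptive transcript} of $q$ local queries carries no more information about $g$ than a size-$O(q)$ hypergeometric sample of half-edges, so that the advantage reduces to distinguishing $\mathrm{Hypergeom}(N,k,O(q))$ from $\mathrm{Hypergeom}(N,(1+10\eps)k,O(q))$. This is not a one-line appeal to exchangeability. The difficulties: (a) negative information (revealed \emph{non}-crossing half-edges and failed adjacency probes) shifts the conditional law of $g$, so a likelihood-ratio analysis is unavoidable; (b) conditioning on simplicity and on both halves being edge-expanders correlates revelations across the run, and one must show these corrections accumulate to $o(1)$ in total variation over $q$ up to $\Theta(m)$ queries, which is a quantitatively delicate claim; (c) an adaptive algorithm that chases a discovered crossing vertex, or that issues adjacency probes targeted at high-posterior pairs, needs a genuine coupling to the urn process rather than an appeal to per-query marginals. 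None of these are addressed beyond a promissory note. By contrast, the paper's reduction sidesteps all of this: adaptivity is absorbed for free into the communication protocol, and the min-cut value of $G_{x,y}$ is a deterministic identity (proved via a $2\gamma$-edge-connectivity case analysis), so no random-graph expansion or conditioning appears.

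Two smaller points. First, your $\Omega(m)$ regime (when $\eps^2 k<1$) is justified only by ``the finite-population correction forces $q=\Omega(N)$''; a Gaussian approximation is unreliable near $q=\Theta(N)$, and you should pin down the actual TV distance between $\mathrm{Hypergeom}(N,g,q)$ and $\mathrm{Hypergeom}(N,g',q)$ when $|g-g'|$ may be as small as $1$. Second, you need $\lceil(1+10\eps)k\rceil>k$ to separate the instances at all; this forces $\eps k\gtrsim 1$ unless you instead plant $g\in\{k,k+\Delta\}$ with $\Delta=\max(1,\Theta(\eps k))$, which you should state explicitly since it changes the distinguishing calculation in the small-$\eps^2k$ regime. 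The paper avoids both issues because the $\min\{m,m/(\eps^2 k)\}$ bound falls out of the parameter choice $\textsc{2-SUM}(\eps^{-2},\eps^2 m,\max\{\eps^2 k,1\})$ with the $\max\{\cdot,1\}$ doing exactly the work you are trying to replicate by hand.
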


We also show that with a slight modification, the $\tilde{O}(\frac{m}{k \poly(\eps)})$ query complexity upper bound in~\cite{globalmincut} can be improved to $\tilde{O}\left(\frac{m}{\eps^2k}\right)$, which implies that our lower bound is tight (up to logarithmic factors).

\subsection{Our Techniques}
A common technique we use for the different problems is communication complexity games that involve the approximation parameter $\eps$. 
For example, suppose Alice has a bit string $s$ of length $(1/\eps^2)$, and she can encode $s$ into a graph $G$ such that, if she sends Bob a $(1\pm\eps)$ (for-each or for-all) cut sketch to Bob, then Bob can recover a specific bit of $s$ with high constant probability.
By communication complexity lower bounds, we know Alice must send $\Omega(1/\eps^2)$ bits to Bob, which gives a lower bound on the size of the cut sketch.

\medskip
{\noindent \bf For-Each Cut Sketch Lower Bound.} Let $k = \sqrt{\beta}/\eps$. At a high level, we partition the $n$ nodes into $n/(2k)$ sub-graphs, where each sub-graph is a $k$-by-$k$ bipartite graph with two parts $L$ and $R$. We then divide $L$ and $R$ into $\sqrt{\beta}$ disjoint clusters $|L_1| = |L_2| = \ldots = |L_{\sqrt{\beta}}| = 1/\eps$ and $|R_1| = |R_2| = \ldots = |R_{\sqrt{\beta}}| = 1/\eps$. For every cluster pair $L_i$ and $R_j$, there are a total of $1/\eps^2$ edges. Intuitively, we wish to encode a bit string $s \in \{-1, 1\}^{1/\eps^2}$ into forward edges (left to right) each with weight $\Theta(1)$, and add backward edges (right to left) each with weight ${1}/{\beta}$ so that the graph $\beta$-balanced.
If we could approximately decode this string from a for-each cut sketch, then we would get an $\Omega((n/k) \cdot (\sqrt{\beta})^2\cdot (1/\eps)^2) = \Omega(n\sqrt{\beta}/\eps)$ lower bound.

However, if we use a simple encoding method~\cite{ACK+15,CCP+21} where each bit $s_i$ is encoded into one edge $(u, v)$ (e.g., with weight $1$ or $2$) and query the edges leaving $S = \{u\} \cup (R \setminus \{v\})$, then the $(k-1)^2 = \Omega(\beta/\eps^2)$ backward edges with weight $1/\beta$ will cause the cut value to be $\Omega(1/\eps^2)$.
The $(1\pm\eps)$ cut sketch will have additive error $\Omega(1/\eps) \gg \Theta(1)$, which will obscure $s_i = \{-1, 1\}$. To address this, we instead encode $1/\eps^2$ bits of information across $1/\eps^2$ edges simultaneously. When we want to decode a specific bit $s_i$, we query the (directed) cut values between two {\em carefully designed} subsets $A \in L_i$ and $B \in R_j$. The key idea of our construction is that, although each edge in $A \times B$ is used to encode many bits of $s$, the encoding of different bits of $s$ is {\em never too correlated}: while encoding other bits does affect the total weight from $A$ to $B$, this effect is similar to adding noise which only varies the total weight from $A$ to $B$ by a small amount. 

% We remark that, unlike for undirected graphs where each sub-graph has $\Theta(1/\eps^2)$ nodes, for directed graphs the size of each sub-graph necessarily depends on $\beta$ because the optimal lower bound depends on $\beta$, and consequently, the weight of backward edges must also depend on $\beta$.

\medskip
{\noindent \bf For-All Cut Sketch Lower Bound.} Let $k = \beta/\eps^2$. At a high level, we partition the $n$ nodes into $n/(2k)$ sub-graphs, where each sub-graph is a $k$-by-$k$ bipartite graph with two parts $L$ and $R$. Let $L = \{\ell_1, \ldots, \ell_k\}$. We partition $R$ into $\beta$ disjoint clusters $|R_1| = \ldots = |R_\beta| = 1/\eps^2$.
% From this point on, our construction is similar to~\cite{ACK+15}.
We use edges from $\ell_i$ to $R_j$ to encode a bit string $s \in \{0, 1\}^{1/\eps^2}$ by setting the weight of each forward edge to $1$ or $2$, and adding a backward edge of weight ${1}/{\beta}$ to balance the graph.

We can show that the following problem requires $\Omega(1/\eps^2)$ bits of communication:
Consider $\ell_i \in L$ and a random subset $T \subset R_j$ where $|T| = \frac{|R_j|}{2}$.
Let $N(\ell_i)$ denote $\ell_i$'s neighbors $v$ such that $(\ell_i, v)$ has weight $2$, which is uniformly random if $s$ is uniformly random.
The problem is to decide whether $|N(\ell_i) \cap T| \ge \frac{1}{4\eps^2} + \frac{c}{2\eps}$ or $|N(\ell_i) \cap T| \le \frac{1}{4\eps^2} - \frac{c}{2\eps}$ for a sufficiently small constant $c > 0$.
Intuitively, the graph encodes a $(k \beta)$-fold version of this communication problem, which implies an $\Omega((n/k) \cdot k \beta \cdot (1/\eps)^2) = \Omega(n\beta/\eps^2)$ lower bound.

We need to show that Bob can distinguish between the two cases of $|N(\ell_i) \cap T|$ given a for-all cut sketch.
However, there are some challenges. The difference between the two cases is $\Theta(1/\eps)$ while the natural cut to query $S = \{\ell_i\} \cup (R \setminus T)$ has value $\Omega(\beta/\eps^4)$.
The $(1\pm\eps)$ cut sketch will have additive error $\Omega(\beta/\eps^3) \gg \Theta(1/\eps)$, which is too much.
To overcome this, note that we have not used the property that the for-all cut sketch preserves all cuts.
We make use of the following crucial observation in~\cite{ACK+15}:
In expectation, roughly half of the nodes $\ell_i \in L$ satisfy $|N(\ell_i) \cap T| \ge \frac{1}{4\eps^2} + \frac{c}{2\eps}$ because $c$ is small. If Bob enumerates all subsets $Q \subset L$ of size $\frac{|L|}{2}$, he will eventually get lucky and find a set $Q$ that contains almost all such nodes.
Since there are roughly $\frac{|L|}{2} = \frac{\beta}{2\eps^2}$ such nodes, the $(c/\eps)$ bias per node will contribute $\Omega(c \beta / \eps^3)$ in total, which is enough to be detected even under an $O(\beta/\eps^3)$ additive error.

\medskip
{\noindent \bf Query Complexity of Min-Cut in the Local Query Model.}
We prove our lower bound using communication complexity, but unlike previous work~\cite{eden2017lower}, we consider the following $\textsc{2SUM}$ problem~\cite{2sum}:
Given $2t$ length-$L$ binary strings $(x^1, x^2, \ldots, x^t)$ and $(y^1, y^2, \ldots, y^t)$, we want to approximate the value of $\sum_{i \in [t]} \textsc{DISJ}(x^i, y^i)$ up to a $\sqrt{t}$ additive error, with the promise that at least a constant fraction of the $(x^i, y^i)$ satisfy $\textsc{INT}(x^i, y^i) = \alpha$ while the remaining pairs satisfy $\textsc{INT}(x^i, y^i) = 0$ or $\alpha$.
Here $\textsc{INT}(x, y) = \sum_{i=1}^L x_i \wedge y_i$ is the number of indices where $x$ and $y$ are both $1$, and $\textsc{DISJ}(x, y)$ is the set-disjointness problem, i.e., $\textsc{DISJ}(x, y) = 1$ if $\textsc{INT}(x, y) = 0$ and $\textsc{DISJ}(x, y) = 0$ otherwise. The parameters $L, t$, and $\alpha$ will be chosen later. 

We construct our graph $G_{x,y}$ based on the vectors $x^i$ and $y^i$ in a way inspired by~\cite{eden2017lower}. We then give a careful analysis of the size of the minimum cut of $G_{x,y}$, and show that under certain conditions, the size of the minimum cut is exactly $2\sum_{i \in [t]}\textsc{INT}(x^i, y^i)$.
Consequently, a $(1 \pm \eps)$-approximation of the minimum cut yields an approximation of $\sum_{i \in [t]} \textsc{DISJ}(x^i, y^i)$ up to a $\sqrt{\eps}$ additive error, which implies the desired lower bound.

\section{Preliminaries}

Let $G = (V, E, w)$ be a weighted (directed) graph with $n$ vertices and $m$ edges, where each edge $e \in E$ has weight $w_e\geq 0$. We write $G = (V, E)$ if $G$ is unweighted and leave out $w$.
For two sets of nodes $S, T \subseteq V$, let $E(S, T) = \{(u, v) \in E: u \in S, v \in T\}$ denote the set of edges from $S$ to $T$.
Let $w(S, T) = \sum_{e \in E(S,T)} w_e$ denote the total weight of edges from $S$ to $T$.
For a node $u \in V$ and a set of nodes $S \subseteq V$, we write $w(u, S)$ for $w(\{u\}, S)$.

We write $[n]$ for $\{1, \ldots, n\}$. We use $\mathbf{1}$ to denote the all-ones vector.
For a vector $v$, we write $\norm{v}_2$ and $\norm{v}_\infty$ for the $\ell_2$ and $\ell_\infty$ norm of $x$ respectively.
For two vectors $u, v \in \R^{n}$, let $u \otimes v \in \R^{n^2}$ be the tensor product of $u$ and $v$. Given a matrix $A$, we use $A_i$ to denote the $i$-th row of $A$.

\paragraph{Directed Cut Sketches.}
We start with the definitions of $\beta$-balanced graphs, for-all and for-each cut sketches~\cite{BK96, ST11, ACK+15, CCP+21}.

We say a directed graph is balanced if all cuts have similar values in both directions.

\begin{definition}[$\beta$-Balanced Graphs]
A strongly connected directed graph $G = (V, E, w)$ is \emph{$\beta$-balanced} if, for all $\varnothing \subset S \subset V$, 
it holds that $w(S, V \setminus S) \le \beta \cdot w(V \setminus S, S)$.
\end{definition}

We say $\sk(G)$ is a for-all cut sketch if the value of all cuts can be approximately recovered from it.
Note that $\sk(G)$ is not necessarily a graph and can be an arbitrary data structure.

\begin{definition}[For-All Cut Sketch]
\label{def:for-all}
Let $0 < \eps < 1$.
We say $\mathcal{A}$ is a $(1 \pm \eps)$ for-all cut sketching algorithm if there exists a recovering algorithm $f$ such that, given a directed graph $G = (V, E, w)$ as input, $\mathcal{A}$ can output a sketch $\sk(G)$ such that, with probability at least $2/3$, for all $\varnothing \subset S \subset V$:
\[
(1-\eps) \cdot w(S, V \setminus S) \le f(S, \sk(G)) \le (1+\eps) \cdot w(S, V \setminus S).
\]
\end{definition}

Another notion of cut approximation is that of a ``for-each'' cut sketch, which requires that the value of each individual cut is preserved with high constant probability, rather than approximating the values of all cuts simultaneously.

\begin{definition}[For-Each Cut Sketch]
\label{def:for-each}
Let $0 < \eps < 1$.
We say $\mathcal{A}$ is a $(1\pm\eps)$ for-each cut sketching algorithm if there exists a recovering algorithm $f$ such that, given a directed graph $G = (V, E, w)$ as input, $\mathcal{A}$ can output a sketch $\sk(G)$ such that, for each $\varnothing \subset S \subset V$, with probability at least $2/3$,
\[
(1-\eps) \cdot w(S, V \setminus S) \le f(S, \sk(G)) \le (1+\eps) \cdot w(S, V \setminus S).
\]
\end{definition}

In Definitions~\ref{def:for-all}~and~\ref{def:for-each}, the sketching algorithm $\mathcal{A}$ and the recovering algorithm $f$ can be randomized, and the probability is over the randomness in $\mathcal{A}$ and $f$.
\section{For-Each Cut Sketch}

In this section, we prove an $\Omega(n\sqrt{\beta}/\eps)$ lower bound on the output size of $(1 \pm \eps)$ for-each cut sketching algorithms (Definition~\ref{def:for-each}).

\ForEachCutSketch*
% \begin{theorem}
% \label{thm:for_each} 
% Let $\beta \ge 1$ and $0 < \eps < 1$ with $\sqrt{\beta}/\eps \le n / 2$. Any $(1 \pm \eps)$ for-each cut sketching algorithm for $\beta$-balanced $n$-node graphs must output $\tilde{\Omega}(n \sqrt{\beta}/ \eps)$ bits.
% \end{theorem}

Our result uses the following communication complexity lower bound for a variant of the Index problem, where Alice and Bob's inputs are random.

\begin{lemma}[\cite{KNR01}]
\label{lem:index}
Suppose Alice has a uniformly random string $s \in \{-1, 1\}^n$ and Bob has a uniformly random index $i \in [n]$. If Alice sends a single (possibly randomized) message to Bob, and Bob can recover $s_{i}$ with probability at least $2/3$ (over the randomness in the input and their protocol), then Alice must send $\Omega(n)$ bits to Bob.
\end{lemma}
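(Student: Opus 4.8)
The plan is the standard information-theoretic argument for randomized one-way communication. First I would reduce to a protocol with no public randomness: by averaging over the public coins there is a fixing of them under which Bob still outputs $s_i$ correctly with probability at least $2/3$ over the remaining randomness, namely the uniform input $(s,i)$ together with any private coins of Alice and Bob; fixing public coins does not change the message length, so it suffices to lower bound the length of the (still possibly randomized) message $M = M(S)$ in this fixed protocol. Write $S = (S_1,\dots,S_n)$ for Alice's uniform string, $I$ for Bob's uniform index, taken independent of $S$ and of all protocol randomness, and $\hat S = g(M, I, R_B)$ for Bob's output, where $R_B$ is Bob's private randomness, independent of $(S,M)$. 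For $i \in [n]$ put $p_i = \Pr[\hat S = S_i \mid I = i]$; averaging the success guarantee over the uniform $I$ gives $\frac{1}{n}\sum_{i \in [n]} p_i \ge \frac{2}{3}$.

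Next I would bound $H(S_i \mid M)$ for each $i$ via Fano's inequality. Conditioned on $I = i$, the triple $S_i, M, \hat S$ forms a Markov chain $S_i - M - \hat S$, since $\hat S$ is a function of $M$ and $R_B$ and $R_B \perp (S,M)$. Hence, writing $H_b(q) = -q\log_2 q - (1-q)\log_2(1-q)$ for the binary entropy function, data processing together with Fano's inequality for the binary source $S_i$ gives $H(S_i \mid M, I = i) \le H(S_i \mid \hat S, I = i) \le H_b(1 - p_i)$, and since $I$ is independent of $(S,M)$ we have $H(S_i \mid M, I = i) = H(S_i \mid M)$. Thus $H(S_i \mid M) \le H_b(1 - p_i)$ for every $i$.

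Then I would combine the coordinates. Subadditivity of entropy gives $H(S \mid M) \le \sum_{i \in [n]} H(S_i \mid M) \le \sum_{i \in [n]} H_b(1 - p_i)$, and concavity of $H_b$ with $\frac{1}{n}\sum_i (1 - p_i) \le \frac{1}{3} < \frac{1}{2}$ yields $\sum_{i \in [n]} H_b(1 - p_i) \le n\, H_b\!\left(\tfrac{1}{3}\right)$. Since $H(S) = n$, it follows that $I(S;M) = n - H(S \mid M) \ge n\bigl(1 - H_b(\tfrac{1}{3})\bigr) = \Omega(n)$, because $H_b(1/3)$ is a fixed constant strictly below $1$. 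Finally $I(S;M) \le H(M)$, and for a prefix-free message set $H(M)$ is at most the expected (hence worst-case) number of bits sent, so Alice must send $\Omega(n)$ bits, as claimed. (An essentially equivalent route avoids Fano: conditioned on $M = m$, Bob's optimal guess of $S_i$ is correct with probability $\tfrac{1}{2} + \tfrac{1}{2}\bigl|\E[S_i \mid M = m]\bigr|$, so the success hypothesis forces $\E_m\bigl[\sum_i |\E[S_i \mid M = m]|\bigr] \ge n/3$; bounding the per-message entropy deficit of $S$ from below by a constant times $\|b_m\|_2^2 \ge \|b_m\|_1^2/n$, where $b_m$ is the vector of coordinate biases, and applying Cauchy--Schwarz and Jensen gives the same $\Omega(n)$ bound.)

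The inequality chain — Fano, then subadditivity, then concavity — is routine; I expect the only delicate points to be the bookkeeping of the randomness: justifying the public-coin fixing, keeping $I$ independent of $(S,M)$ so that conditioning on $I = i$ leaves the law of $S_i$ unchanged, and identifying the correct Markov chain for the data processing step, together with using the average-over-$I$ form of the $2/3$ guarantee rather than a worst-case form. This is the main (and fairly mild) obstacle.
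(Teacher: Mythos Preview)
Your proof is correct and is the standard information-theoretic argument for the one-way randomized complexity of \textsc{Index}. Note, however, that the paper does not supply its own proof of this lemma: it is stated with a citation to \cite{KNR01} and used as a black box, so there is no ``paper's proof'' to compare against. Your write-up is essentially the argument one finds in the cited literature (fix public coins by averaging, apply Fano coordinatewise, then subadditivity and concavity of $H_b$ to get $I(S;M) \ge (1 - H_b(1/3))\,n$), and the bookkeeping you flag --- independence of $I$ from $(S,M)$ and the Markov chain $S_i \text{--} M \text{--} \hat S$ --- is handled correctly.
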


Our lower-bound construction relies on the following technical lemma.

\begin{lemma}
\label{lem:matrix}
    For any integer $k \ge 1$, there exists a matrix $M \in \{-1, 1\}^{(2^k - 1)^2 \times 2^{2k}}$ such that:
    \begin{enumerate}
        \item $\inner{M_t, \mathbf{1}} = 0$ for all $t \in [(2^k - 1)^2]$. 
        %has exact $2^k$ coordinates to be $1$ and $2^k$ coordinates to be $-1$.
        \item $\inner{M_t, M_{t'}} = 0$ for all $1 \le t < t' \le (2^k - 1)^2$.
        \item For all $t \in [(2^k - 1)^2]$, the t-th row of $M$ can be written as $M_t = u \otimes v$ where $u, v \in \{-1, 1\}^{2^k}$ and $\inner{u, \mathbf{1}} = \inner{v, \mathbf{1}} = 0$.
    \end{enumerate}
\end{lemma}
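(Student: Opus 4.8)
The plan is to build $M$ by tensoring a Hadamard-type construction with itself. Fix $k \ge 1$ and let $H \in \{-1,1\}^{2^k \times 2^k}$ be a Sylvester--Hadamard matrix, so $HH^\top = 2^k I$; in particular, its rows $h_0, h_1, \dots, h_{2^k-1}$ are pairwise orthogonal, and exactly one of them (say $h_0$, the all-ones row) satisfies $\inner{h_0, \mathbf{1}} \neq 0$, while $\inner{h_a, \mathbf{1}} = 0$ for every $a \in \{1, \dots, 2^k-1\}$ (since $h_a \perp h_0 = \mathbf{1}$). Now index the rows of $M$ by pairs $(a,b)$ with $a, b \in \{1, \dots, 2^k-1\}$ — there are exactly $(2^k-1)^2$ such pairs — and set $M_{(a,b)} = h_a \otimes h_b \in \{-1,1\}^{2^{2k}}$. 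This immediately makes each row a tensor product $u \otimes v$ of two $\pm 1$ vectors in $\{-1,1\}^{2^k}$, each orthogonal to $\mathbf{1}$, giving property~3.

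For property~1, use the identity $\inner{u \otimes v, \mathbf{1}_{2^{2k}}} = \inner{u, \mathbf{1}_{2^k}} \cdot \inner{v, \mathbf{1}_{2^k}}$, which is $0$ because each factor vanishes (as $a, b \ge 1$). For property~2, take two distinct rows $(a,b) \ne (a',b')$ and use the mixed-product property of the tensor product:
\[
\inner{h_a \otimes h_b,\; h_{a'} \otimes h_{b'}} = \inner{h_a, h_{a'}} \cdot \inner{h_b, h_{b'}}.
\]
Since $(a,b) \ne (a',b')$, at least one of $a \ne a'$ or $b \ne b'$ holds, so at least one factor on the right is $0$ (by orthogonality of distinct Hadamard rows), making the whole product $0$. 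That covers all three properties.

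The verification is essentially routine once the construction is in hand; the only thing to be a little careful about is the bookkeeping of the tensor-product inner-product identities and making sure the index set $\{1,\dots,2^k-1\}^2$ (excluding the all-ones row in each factor) has exactly the claimed size $(2^k-1)^2$ and that excluding that row is precisely what enforces orthogonality to $\mathbf{1}$. I do not anticipate a genuine obstacle here — the mild subtlety is just confirming that the Sylvester construction does give a Hadamard matrix with $\mathbf{1}$ as a row for every $k$ (true by induction: $H_1 = \begin{psmallmatrix}1 & 1\\ 1 & -1\end{psmallmatrix}$ and $H_{k+1} = H_1 \otimes H_k$), and that no stronger independence among the rows of $M$ is required than pairwise orthogonality, which is all the lemma asks for.
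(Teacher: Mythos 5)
Your construction is exactly the paper's: take the Sylvester--Hadamard matrix, form the tensor products $h_a \otimes h_b$ of rows with both indices excluding the all-ones row, and verify all three properties via the mixed-product identity for tensor inner products. The argument is correct and matches the paper's proof up to a shift in indexing convention.
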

\begin{proof}
Our construction is based on the Hadamard matrix $H = H_{2^k} \in \{-1, 1\}^{2^k \times 2^k}$. Recall that the first row of $H$ is the all-ones vector and that $\inner{H_i, H_j} = 0$ for all $i \ne j$.
For every $2 \le i, j \le 2^k$, we add $H_i \otimes H_j \in \{-1, 1\}^{2^{2k}}$ as a row of $M$, so $M$ has $(2^k - 1)^2$ rows.

Condition~(3) holds because $\inner{H_i, \mathbf{1}} = \inner{H_j, \mathbf{1}} = 0$ for all $i, j \ge 2$.
For Conditions~(1) and~(2), note that for any vectors $u, v, w$, and $z$, we have $\inner{u\otimes v, w \otimes z} = \inner{u, w} \inner{v, z}$.
Using this fact, Condition~(1) holds because $\inner{M_t, \mathbf{1}} = \inner{H_i \otimes H_j, \mathbf{1} \otimes \mathbf{1}} = \inner{H_i, \mathbf{1}} \inner{H_j, \mathbf{1}} = 0$,
and Condition~(2) holds because $(i,j) \neq (i',j')$ and thus $\inner{M_t, M_{t'}} = \inner{H_i \otimes H_j, H_{i'} \otimes H_{j'}} = \inner{H_{i}, H_{i'}}\inner{H_{j}, H_{j'}} = 0$.
\end{proof}

We first prove a lower bound for the special case $n = \Theta(\sqrt{\beta}/\eps)$.
Our proof for this special case introduces important building blocks for proving the general case $n = \Omega(\sqrt{\beta}/\eps)$.

\begin{lemma}
\label{lem:for_each_unit}
    Suppose $n = \Theta(\sqrt{\beta}/\eps)$.
    Any $(1 \pm \eps)$ for-each cut sketching algorithm for $\beta$-balanced $n$-node graphs must output $\tilde{\Omega}(n \sqrt{\beta}/ \eps) = \tilde{\Omega}(\beta/ \eps^2)$ bits.
\end{lemma}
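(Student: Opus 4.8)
The plan is to establish the special case $n = \Theta(\sqrt\beta/\eps)$ by a direct reduction from the random-input Index problem of \Cref{lem:index}, with $N := (\sqrt\beta)^2 \cdot (1/\eps)^2 = \beta/\eps^2$ indices, encoding Alice's string into a single $k$-by-$k$ bipartite graph where $k = \Theta(\sqrt\beta/\eps)$. Concretely, I would set $k = 2^{\ell} - 1$ for an appropriate $\ell$ with $2^{\ell} \approx 1/\eps$ (rounding affects only constants), partition $L$ and $R$ each into $\sqrt\beta$ clusters of size $\approx 1/\eps$, and for each ordered cluster pair $(L_a, R_b)$ use the matrix $M$ from \Cref{lem:matrix} to spread $\le (2^{\ell}-1)^2$ bits of Alice's string across the $2^{2\ell} \approx 1/\eps^2$ forward edges of $L_a \times R_b$. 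If $s^{(a,b)} \in \{-1,1\}^{(k)^2}$ is the block of bits for pair $(a,b)$, the forward edge weights on $L_a \times R_b$ are set to $w_0 \cdot \mathbf 1 + \delta \cdot \sum_t s^{(a,b)}_t M_t$ for suitable base weight $w_0 = \Theta(1)$ and perturbation scale $\delta = \Theta(\eps)$ chosen so all weights stay in $[\tfrac12 w_0, \tfrac32 w_0]$ (this uses that $\sum_t s_t M_t$ has $\ell_\infty$ norm $O(k^2) = O(1/\eps^2)$, so we need $\delta = O(\eps^2)$ — I will need to double-check the right scaling here, see below). Every forward edge additionally gets a backward edge of weight $w_0/\beta$ so the graph is $\beta$-balanced (each directed cut crossing has forward weight $\Theta(w_0)$ per edge and backward weight $w_0/\beta$ per edge, and the balance condition holds cut by cut since the bipartite structure forces any cut's forward/backward edge sets to be in bijection up to endpoints).

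The decoding step is the heart of the argument. Bob, holding a random index $t$ lying in block $(a,b)$ at position $t'$ within the block, writes $M_{t'} = u \otimes v$ with $u, v \in \{-1,1\}^{2^\ell}$ orthogonal to $\mathbf 1$ (Condition (3) of \Cref{lem:matrix}); restricting to the coordinates actually present, let $A = \{$vertices of $L_a$ where $u = +1\}$ and $B = \{$vertices of $R_b$ where $v = +1\}$, and also $A' = L_a \setminus A$, $B' = R_b \setminus B$. Bob queries a small number of cuts of the form $S = A \cup A' \cup (R \setminus (\text{something}))$ designed so that the for-each estimate returns (a close approximation to) $w(A, B) + w(A', B') - w(A, B') - w(A', B)$ up to terms that don't depend on $s^{(a,b)}_{t'}$. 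Expanding via the edge-weight formula, the base weight $w_0$ contributes $\langle \mathbf 1_{A\times B} + \mathbf 1_{A'\times B'} - \mathbf 1_{A\times B'} - \mathbf 1_{A'\times B}, \mathbf 1\rangle = \langle u\otimes v, \mathbf 1\rangle = 0$ by Condition (1); the perturbation contributes $\delta \sum_{t''} s^{(a,b)}_{t''} \langle M_{t''}, M_{t'}\rangle = \delta \|M_{t'}\|_2^2 \, s^{(a,b)}_{t'} = \delta \cdot 2^{2\ell} \cdot s^{(a,b)}_{t'}$ by orthogonality (Condition (2)) — so the signed combination isolates $s^{(a,b)}_{t'}$ with a clean amplitude $\Theta(\delta/\eps^2)$, which I want to be $\Omega(1)$, forcing $\delta = \Theta(\eps^2)$, consistent with the $\ell_\infty$ bound above. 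The backward edges of weight $w_0/\beta$ contribute a known constant to each queried cut value (they do not depend on $s$ at all, only on the fixed topology), so Bob subtracts them off. The remaining issue is the additive error of the sketch: each queried cut has value $\Theta(w_0 \cdot |A\times B|) + \Theta((w_0/\beta)\cdot(\text{number of backward edges})) = \Theta(1/\eps^2) + \Theta((1/\beta)\cdot \beta/\eps^2 \cdot k^2/\ldots)$; I will choose the cuts so that the dominant term is $O(1/\eps^2)$, hence the $(1\pm\eps)$ sketch error is $O(1/\eps)$, and the signed combination of $O(1)$ cut estimates has error $O(1/\eps)$ — which unfortunately swamps the $\Theta(1)$ signal.

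That last tension is exactly the main obstacle, and it is resolved by the key idea flagged in the introduction: rather than querying cuts that each have value $\Theta(1/\eps^2)$, Bob should query cuts whose value is already $\Theta(1)$-scale by choosing $S$ to cut out only the tiny sets $A, B$ (not all of $R$), so that $|A \times B| = \Theta(1/\eps^2)$ but the signed combination still telescopes. Actually the cleaner route — and the one I would pursue — is: the total forward weight from $L_a$ to $R_b$ is $w_0 \cdot 2^{2\ell} = \Theta(1/\eps^2)$ exactly (since $\sum_t s_t M_t$ is orthogonal to $\mathbf 1$, the perturbations cancel in aggregate), so a single cut $S = L_a \cup (R \setminus R_b)$ has a value Bob can predict up to the perturbation-free part; more refined sub-cuts within $L_a \times R_b$ recover the bilinear form $\langle \mathbf 1_A \otimes \mathbf 1_B, \text{weights}\rangle$, and $\langle \mathbf 1_A, \mathbf 1_B\rangle$-type cuts all have value $\Theta(1/\eps^2)$, so the $(1\pm\eps)$ relative error is $\Theta(1/\eps)$ — still too big for a single bit. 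The genuine fix, which I believe the authors intend and which I will implement, is to amplify: instead of asking Bob to recover one bit from one block, the reduction should be set up so that the entire block $s^{(a,b)}$ (all $\Theta(1/\eps^2)$ bits) is recovered from cut queries whose \emph{collective} signal is $\Theta(1/\eps^2)$ while the total sketch error across all needed queries remains $O(1/\eps) \cdot \poly\log = o(1/\eps^2)$; by a counting/averaging argument over which bits can be corrupted, Bob recovers a $0.99$-fraction of the block correctly, which by \Cref{lem:index} (applied to a random index) still forces $\Omega(1/\eps^2)$ communication per block, hence $\Omega(\beta/\eps^2)$ total. Making this averaging rigorous — quantifying how an $O(1/\eps)$-additive-error estimate of a bilinear form with $\Theta(1/\eps^2)$ "energy" yields mostly-correct sign recovery, using the orthogonality structure of $M$ to show errors are spread out rather than concentrated — is the step I expect to require the most care, and it is where Conditions (1)–(3) of \Cref{lem:matrix} do the real work.
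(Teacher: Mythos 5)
You have the right framework — the reduction from the random-input Index problem of \Cref{lem:index}, the partition of $L,R$ into $\sqrt\beta$ clusters of size $1/\eps$, encoding a block of bits as a signed combination of the Hadamard-tensor rows $M_t$ of \Cref{lem:matrix} superimposed on a flat base weight, balancing with reverse edges of weight $1/\beta$, and recovering a single $\langle w, M_t\rangle = z_t\|M_t\|_2^2$ by combining four cut queries of the form $S = A \cup (R\setminus B)$ where $M_t = \mathbf{1}_A \otimes \mathbf{1}_B$ up to signs. That matches the paper's structure closely.

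The genuine gap is in the choice of perturbation scale $\delta$, and your fallback to an amplification argument to compensate. You reason that $\|\sum_t s_t M_t\|_\infty = O(1/\eps^2)$ forces $\delta = \Theta(\eps^2)$ so the weights stay positive; this then gives per-bit signal $\Theta(\delta/\eps^2) = \Theta(1)$ against per-query additive sketch error $\Theta(1/\eps)$, which doesn't close. But the $O(1/\eps^2)$ bound on the $\ell_\infty$ norm is only the \emph{worst case}. The crucial point — which your original instinct ``$\delta = \Theta(\eps)$'' already had right — is that in \Cref{lem:index} Alice's string is \emph{uniformly random}, so each coordinate of $x = \sum_t z_t M_t$ is a sum of $\approx 1/\eps^2$ i.i.d.\ $\pm 1$'s, and by Chernoff plus a union bound $\|x\|_\infty = O(\ln(1/\eps)/\eps)$ with probability $0.99$. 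The paper therefore sets $w = \eps x + 2c_1\ln(1/\eps)\mathbf{1}$ (and, in the $0.01$-probability event that the bound fails, just writes $w = 2c_1\ln(1/\eps)\mathbf{1}$, sacrificing the block). Now each forward edge has weight $\Theta(\log(1/\eps))$, the graph is $O(\beta\log(1/\eps))$-balanced, the queried cuts have value $\Theta(\log(1/\eps)/\eps^2)$, and using a sketch with parameter $\eps' = c_2\eps/\ln(1/\eps)$ the additive error per cut is $O(c_2/\eps)$; meanwhile the signal is $\langle w, M_t\rangle = \eps z_t \|M_t\|_2^2 = z_t/\eps$. For small enough $c_2$ the signal dominates, giving single-bit recovery directly, and the polylog loss in $\beta'$ and $\eps'$ is exactly what the $\tilde\Omega$ in the statement absorbs.

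Your proposed alternative — keep $\delta = \eps^2$ and argue that Bob can recover a $0.99$-fraction of the $\approx 1/\eps^2$ bits in a block via averaging — does not patch the hole. With that scaling, each individual estimate of $\langle w, M_t\rangle$ has $\Theta(1)$ signal but $\Theta(1/\eps)$ error, so every bit's estimate is individually swamped; the errors across different $t$ come from cuts of comparable magnitude and there is no cancellation mechanism (orthogonality of the $M_t$ acts on the \emph{weights}, not on the sketch's multiplicative error terms). In short, the fix you need is not amplification but concentration over the random string, which is what makes the simple single-bit reduction go through.
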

    At a high level, we reduce the Index problem (Lemma~\ref{lem:index}) to the for-each cut sketching problem.
    Given Alice's string $s$, we construct a graph $G$ to encode $s$, such that Bob can recover any single bit in $s$ by querying $O(1)$ cut values of $G$.
    Our lower bound (Lemma~\ref{lem:for_each_unit}) then follows from the communication complexity lower bound of the Index problem (Lemma~\ref{lem:index}), because Alice can run a for-each cut sketching algorithm and send the cut sketch to Bob, and Bob can successfully recover the $O(1)$ cut values with high constant probability.

\begin{proof}[Proof of Lemma~\ref{lem:for_each_unit}]
    We reduce from the Index problem.
    Let $s \in \{-1, 1\}^{\beta(\frac{1}{\eps}-1)^2}$ denote Alice's random string.

    \medskip
    {\noindent \bf Construction of $G$.}
    We construct a directed complete bipartite graph $G$ to encode $s$.
    Let $L$ and $R$ denote the left and right nodes of $G$, where $|L| = |R| = \sqrt{\beta}/\eps$.
    We partition $L$ into $\sqrt{\beta}$ disjoint blocks $L_1, \ldots, L_{\sqrt{\beta}}$ of equal size, and similarly partition $R$ into $R_1, \ldots, R_{\sqrt{\beta}}$.
    We divide $s$ into $\beta$ disjoint strings $s_{i,j} \in \{-1, 1\}^{(\frac{1}{\eps}-1)^2}$ of the same length.
    We will encode $s_{i,j}$ using the edges from $L_i$ to $R_j$.
    Note that the encoding of  each $s_{i,j}$ is independent since $E(L_i, R_j) \cap E(L_{i'}, R_{j'}) = \varnothing$ for $(i, j) \neq (i', j')$.
    
    We fix $i$ and $j$ and focus on the encoding of $s_{i,j}$.
    Note that $|L_i| = |R_j| = 1/\eps$.
    We refer to the edges from $L_i$ to $R_j$ as {\em forward} edges and the edges from $R_j$ to $L_i$ as {\em backward} edges.
    Let $w \in \mathbb{R}^{1/\eps^2}$ denote the weights of the forward edges, which we will choose soon.
    Every backward edge has weight $1/\beta$.
    
    Let $z = s_{i,j} \in \{-1, 1\}^{(\frac{1}{\eps}-1)^2}$.
    Assume w.l.o.g. that $1/\eps = 2^k$ for some integer $k$.
    Consider the vector $x = \sum_{t=1}^{(\frac{1}{\eps}-1)^2} z_t M_t \in \R^{1/\eps^2}$ where $M$ is the matrix in Lemma~\ref{lem:matrix} with $2^k = 1/\eps$.
    Because $z_t \in \{-1, 1\}$ is uniformly random, each coordinate of $x$ is a sum of $O(1/\eps^2)$ i.i.d. random variables of value $\pm 1$.
    By the Chernoff bound and the union bound, we know that with probability at least $99/100$, $\norm{x}_{\infty} \le c_1 \ln(1/\eps)/\eps$ for some constant $c_1 > 0$. If this happens, we set $w = \eps x + 2 c_1 \ln(1/\eps) \mathbf{1}$, so that each entry of $w$ is between $c_1 \ln(1/\eps)$ and $3 c_1 \ln(1/\eps)$. Otherwise, we set $w = 2 c_1 \ln(1/\eps) \mathbf{1}$ to indicate that the encoding failed.

    We first verify that $G$ is $O(\beta \log(1/\eps))$-balanced.
    This is because every edge has a reverse edge with similar weight:
    For every $u \in L$ and $v \in R$, the edge $(u, v)$ has weight $\Theta(\log(1/\eps))$, while the edge $(v, u)$ has weight $1/\beta$.

    We will show that given a $(1 \pm \frac{c_2 \eps}{\ln(1/\eps)})$ cut sketch for some constant $c_2 > 0$, Bob can recover a specific bit of $z$ using $4$ cut queries.
    By Lemma~\ref{lem:index}, this implies an $\Omega(\beta/\eps^2) = \tilde{\Omega}(\beta'/\eps'^2)$ lower bound for cut sketching algorithms for $\beta' = O(\beta \log(1/\eps))$ and $\eps' = c_2\eps / \ln(1/\eps)$.

    \medskip
    {\noindent \bf Recovering a bit in $s$ from a for-each cut sketch of $G$.}
    Suppose Bob wants to recover a specific bit of $s$, which belongs to the substring $z = s_{i,j}$ and has an index $t$ in $z$.
    We assume that $z$ is successfully encoded by the subgraph between $L_i$ and $R_j$.
    
    For simplicity, we index the nodes in $L_i$ as $1, \ldots, (1/\eps)$ and similarly for $R_j$.
    We index the forward edges $(u, v)$ in alphabetical order, first by $u \in L_i$ and then by $v \in R_j$.
    Under this notation, $\inner{w, \mathbf{1}_A \otimes \mathbf{1}_B}$ gives the total weight $w(A, B)$ of forward edges from $A$ to $B$, where $\mathbf{1}_A, \mathbf{1}_B \in \{0,1\}^{1/\eps}$ are the indicator vectors of $A \subset L_i$ and $B \subset R_j$.
    
    The crucial observation is that, given a cut sketch of $G$, Bob can approximate $\inner{w, M_t}$ using $4$ cut queries. By Lemma~\ref{lem:matrix}, $M_t = h_A \otimes h_B$ for some $h_A, h_B \in \{-1, 1\}^{1/\eps}$.
    Let $A \subset L_i$ be the set of nodes $u \in L_i$ with $h_A(u) = 1$.
    Let $B \subset R_i$ be the set of nodes $v \in R_j$ with $h_B(v) = 1$.
    Let $\bar{A} = L_i \setminus A$ and $\bar{B} = R_j \setminus B$.
     \begin{align*}
     \inner{w, M_t}
     &= \inner{w, h_A \otimes h_B}
     = \inner{w, (\mathbf{1}_A - \mathbf{1}_{\bar A}) \otimes (\mathbf{1}_B - \mathbf{1}_{\bar B})} \\
     &= w(A, B) - w(\Bar{A}, B) - w(A, \Bar{B}) + w(\Bar{A}, \Bar{B}) \;.
     \end{align*}

    To approximate the value of $w(A, B)$ (and similarly $w(\Bar{A}, B)$, $w(A, \Bar{B})$, $w(\Bar{A}, \Bar{B})$), Bob can query $w(S, V \setminus S)$ for $S = A \cup(R \setminus B)$.
    Consider the edges from $S$ to $(V \setminus S)$: the forward edges are from $A$ to $B$, each with weight $\Theta(\log(1/\eps))$; and the backward edges are from $(R \setminus B)$ to $(L \setminus A)$, each with weight $1/\beta$.
    See Figure~\ref{fig:for_Each} as an example.
    
    \begin{figure}[ht]
        \centering
        \includegraphics[width=0.35\linewidth]{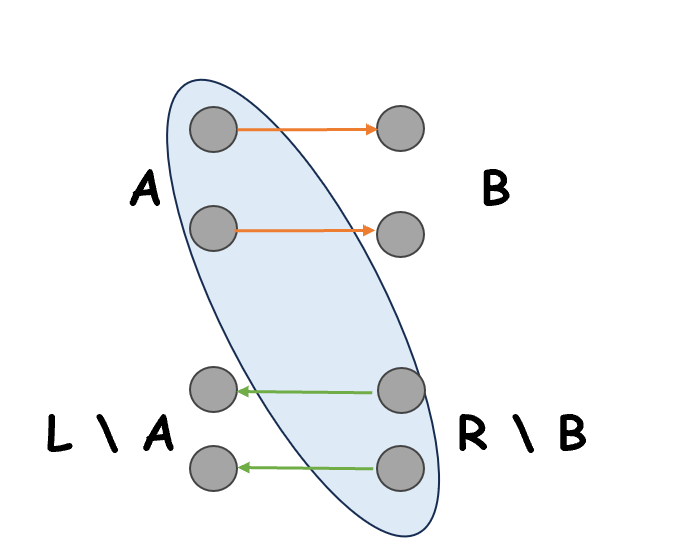}
        \caption{For $S = A \cup(R \setminus B)$, the (directed) edges from $S$ to $(V \setminus S)$ consist of the following: the forward edges from $A$ to $B$, each with weight $\Theta(\log(1/\eps))$, and the backward edges from $(R \setminus B)$ to $(L \setminus A)$, each with weight $1/\beta$.}
        \label{fig:for_Each}
    \end{figure}
    
    By Lemma~\ref{lem:matrix}, $\inner{h_A, \mathbf{1}} = \inner{h_B, \mathbf{1}} = 0$, so $|A| = |B| = \frac{|L_i|}{2} = \frac{|R_j|}{2} = \frac{1}{2\eps}$.
    The total weight of the forward edges is $\Theta(\log(1/\eps) / \eps^2)$, and the total weight of the backward edges is $\bigl(\frac{\sqrt{\beta}}{\eps}-\frac{1}{2\eps}\bigr)^2\frac{1}{\beta} = \Theta(1/\eps^2)$, so the cut value $w(S, V \setminus S)$ is $\Theta(\log(1/\eps)/\eps^2)$.
    Given a $(1 \pm \frac{c_2 \eps}{\ln(1/\eps)})$ for-each cut sketch, Bob can obtain a $(1 \pm \frac{c_2\eps}{\log(1/\eps)})$ multiplicative approximation of $w(S, V \setminus S)$, which has $O(c_2/\eps)$ additive error.
    After subtracting the total weight of backward edges, which is fixed, Bob has an estimate of $w(A, B)$ with $O(c_2/\eps)$ additive error.
    Consequently, Bob can approximate $\inner{w, M_t}$ with $O(c_2/\eps)$ additive error using $4$ cut queries.

    Now consider $\inner{w, M_t}$. By Lemma~\ref{lem:matrix}, $\inner{M_t, \mathbf{1}} = 0$ and the rows of $M$ are orthogonal,
     \begin{align*}
     \inner{w, M_t} = \inner{\eps x, M_t} &= \eps \inner{\sum_{t'} z_{t'} M_{t'}, M_t}
    %= \eps \left(\sum_{t'} z_{t'} \inner{M_t, M_{t'}}\right)
     = \eps z_t \norm{M_t}_2^2 = \frac{z_t}{\eps} \;.    
     \end{align*}
     %\[\]
    We can see that, for a sufficiently small universal constant $c_2$, Bob can distinguish whether $z_t = 1$ or $z_t = -1$ based on an $O(c_2/\eps)$ additive approximation of $\inner{w, M_t}$.
    
    Bob's success probability is at least $0.95$, because the encoding of $z$ fails with probability at most $0.01$, and each of the $4$ cut queries fails with probability at most $0.01$.~\footnote{The success probability of a cut query given a for-each cut sketch (Definition~\ref{def:for-each}) can be boosted from $2/3$ to $99/100$, e.g., by running the sketching and recovering algorithms $O(1)$ times and taking the median. This increases the length of Alice's message by a constant factor, which does not affect our asymptotic lower bound.}
\end{proof}

We next consider the case with general values of $n, \beta$, and $\eps$, and prove Theorem~\ref{thm:for_each}.

\begin{proof}[Proof of Theorem~\ref{thm:for_each}]
    Let $k = \sqrt{\beta} / \eps$. We assume w.l.o.g. that $k$ is an integer, $n$ is a multiple of $k$, and $(1/\eps)$ is a power of $2$. Suppose Alice has a random string $s \in \{-1, 1\}^{\Omega(nk)}$. We will show that $s$ can be encoded into a graph $G$ such that
    
    {\em (i)} $G$ has $n$ nodes and is $O(\beta \log(1/\eps))$-balanced, and
    
    {\em (ii)} Given a $(1 \pm \frac{c_2\eps} {\ln(1/\eps)})$ for-each cut sketch of $G$ and an index $q$, where $c_2 > 0$ is a sufficiently small universal constant, Bob can recover $s_q$ with probability at least $2/3$.
    
    Consequently, by Lemma~\ref{lem:index}, any for-each cut sketching algorithm must output $\Omega(nk) = \Omega(n \sqrt{\beta}/\eps) \allowbreak = \tilde{\Omega}(n \sqrt{\beta'}/\eps')$ bits for $\beta' = O(\beta \log(1/\eps))$ and $\eps' = c_2 \eps / \ln(1/\eps)$.

    We first describe the construction of $G$.
    We partition the $n$ nodes into $\ell = n / k \ge 2$ disjoint sets $V_1, \ldots, V_\ell$, each containing $k$ nodes.
    Let $s$ be Alice's random string with length $\beta (\frac{1}{\eps} - 1)^2 (\ell-1) = \Omega(k^2 \ell)= \Omega(nk)$.
    We partition $s$ into $(\ell - 1)$ strings $(s_i)_{i=1}^{\ell - 1}$, with $k^2$ bits in each substring.
    We then follow the same procedure as in Lemma~\ref{lem:for_each_unit} to encode $s_i$ into a complete bipartite graph between $V_i$ and $V_{i+1}$. Notice that we have $|s_i| = \beta (\frac{1}{\eps} - 1)^2 $ and $|V_i| = |V_{i+1}| = \sqrt{\beta}/\eps$, which is the same setting as in Lemma~\ref{lem:for_each_unit}. 

    We can verify that $G$ is $O(\beta\log(1/\eps))$-balanced.
    This is because every edge $e$ has a reverse edge whose weight is at most $O(\beta\log(1/\eps))$ times the weight of $e$.
    For every $u \in V_i$ and $v \in V_{i+1}$, the edge $(u, v)$ has weight $\Theta(\log(1/\eps))$, while the edge $(v, u)$ has weight $1/\beta$.

    We next show that Bob can recover the $q$-th bit of $s$. Suppose Bob's index $q$ belongs to the substring $s_i$ which is encoded by the subgraph between $V_{i}$ and $V_{i+1}$.
    Similar to the proof of Lemma~\ref{lem:for_each_unit}, Bob only needs to approximate $w(A, B)$ for $4$ pairs of $(A, B)$ with $O(1/\eps)$ additive error, where $A \subset V_i$, $B \subset V_{i+1}$, and $|A| = |B| = \frac{1}{2\eps}$. To achieve this, Bob can query the cut value $w(S, V \setminus S)$ for $S = A \cup \left(V_{i + 1} \setminus B\right) \bigcup_{j = i + 2}^{\ell} V_j$. 
    The edges from $S$ to $(V \setminus S)$ are:
\begin{itemize}[leftmargin=*]
    \item $\frac{1}{4\eps^2}$ forward edges from $A$ to $B$, each with weight $\Theta(\log(1/\eps))$.
    \item $\bigl(\frac{\sqrt{\beta}}{\eps}-\frac{1}{2\eps}\bigr)^2$ backward edges from $(V_{i+1} \setminus B)$ to $(V_{i} \setminus A)$, each with weight $\frac{1}{\beta}$.
    \item $\frac{\sqrt{\beta}}{2\eps^2}$ backward edges from $A$ to $V_{i - 1}$ when $i \ge 2$, each with weight $\frac{1}{\beta}$.
\end{itemize}

The cut value $w(S, V \setminus S)$ is $\Theta(\log(1/\eps)/\eps^2)$.
Consequently, given a $(1 \pm \frac{c_2\eps} {\ln(1/\eps)})$ for-each cut sketch, after subtracting the fixed weight of the backward edges, Bob can approximate $w(A, B)$ with $O(c_2/\eps)$ additive error.
Similar to the proof of Lemma~\ref{lem:for_each_unit}, for sufficiently small constant $c_2 > 0$, repeating this process for $4$ different pairs of $(A, B)$ will allow Bob to recover $s_q \in \{-1, 1\}$.
\end{proof}

\section{For-All Cut Sketch}

In this section, we prove an $\Omega(n\beta/\eps^2)$ lower bound on the output size of $(1 \pm \eps)$ for-all cut
sketching algorithms (Definition~\ref{def:for-all}).

\ForAllCutSketch*
% \begin{theorem}
% \label{thm:for_all} 
% Let $\beta \ge 1$ and $0 < \eps < 1$ with $\beta/\eps^2 \le n / 2$.
% Any $(1 \pm \eps)$ for-all cut sketching algorithm for $\beta$-balanced $n$-node graphs must output $\Omega(n \beta/ \eps^2)$ bits.
% \end{theorem}

Our proof is inspired by \cite{ACK+15} and uses the following communication complexity lower bound for an $n$-fold version of the Gap-Hamming problem.

\begin{lemma}[\cite{ACK+15}]
\label{lem:gap_hamming}
Consider the following distributional communication problem:
Alice has $h$ strings $s_1,\ldots,s_{h} \in \{0, 1\}^{1/\eps^2}$
of Hamming weight $\frac{1}{2\eps^2}$.
Bob has an index $i \in [h]$ and a string $t\in \{0, 1\}^{1/\eps^2}$
of Hamming weight $\frac{1}{2\eps^2}$, 
drawn as follows: %\footnote{Alice's input and Bob's input are \emph{not} independent, but the marginal distribution of each one is uniform over its domain,
% namely, $\{0, 1\}^{(n/2)\times (1/\eps^2)}$ and $[n] \times \{0, 1\}^{1/\eps^2}$, respectively.}
\begin{enumerate}
\item $i$ is chosen uniformly at random;
\item every $s_{i'}$ for $i'\neq i$ is chosen uniformly at random;
\item $s_i$ and $t$ are chosen uniformly at random, conditioned on 
their Hamming distance $\Delta(s_i,t)$ being, with equal probability,  
either $\geq \frac{1}{2\eps^2} + \frac{c}{\eps}$ 
or     $\leq \frac{1}{2\eps^2} - \frac{c}{\eps}$ for some universal constant $c > 0$.
\end{enumerate}
Consider a (possibly randomized) one-way protocol, in which Alice sends Bob a message, and Bob then determines with success probability at least $2/3$ whether $\Delta(s_i,t)$
is $\geq \frac{1}{2\eps^2} + \frac{c}{\eps}$ or $\leq \frac{1}{2\eps^2} - \frac{c}{\eps}$.
Then Alice must send $\Omega(h/\eps^2)$ bits to Bob. 
\end{lemma}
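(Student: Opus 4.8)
The plan is to derive Lemma~\ref{lem:gap_hamming} from the (single-instance) one-way communication lower bound for Gap-Hamming Distance together with an information-theoretic direct-sum over the $h$ coordinates. Throughout write $N = 1/\eps^2$, so a single instance is Gap-Hamming on $N$-bit strings of weight $N/2$ with gap $c\sqrt N = c/\eps$. The point is that the lemma is genuinely an $h$-fold problem with an index: a naive reduction recovers only $\Omega(N)$ bits, and the factor $h$ must be extracted by a direct sum, which forces us to work with information cost rather than raw communication.

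\emph{Single-instance primitive.} First I would recall the one-way distributional lower bound for Gap-Hamming: for $a,b\in\{0,1\}^N$ of Hamming weight $N/2$ with $\Delta(a,b)$ equal to $N/2+c\sqrt N$ or $N/2-c\sqrt N$ with probability $1/2$ each, any one-way protocol in which Alice sends a message $M'$ and Bob, from $M'$ and $b$, outputs the correct side with probability $1/2+\gamma$ for a small enough absolute constant $\gamma>0$ must satisfy $I(M';a)=\Omega(N)$. The communication bound $\Omega(N)$ is classical (Chakrabarti--Regev; and, in the one-way case, Woodruff and Jayram--Kumar--Sivakumar), and its upgrade to an information-cost statement in the one-way model is standard; a careful version of this primitive must also accommodate the fixed Hamming weight ("balanced") variant and a success probability only slightly above $1/2$. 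Note this is genuinely stronger than Lemma~\ref{lem:index}: a single Gap-Hamming answer is only one bit, and the $\Omega(N)$ comes from the rigidity of the gap promise, not from Fano's inequality.

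\emph{Reduction to the primitive, coordinate by coordinate.} Fix any one-way protocol $\Pi$ for the $h$-fold problem; fixing the public coins to an optimal value we may assume $\Pi$ is public-coin-free with Alice's message $M=M(s_1,\dots,s_h,R_A)$, $R_A$ her private coins, and success probability still $\ge 2/3$. In the hard distribution $(s_1,\dots,s_h)$ is a product distribution (each $s_j$ uniform of weight $N/2$) that is independent of the index $i$, while $t$ depends only on $(i,s_i)$. Let $p_j$ be the probability $\Pi$ answers instance $j$ correctly conditioned on $i=j$; then $\tfrac1h\sum_j p_j\ge \tfrac23$, so a reverse-Markov/averaging argument gives a constant fraction of indices $j$ — call them \emph{good} — with $p_j\ge \tfrac12+\gamma$. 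For a good $j$ I would embed a single instance: Alice, given $a$, sets $s_j=a$ and samples the other $s_{j'}$ i.i.d.\ uniform of weight $N/2$ from $R_A$ and forms $M$; Bob sets his index to $j$, sets $t=b$, and runs $\Pi$'s decoder for instance $j$. The induced input distribution is exactly the $h$-fold hard distribution conditioned on $i=j$, and the joint law of $(M,s_j)$ equals its law in the original problem (since in both cases $s_j$ is uniform of weight $N/2$ and $M$ is built from $s_j$ together with fresh i.i.d.\ uniform strings and $R_A$). Hence the single-instance primitive yields $I(M;s_j)=\Omega(N)$ for every good $j$.

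\emph{Direct sum and conclusion.} Since $s_1,\dots,s_h$ are independent, the chain rule together with "conditioning reduces entropy'' gives $I(M;s_1,\dots,s_h)=\sum_j I(M;s_j\mid s_{<j})\ge\sum_j I(M;s_j)$, because $s_j$ is independent of $s_{<j}$. Restricting the last sum to the good indices and using the previous step, $I(M;s_1,\dots,s_h)\ge \Omega(h)\cdot\Omega(N)=\Omega(h/\eps^2)$. Finally the message length is at least $H(M)\ge I(M;s_1,\dots,s_h)=\Omega(h/\eps^2)$, as claimed. The main obstacle is the single-instance primitive in information-cost form: it is strictly stronger than the Index bound already available in this paper and must be imported from the Gap-Hamming literature, with attention paid to the balanced-weight variant and to success probability close to $1/2$; the surrounding work — extracting good indices, verifying that the embedding reproduces the correct conditional distribution, and the chain-rule direct sum — is routine but must be arranged so the per-coordinate $\Omega(N)$ bounds genuinely add up.
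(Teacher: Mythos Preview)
The paper does not prove this lemma at all: it is quoted verbatim from \cite{ACK+15} and used as a black box, so there is no ``paper's own proof'' to compare against. Your proposal is the standard information-cost direct-sum argument and is essentially how \cite{ACK+15} establishes the bound; the key observations --- that the marginal of $(s_1,\dots,s_h)$ is i.i.d.\ uniform of weight $N/2$ and independent of $i$, that a constant fraction of coordinates are ``good'' with advantage bounded away from $1/2$, and that independence of the $s_j$ makes the chain-rule terms dominate the unconditioned mutual informations --- are all correct. You have also correctly isolated the one genuine external input: the one-way information-cost lower bound $I(M';a)=\Omega(N)$ for balanced Gap-Hamming at constant advantage $\gamma>0$, which is indeed strictly stronger than Lemma~\ref{lem:index} and must be imported.
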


Before proving Theorem~\ref{thm:for_all}, we first consider the special case $n = \Theta(\beta/\eps^2)$.

\begin{lemma}
\label{lem:for_all_unit}
    Suppose $n = \Theta(\beta/\eps^2)$. Any $(1 \pm \eps)$ for-all cut sketching algorithm for $\beta$-balanced $n$-node graphs must output $\Omega(n \beta/ \eps^2) = \Omega(\beta^2/ \eps^4)$ bits.
\end{lemma}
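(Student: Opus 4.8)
The plan is to reduce from the $h$-fold Gap-Hamming problem of Lemma~\ref{lem:gap_hamming} with $h = k\beta$, where $k := \beta/\eps^2$; since $\Omega(h/\eps^2) = \Omega(k\beta/\eps^2) = \Omega(\beta^2/\eps^4)$ and $n = \Theta(\beta/\eps^2) = \Theta(k)$, this yields the claimed bound. I would assume without loss of generality that $1/\eps$ is a power of two and $k$ is an integer.

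First I would build the graph $G$ encoding Alice's input. Take a complete bipartite directed graph with parts $L = \{\ell_1,\dots,\ell_k\}$ and $R$, both of size $k$, and partition $R$ into $\beta$ clusters $R_1,\dots,R_\beta$ with $|R_j| = 1/\eps^2$. Index Alice's $h$ strings by pairs $(i,j)\in[k]\times[\beta]$ (each $s_{i,j}\in\{0,1\}^{1/\eps^2}$ has Hamming weight $\tfrac1{2\eps^2}$), encode $s_{i,j}$ by giving the forward edge from $\ell_i$ to the $v$-th vertex of $R_j$ weight $1 + s_{i,j}[v]\in\{1,2\}$, and give every backward edge (from $R$ to $L$) weight $1/\beta$. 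Then $G$ is strongly connected and $O(\beta)$-balanced: for any cut, every crossing forward edge (weight $\le 2$) is matched by its reverse edge (weight $1/\beta$) crossing the other way, and every crossing backward edge (weight $1/\beta$) is matched by a reverse forward edge of weight $\ge 1$, giving $w(S,V\setminus S)\le (2\beta+1)\,w(V\setminus S,S)$ for every $\varnothing\subset S\subset V$. Bob holds an index $(i,j)$ and a weight-$\tfrac1{2\eps^2}$ string $t$, which we view as a subset $T\subseteq R_j$ with $|T| = |R_j|/2$. Writing $a_\ell := |N(\ell)\cap T|$, where $N(\ell)$ denotes the weight-$2$ out-neighbours of $\ell$ inside $R_j$, we have $a_{\ell_i} = \langle s_{i,j},t\rangle$ and hence $\Delta(s_{i,j},t) = \tfrac1{\eps^2} - 2a_{\ell_i}$, so Bob's Gap-Hamming question is precisely whether $a_{\ell_i}\ge\tfrac1{4\eps^2}+\tfrac{c}{2\eps}$ or $a_{\ell_i}\le\tfrac1{4\eps^2}-\tfrac{c}{2\eps}$.

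To access the $a_\ell$'s, Bob queries, for every $Q\subseteq L$ with $|Q| = |L|/2$, the cut $S_Q := Q\cup(R\setminus T)$. The only crossing edges are the forward edges from $Q$ into $T$ and the backward edges from $R\setminus T$ into $L\setminus Q$ — in particular forward edges from $Q$ into the clusters $R_{j'}$ with $j'\ne j$ and into $R_j\setminus T$ stay inside $S_Q$ — so
\[
w(S_Q,V\setminus S_Q) \;=\; |Q|\,|T| \;+\; \tfrac1\beta\,|R\setminus T|\,|L\setminus Q| \;+\; \sum_{\ell\in Q}a_\ell ,
\]
where the first two terms depend only on $|Q|$ and are known to Bob. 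Hence a for-all sketch lets Bob recover $f(Q) := \sum_{\ell\in Q}a_\ell$ for all such $Q$ simultaneously, each up to additive error $\eps\cdot w(S_Q,V\setminus S_Q) = \Theta(\beta/\eps^3)$. The difficulty — and the reason a single cut query does not suffice — is that the quantity Bob needs, $a_{\ell_i}-\tfrac1{4\eps^2}$, has magnitude only $\Theta(1/\eps)$, far below this additive error.

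To overcome this I would use the for-all property as in \cite{ACK+15}. Conditioned on $T$, the values $\{a_\ell:\ell\ne\ell_i\}$ are i.i.d., each a hypergeometric with mean $\tfrac1{4\eps^2}$ and standard deviation $\Theta(1/\eps)$, so (since $c$ is small) a constant fraction — roughly half — of the left nodes are ``high'' in the sense $a_\ell\ge\tfrac1{4\eps^2}+\tfrac{c}{2\eps}$, and with high probability the number of high nodes is slightly below $|L|/2$. Let $Q^\star$ be the $|L|/2$ left nodes with largest $a_\ell$; it contains every high node, and $f(Q^\star)$ exceeds the ``typical'' value $\approx |L|/(8\eps^2)$ by $\Theta(\beta/\eps^3)$, i.e.\ by more than the additive error (after, if needed, shrinking $\eps$ by a small constant, which only changes the final bound by constants). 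Bob therefore computes $\widehat Q := \argmax_{|Q|=|L|/2}\widehat f(Q)$ from his estimates; by the for-all guarantee $f(\widehat Q)\ge f(Q^\star) - \Theta(\beta/\eps^3)$, so $\widehat Q$ is near-$f$-optimal, can omit only few of the high nodes, and can contain only few non-high ones, and Bob declares ``$a_{\ell_i}$ is high'' iff $\ell_i\in\widehat Q$. As the index $i$ is uniform, a uniformly random high (resp.\ low) node lies in (resp.\ outside) $\widehat Q$ except with small probability, so Bob is correct with probability $\ge 2/3$; since Alice only transmits $\sk(G)$, Lemma~\ref{lem:gap_hamming} forces $|\sk(G)| = \Omega(h/\eps^2) = \Omega(\beta^2/\eps^4)$. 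I expect the main obstacle to be exactly this last argument: quantitatively balancing the $\Theta(\beta/\eps^3)$ error budget against the $\Theta(1/\eps)$ per-node bias and the fraction of high nodes so that the success probability clears $2/3$, which forces a careful choice of the construction's constants (the cluster size $1/\eps^2$, the Gap-Hamming gap constant $c$, and the thresholds defining ``high'').
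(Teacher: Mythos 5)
Your proposal is correct and follows essentially the same route as the paper: reduce from the $h$-fold Gap-Hamming problem with $h=\beta^2/\eps^2$, encode the strings into a complete bipartite graph with $|L|=|R|=\beta/\eps^2$ (left node $\ell_i$, cluster $R_j$ of size $1/\eps^2$, forward weights $1$ or $2$, backward weights $1/\beta$), query cuts $S_Q=Q\cup(R\setminus T)$ over all half-size $Q\subseteq L$, and exploit the for-all guarantee as in \cite{ACK+15} by picking the maximizing $Q$. The quantitative step you flag at the end — balancing the $\Theta(\beta/\eps^3)$ additive error against the $\Theta(1/\eps)$ per-node bias over $\Theta(\beta/\eps^2)$ nodes so that the success probability clears $2/3$ — is exactly the content the paper delegates to two cited lemmas of \cite{ACK+15} (their Claim~3.5, bounding $|L_{\mathrm{high}}|$ and $|L_{\mathrm{low}}|$ near $|L|/2$, and their Lemma~3.4, showing the argmax set $Q$ captures a $4/5$ fraction of $L_{\mathrm{high}}$ under $c_1\beta/\eps^3$ additive error), so your informal description matches the paper's argument and is not a gap so much as a deferred citation.
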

We reduce the distributional Gap-Hamming problem (Lemma~\ref{lem:gap_hamming}) to the for-all cut sketching problem.
    Suppose Alice has $h$ strings $s_1, s_2, \ldots, s_h \in \{0, 1\}^{1/\eps^2}$ where $h = \beta^2 / \eps^2$, and Bob has an index $i \in [h]$ and a string $t \in \{0, 1\}^{1/\eps^2}$.
    We construct a graph $G$ to encode $s_1, s_2, \ldots, s_h$, such that given a for-all cut sketch of $G$, Bob can determine whether $\Delta(s_i, t) \ge \frac{1}{2\eps^2} + \frac{c}{\eps}$ or $\Delta(s_i, t) \le \frac{1}{2\eps^2} - \frac{c}{\eps}$ with high constant probability.
    Our lower bound then follows from Lemma~\ref{lem:gap_hamming}.

\medskip
{\bf \noindent Construction of $G$.} We construct a directed complete bipartite graph $G$.
Let $L$ and $R$ denote the left and right nodes of $G$, where $|L| = |R| = \beta / \eps^2$. We partition $R$ into $\beta$ disjoint sets with $|R_1| = \ldots = |R_\beta| = 1/\eps^2$.

Consider the distributional Gap-Hamming problem in Lemma~\ref{lem:gap_hamming} with $h = \beta^2 / \eps^2$. We re-index Alice's $(\beta^2 / \eps^2)$ strings as $s_{i, j}$, where $i \in [\beta / \eps^2]$ and $j \in [\beta]$. Let $\ell_1, \ell_2, \ldots, \ell_{\beta/\eps^2}$ be the nodes in $L$. We encode $s_{i, j} \in \{0, 1\}^{1/\eps^2}$ using the edges from $\ell_i$ to $R_j$: For node $\ell_i$ and the $v$-th node in $R_j$, the {\em forward} edge $(\ell_i, v)$ has weight $s_{i, j}(v) + 1$, and the {\em backward} edge $(v, \ell_i)$ has weight $1/\beta$. Note that the encoding of each $s_{i,j}$ is independent since $E(\ell_i, R_j) \cap E(\ell_{i'}, R_{j'}) = \varnothing$ for $(i, j) \neq (i', j')$.

\medskip
{\bf \noindent Determining $\Delta(s_{i, j}, t)$ from a for-all cut sketch of $G$.} 
Suppose Bob's input (after re-indexing) is $1 \le i \le \beta / \eps^2$, $1 \le j \le \beta$, and $t \in \{0, 1\}^{1/\eps^2}$. Bob wants to decide whether $\Delta(s_{i, j}, t) \ge \frac{1}{2\eps^2} + \frac{c}{\eps}$ or $\Delta(s_{i, j}, t) \le \frac{1}{2\eps^2} - \frac{c}{\eps}$.

Let $N(\ell_i)$ denote the set of nodes $v \in R_j$ where the forward edge $(\ell_i, v)$ has weight $2$, which corresponds to the positions of $1$ in $s_{i,j}$. Let $T$ be the set of nodes $v \in R_j$ such that $t(v) = 1$.
\[
\Delta(s_{i, j}, t) = |N(\ell_i) \setminus T| + |T \setminus N(\ell_i)| = |N(\ell_i)| + |T| - 2 |N(\ell_i) \cap T| = \frac{1}{\eps^2} - 2 | N(\ell_i) \cap T| \; .
\]
Hence, to determine whether $\Delta(s_{i, j}, t) \le \frac{1}{2\eps^2} - \frac{c}{\eps}$ or $\Delta(s_{i, j}, t) \ge \frac{1}{2\eps^2} + \frac{c}{\eps}$, Bob only needs to decide whether $|N(\ell_i) \cap T| \ge \frac{1}{4\eps^2} + \frac{c}{2\eps}$ or $| N(\ell_i) \cap T| \le \frac{1}{4\eps^2} - \frac{c}{2\eps}$.

Let $S = \{\ell_i\} \cup (R \setminus T)$. The cut $w(S, V \setminus S)$ consists of forward edges from $\ell_i$ to $T$ and backward edges from $(R \setminus T)$ to $(L \setminus \{\ell_i\})$. Ideally, if Bob knows $w(S, V \setminus S)$, he can subtract the weight of backward edges to obtain $w(\ell_i, T) = \frac{1}{\eps^2} + | N(\ell_i) \cap T|$ and recover $| N(\ell_i) \cap T|$. However, Bob can only get a $(1 \pm \eps)$-approximation of $w(S, V \setminus S)$, which may have $\Theta(\beta/\eps^3)$ additive error because $w(S, V \setminus S) = \Theta(\beta/\eps^4)$. With this much error, Bob cannot distinguish between the two cases.

To overcome this issue, we follow the idea of~\cite{ACK+15}. Intuitively, when $c$ is small, roughly half of $\ell_i \in L$ satisfy $|N(\ell_i) \cap T| \ge \frac{1}{4\eps^2} + \frac{c}{2\eps}$.
By enumerating all subsets $Q \subset L$ of size $|Q| = \frac{|L|}{2}$, Bob can find a set $Q$ such that most nodes $v \in Q$ satisfy $|N(\ell_i) \cap T| \ge \frac{1}{4\eps^2} + \frac{c}{2\eps}$.
Since $|Q| = \frac{\beta}{2\eps^2}$, the $\frac{c}{2\eps}$ bias per node adds up to roughly $\frac{c \beta}{2\eps^3}$, which can be detected even with $\Theta(\beta/\eps^3)$ error.

To prove Lemma~\ref{lem:for_all_unit}, we need the following two technical lemmas, which are essentially proved in~\cite{ACK+15}.
% The main difference is that $|L|$ is now $\beta / \eps^2$ rather than $1/\eps^2$, while the previous proofs (e.g., order statistics of Binomial distributions) still hold as the sample size increases.

\begin{lemma} [Claim 3.5 in~\cite{ACK+15}]
    \label{lem:for_all_1}
    Let $c > 0$ and $\frac{\sqrt{\beta}}{\eps} \ge \frac{10}{c}$.
    Consider the following sets:
    \begin{align*}
    L_{\mathrm{high}} &= \{\ell_i \in L: |N(\ell_i) \cap T| \ge \frac{1}{4\eps^2} + \frac{c}{2\eps}\} \; , \text{ and} \\
    L_{\mathrm{low}} &= \{\ell_i \in L: |N(\ell_i) \cap T| \le \frac{1}{4\eps^2} - \frac{c}{2\eps}\} \; .
    \end{align*}
    With probability at least $0.98$, we have $\frac{1}{2} - 10c \le \frac{|L_{\mathrm{high}}|}{|L|} \le \frac{1}{2}$ and $\frac{1}{2} - 10c \le \frac{|L_{\mathrm{low}}|}{|L|} \le \frac{1}{2}$. 
\end{lemma}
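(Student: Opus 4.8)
The plan is to reduce the statement to a per-node computation about a hypergeometric random variable and then apply a Chernoff bound over the nodes of $L$. Fix the distinguished index $i^\star \in [\beta/\eps^2]$ that Bob holds and fix the set $T \subseteq R_j$ (the argument is unchanged if one averages over $T$); the probability in the statement is over the draw of Alice's strings. For every node $\ell \in L \setminus \{\ell_{i^\star}\}$ the corresponding string is, by construction, a uniformly random element of $\{0,1\}^{1/\eps^2}$ of Hamming weight $\frac{1}{2\eps^2}$, independent of $T$ and of the strings at the other nodes; hence $N(\ell)$ is a uniformly random $\frac{1}{2\eps^2}$-subset of $R_j$ and $X_\ell := |N(\ell) \cap T|$ is hypergeometric with population $\frac{1}{\eps^2}$, $|T| = \frac{1}{2\eps^2}$ successes, and $\frac{1}{2\eps^2}$ draws. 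Two features of this distribution matter: its mean is $\mu = \frac{1}{4\eps^2}$, and its law is symmetric about $\mu$ (complementing a subset within either part of the bipartition is a measure-preserving involution). Also, the single node $\ell_{i^\star}$ can change $|L_{\mathrm{high}}|$ or $|L_{\mathrm{low}}|$ by at most $1$, hence each ratio by at most $1/|L| \le c^2/100$, which is absorbed into the stated $10c$ slack.

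Next I would pin down the per-node probabilities $p_{\mathrm{high}} := \Pr[X_\ell \ge \mu + \frac{c}{2\eps}]$ and $p_{\mathrm{low}} := \Pr[X_\ell \le \mu - \frac{c}{2\eps}]$. By the symmetry of $X_\ell$ about $\mu$ we have $p_{\mathrm{high}} = p_{\mathrm{low}}$, and since the two events are disjoint, $p_{\mathrm{high}} \le \frac12$. The crux is the matching lower bound $p_{\mathrm{high}} \ge \frac12 - O(c)$ with a constant small enough that $p_{\mathrm{high}}$ is in fact bounded away from $\frac12$ by at least, say, $c$. This follows from a local-limit estimate for the hypergeometric pmf: its standard deviation is $\Theta(1/\eps)$ and $1/\eps$ is large (this is where the hypothesis $\sqrt{\beta}/\eps \ge 10/c$ enters), so $\Pr[X_\ell = x] = \Theta(\eps)$ for every $x$ within $O(1/\eps)$ of $\mu$; consequently the interval $(\mu - \frac{c}{2\eps}, \mu + \frac{c}{2\eps})$ carries probability $\Theta(c)$, which together with symmetry gives $p_{\mathrm{high}} = p_{\mathrm{low}} \in [\frac12 - 9c, \frac12 - c]$ once the absolute constants are tuned. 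This is exactly the content of Claim 3.5 in~\cite{ACK+15}.

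Finally, conditioned on $T$ the variables $(X_\ell)_{\ell \ne \ell_{i^\star}}$ are mutually independent, so $|L_{\mathrm{high}}|$ is a sum of at least $|L| - 1$ independent indicators with mean within $1$ of $p_{\mathrm{high}}|L|$. Since $|L| = \beta/\eps^2 = (\sqrt{\beta}/\eps)^2 \ge 100/c^2$, Hoeffding's inequality yields $\Pr\bigl[\,\bigl|\,|L_{\mathrm{high}}|/|L| - p_{\mathrm{high}}\,\bigr| > c/2\,\bigr] \le 2e^{-c^2|L|/2} \le 2e^{-50} < 0.01$, and the same for $L_{\mathrm{low}}$; a union bound leaves failure probability below $0.02$. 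On the good event, $|L_{\mathrm{high}}|/|L| \le p_{\mathrm{high}} + c/2 + 1/|L| \le \frac12$ and $|L_{\mathrm{high}}|/|L| \ge p_{\mathrm{high}} - c/2 - 1/|L| \ge \frac12 - 10c$ (for $c$ below an absolute constant), and identically for $L_{\mathrm{low}}$, which is the claim. The only real difficulty is Step two — obtaining a per-node probability that is \emph{strictly} below $\frac12$, since the $\frac12$ upper bound on the ratios has no slack; this needs both an anti-concentration bound and a matching lower bound on the central mass of the hypergeometric distribution, i.e.\ genuine local-central-limit control rather than a crude tail estimate. Everything else is a routine independence-plus-Chernoff argument.
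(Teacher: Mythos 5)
The paper itself gives no proof of this lemma---it is imported wholesale as Claim~3.5 of~\cite{ACK+15}---so there is no "paper proof" to compare against; what follows is an assessment of your argument on its own terms. Your reconstruction is the right one: per-node, $X_\ell = |N(\ell)\cap T|$ is hypergeometric on population $1/\eps^2$ with $1/(2\eps^2)$ successes and $1/(2\eps^2)$ draws, which is indeed symmetric about its mean $\frac{1}{4\eps^2}$, so $p_{\mathrm{high}}=p_{\mathrm{low}}$ and $p_{\mathrm{high}}+p_{\mathrm{low}}+p_{\mathrm{mid}}=1$ with $p_{\mathrm{mid}}=\Pr\bigl[|X_\ell-\mu|<\frac{c}{2\eps}\bigr]$. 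The lower bound $\ge \frac12-10c$ on the fractions needs $p_{\mathrm{mid}}\le O(c)$ (a local-CLT \emph{upper} bound on the central pmf), and the upper bound $\le\frac12$ needs $p_{\mathrm{mid}}$ bounded \emph{below} by roughly the Hoeffding radius, so you correctly identify that both directions of local-limit control are required---this is the genuine content. Conditioned on $T$, the strings $s_{\ell,j}$ for $\ell\neq\ell_{i^\star}$ are independent uniform, so the Hoeffding step is valid, and absorbing the one conditioned node into a $1/|L|$ slack is fine.

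Two small points of calibration. First, the hypothesis $\sqrt{\beta}/\eps\ge 10/c$ is used only to make $|L|=\beta/\eps^2\ge 100/c^2$ large enough for Hoeffding; your parenthetical remark that it is "where $1/\eps$ is large" is a slip, since for $\beta>1$ it does not bound $1/\eps$. The local-CLT estimates instead require $1/\eps$ itself to be large, which is an implicit (and necessary) regime assumption of the whole construction, since $|R_j|=1/\eps^2$ must be a large integer. Second, when $c/(2\eps)<1$ the interval $(\mu-\frac{c}{2\eps},\mu+\frac{c}{2\eps})$ contains only the single lattice point $\mu$, so the lower bound $p_{\mathrm{mid}}\ge\Omega(c)$ follows from $\Pr[X_\ell=\mu]=\Theta(\eps)\ge\Theta(c)$ rather than from integrating $\Theta(\eps)$ over $\Theta(c/\eps)$ points; your stated estimate still holds but the justification should branch on this case, and the $[\frac12-9c,\frac12-c]$ window you assert requires absolute constants that are not established by the sketch (they are not needed at full strength once one tunes the Hoeffding radius, but the reader should not take $p_{\mathrm{high}}\le\frac12-c$ for granted). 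None of this affects the soundness of the plan.
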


\begin{lemma}[Lemma 3.4 in~\cite{ACK+15}]
    \label{lem:for_all_2}
    Let $c_1 > 0$ be a sufficiently small universal constant.
    Suppose one can approximate $w(U, T)$ with additive error $c_1 \beta / \eps^3$ for every $U \subset L$ with $|U| = \frac{|L|}{2}$. Let $Q \subset L$ be the subset with the highest (approximate) cut value. Then, with probability at least $0.96$, we have $\frac{|L_{\mathrm{high}} \cap Q|}{|L_{\mathrm{high}}|} \ge \frac{4}{5}$.
\end{lemma}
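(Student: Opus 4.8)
The plan is to condition on the event of Lemma~\ref{lem:for_all_1} and then argue deterministically. First I would reduce the claim to a statement about sums. For $\ell_i \in L$ write $P_i := |N(\ell_i) \cap T|$. Since $t$ has Hamming weight $\frac{1}{2\eps^2}$ (so $|T| = \frac{1}{2\eps^2}$) and each forward edge has weight $1$ or $2$, we have $w(\ell_i, T) = |T| + P_i$, hence for every $U \subset L$ with $|U| = \frac{|L|}{2}$,
\[
w(U, T) \;=\; \frac{|L|}{4\eps^2} \;+\; \sum_{\ell_i \in U} P_i ,
\]
where the first term is a constant independent of which such $U$ is chosen; likewise the backward contribution $w(R\setminus T, L\setminus U) = |R\setminus T|\cdot\frac{|L|}{2}\cdot\frac{1}{\beta}$ to the cut $S = U \cup (R\setminus T)$ is the same for all such $U$. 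Thus maximizing the approximate cut value over size-$\frac{|L|}{2}$ subsets is the same as approximately maximizing $\sum_{\ell_i \in U} P_i$, so by the additive guarantee $\sum_{\ell_i \in Q} P_i \ge \sum_{\ell_i \in U} P_i - \frac{2c_1\beta}{\eps^3}$ for every $U$ with $|U| = \frac{|L|}{2}$.

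Next I would fix a good comparison set. On the event of Lemma~\ref{lem:for_all_1}, $|L_{\mathrm{high}}|, |L_{\mathrm{low}}| \in [(\tfrac12 - 10c)|L|,\, \tfrac{|L|}{2}]$, so $L_{\mathrm{mid}} := L \setminus (L_{\mathrm{high}} \cup L_{\mathrm{low}})$ has $|L_{\mathrm{mid}}| \le 20c|L|$. Put $\mu := \frac{1}{4\eps^2}$ and $\delta := \frac{c}{2\eps}$; then $P_i \ge \mu + \delta$ on $L_{\mathrm{high}}$, $P_i \le \mu - \delta$ on $L_{\mathrm{low}}$, and $\mu - \delta < P_i < \mu + \delta$ on $L_{\mathrm{mid}}$. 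Since $|L_{\mathrm{high}}| \le \tfrac{|L|}{2}$ and $|L_{\mathrm{low}}| \le \tfrac{|L|}{2}$ (which forces $|L_{\mathrm{mid}}| \ge \tfrac{|L|}{2} - |L_{\mathrm{high}}|$), I can pick $M^\star \subseteq L_{\mathrm{mid}}$ with $|M^\star| = \tfrac{|L|}{2} - |L_{\mathrm{high}}|$ and set $Q^\star := L_{\mathrm{high}} \cup M^\star$, a subset of $L$ of size exactly $\tfrac{|L|}{2}$.

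The core is a telescoping cancellation. Write $r := |L_{\mathrm{high}} \setminus Q|$ (the claim being $r \le \tfrac15 |L_{\mathrm{high}}|$), $q_m := |Q \cap L_{\mathrm{mid}}|$, $q_\ell := |Q \cap L_{\mathrm{low}}|$; counting the elements of $Q$ gives $q_m + q_\ell = r + \bigl(\tfrac{|L|}{2} - |L_{\mathrm{high}}|\bigr)$. Splitting $\sum_{Q^\star} P_i$ and $\sum_{Q} P_i$ over the three classes, applying $\sum_{L_{\mathrm{high}} \setminus Q} P_i \ge r(\mu+\delta)$, $\sum_{M^\star} P_i \ge |M^\star|(\mu-\delta)$, $\sum_{Q \cap L_{\mathrm{mid}}} P_i \le q_m(\mu+\delta)$, $\sum_{Q \cap L_{\mathrm{low}}} P_i \le q_\ell(\mu-\delta)$, and substituting the size identity, every $\mu$-term and every $\bigl(\tfrac{|L|}{2}-|L_{\mathrm{high}}|\bigr)$-term cancels, leaving
\[
\sum_{\ell_i \in Q^\star} P_i \;-\; \sum_{\ell_i \in Q} P_i \;\ge\; 2\delta\,(r - q_m).
\]
Combining with the near-optimality of $Q$ gives $2\delta(r - q_m) \le \tfrac{2c_1\beta}{\eps^3}$, i.e.\ (using $2\delta = \tfrac{c}{\eps}$ and $|L| = \beta/\eps^2$) $r \le q_m + \tfrac{2c_1\beta}{c\eps^2} \le \bigl(20c + \tfrac{2c_1}{c}\bigr)|L|$, since $q_m \le |L_{\mathrm{mid}}| \le 20c|L|$. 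As $|L_{\mathrm{high}}| \ge (\tfrac12 - 10c)|L|$, taking $c$ a small enough universal constant and then $c_1 \le c/40$ forces $r \le \tfrac15 |L_{\mathrm{high}}|$, i.e.\ $\tfrac{|L_{\mathrm{high}} \cap Q|}{|L_{\mathrm{high}}|} \ge \tfrac45$; this holds whenever the event of Lemma~\ref{lem:for_all_1} holds, of probability $\ge 0.98 \ge 0.96$.

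The step I expect to be most delicate is the cancellation in the last display — keeping careful track of which of $L_{\mathrm{high}}, L_{\mathrm{mid}}, L_{\mathrm{low}}$ each element of $Q$ and of $Q^\star$ belongs to, applying each per-class inequality in the correct direction, and checking that the $\mu$- and $\bigl(\tfrac{|L|}{2}-|L_{\mathrm{high}}|\bigr)$-terms genuinely cancel — together with balancing the constants $c$ and $c_1$. The underlying idea (following~\cite{ACK+15}) is that the only way $Q$ can omit a high node is to take a mid node or a low node in its place; Lemma~\ref{lem:for_all_1} guarantees at most $20c|L|$ mid nodes, so all but $O(c|L|)$ of the omissions are low-node substitutions, each costing at least $2\delta = \tfrac{c}{\eps}$ against $Q^\star$, and an $O(c_1\beta/\eps^3)$ sketch error can pay for only $O\bigl(\tfrac{c_1}{c}|L|\bigr)$ of them — a negligible fraction of $|L_{\mathrm{high}}| = \Theta(|L|)$ once $c_1 \ll c$.
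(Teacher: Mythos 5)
Your proof is correct. The paper itself does not reproduce a proof of this lemma — it cites it directly as Lemma~3.4 of~\cite{ACK+15} — and the exchange argument you give (condition on Lemma~\ref{lem:for_all_1}, reduce to maximizing $\sum_{\ell_i\in U}P_i$, compare $Q$ against the optimal-under-$P$ set $Q^\star=L_{\mathrm{high}}\cup M^\star$, and observe that each omission of a high node not compensated by a mid node costs $2\delta=c/\eps$ while the sketch error can absorb only $O(c_1/c)\cdot|L|$ such omissions) is the standard one used in~\cite{ACK+15}; the telescoping cancellation and the constant bookkeeping both check out, with $c$ small and $c_1\ll c$ yielding the $4/5$ ratio.
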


We are now ready to prove Lemma~\ref{lem:for_all_unit}.

\begin{proof}[Proof of Lemma~\ref{lem:for_all_unit}]
We reduce from the distributional Gap-Hamming problem (Lemma~\ref{lem:gap_hamming}) with $h = \beta^2/\eps^2$.
We re-index Alice's $h$ strings as $s_{i, j}$, where $i \in [\beta / \eps^2]$ and $j \in [\beta]$.

We construct a directed bipartite graph $G$ with two parts $L$ and $R$, where $|L| = |R| = \beta / \eps^2$.
Let $L = \{\ell_1, \ell_2, \ldots, \ell_{\beta/\eps^2}\}$.
We partition $R$ into $\beta$ disjoint sets with $|R_1| = \ldots = |R_\beta| = 1/\eps^2$.
We encode $s_{i, j} \in \{0, 1\}^{1/\eps^2}$ using the edges from $\ell_i$ to $R_j$: For node $\ell_i$ and the $v$-th node in $R_j$, the {\em forward} edge $(\ell_i, v)$ has weight $s_{i, j}(v) + 1$, and the {\em backward} edge $(v, \ell_i)$ has weight $1/\beta$.

Note that $G$ is $(2\beta)$-balanced.
We will show that given a $(1 \pm c_2 \eps)$ for-all cut sketch of $G$ for some constant $c_2 > 0$, Bob can decide whether $\Delta(s_{i, j}, t) \ge \frac{1}{2\eps^2} + \frac{c}{\eps}$ or $\Delta(s_{i, j}, t) \le \frac{1}{2\eps^2} - \frac{c}{\eps}$ with probability at least $2/3$.
Consequently, by Lemma~\ref{lem:gap_hamming}, any for-all cut sketching algorithm must output $\Omega(h/\eps^2) = \Omega(\beta^2/\eps^4) = \Omega(\beta'^2/\eps'^4)$ bits for $\beta' = 2\beta$ and $\eps' = c_2 \eps$.

Bob enumerates every $U \subset L$ with $|U| = \frac{|L|}{2} = \frac{\beta}{2\eps^2}$ and uses the cut sketch to approximate $w(U, T)$, where $T \subset R_j$ corresponds to the positions of $1$ in Bob's string $t$ and $|T| = \frac{1}{2\eps^2}$.
Let $S = U \cup (R \setminus T)$.
The cut $(S, V \setminus S)$ has $\frac{\beta}{4\eps^4}$ forward edges from $U$ to $T$ with weights $1$ or $2$, and $\big(\frac{\beta}{\eps^2} - \frac{1}{2\eps^2}\big)\big(\frac{\beta}{2\eps^2}\big) = O\big(\frac{\beta^2}{\eps^4}\big)$ backward edges from $(R \setminus T)$ to $(L \setminus U)$ with weight $\frac{1}{\beta}$.
The total weight of these edges is $O(\beta / \eps^4)$.
Therefore, given a $(1 \pm c_2 \eps)$ for-all cut sketch, Bob can subtract the fixed weight of the backward edges and approximate $w(U, T)$ with additive error $O(c_2\beta / \eps^3)$.
When $c_2$ is sufficiently small, this additive error is at most $c_1 \beta/\eps^3$.
By Lemma~\ref{lem:for_all_2}, Bob can find $Q \subset L$ with $|Q| = \frac{|L|}{2}$ such that $\frac{|L_{\mathrm{high}} \cap Q|}{|L_{\mathrm{high}}|} \ge \frac{4}{5}$.
Finally, if $\ell_i \in Q$, Bob decides $|N(\ell_i) \cap T| \ge \frac{1}{4\eps^2} + \frac{c}{2\eps}$ and $\Delta(s_{i, j}, t) \le \frac{1}{2\eps^2} - \frac{c}{\eps}$; and if $\ell_i \notin Q$, Bob decides $\Delta(s_{i, j}, t) \ge \frac{1}{2\eps^2} + \frac{c}{\eps}$.

Suppose Bob's index is $(i, j)$.
Notice that Bob uses $j$ to determine which $R_j$ to look at, but does not use any information about $i$.
Therefore, when $\Delta(s_{i, j}, t) \le \frac{1}{2\eps^2} - \frac{c}{\eps}$ and $|N(\ell_i) \cap T| \ge \frac{1}{4\eps^2} + \frac{c}{2\eps}$,
\[
\mathbf{Pr} [i \in Q] = \frac{|L_{\mathrm{high}} \cap Q|}{|L_{\mathrm{high}}|} \ge \frac{4}{5} \; .
\]
Conversely, when $\Delta(s_{i, j}, t) \le \frac{1}{2\eps^2} + \frac{c}{\eps}$ and $|N(\ell_i) \cap T| \le \frac{1}{4\eps^2} - \frac{c}{2\eps}$, because $L_{\mathrm{low}} \cap L_{\mathrm{high}} = \varnothing$,
\[
\mathbf{Pr} [i \notin Q] = \frac{|L_{\mathrm{low}} \setminus Q|}{|L_{\mathrm{low}}|} = \frac{|L_{\mathrm{low}}| - |L_{\mathrm{low}} \cap Q|}{|L_{\mathrm{low}}|} \ge \frac{|L_{\mathrm{low}}| - \frac{1}{5}|L_{\mathrm{high}}|}{|L_{\mathrm{low}}|} \ge \frac{3}{4} \; .
\]
The last inequality holds because $|L_{\mathrm{low}}| \ge 0.4 |L|$ and $|L_{\mathrm{high}}| \le 0.5 |L|$ by Lemma~\ref{lem:for_all_1} when $c \le 0.1$.

We analyze Bob's success probability.
Lemma~\ref{lem:for_all_1} fails with probability at most $0.02$, Lemma~\ref{lem:for_all_2} fails with probability at most $0.04$, and the for-all cut sketch fails with probability at most $0.01$.~\footnote{The probability that a for-all cut sketch (Definition~\ref{def:for-all}) preserves all cuts simultaneously can be boosted from $2/3$ to $99/100$, e.g., by running the sketching and recovering algorithms $O(1)$ times and taking the median. This increases the length of Alice's message by a constant factor, which does not affect our asymptotic lower bound.}
If they all succeed, Bob's probability of answering correctly is at least $\min\left(\frac{|L_{\mathrm{high}} \cap Q|}{|L_{\mathrm{high}}|}, \frac{|L_{\mathrm{low}} \setminus Q|}{|L_{\mathrm{low}}|}\right) \ge \frac{3}{4}$.
Bob's overall fail probability is at most $0.02 + 0.04 + 0.01 + 0.25 < 1/3$.
\end{proof}

We next consider the case with general values of $n, \beta$, and $\eps$, and prove Theorem~\ref{thm:for_all}.

\begin{proof} [Proof of Theorem~\ref{thm:for_all}]
    Let $k = \beta / \eps^2$. We assume w.l.o.g. that $k$ is an integer and $n$ is a multiple of $k$. We reduce from the distributional Gap-Hamming problem in Lemma~\ref{lem:gap_hamming} with $h = \Omega(n\beta)$.
    We will show that Alice's strings can be encoded into a graph $G$ such that

    {\em (i)} $G$ has $n$ nodes and is $(2\beta)$-balanced, and 
    
    {\em (ii)} After receiving a string $t$, an index $q \in [h]$, and a $(1 \pm c_2 \eps)$ for-all cut sketch of $G$ for some universal constant $c_2 > 0$, Bob can distinguish whether $\Delta(s_q, t) \le \frac{1}{2\eps^2} - \frac{c}{\eps}$ or $\Delta(s_q, t) \ge \frac{1}{2\eps^2} + \frac{c}{\eps}$ with probability at least $2/3$. 
    
    Consequently, by Lemma~\ref{lem:gap_hamming}, any for-all cut sketching algorithm must output $\Omega(h /\eps^2) = \Omega(n \beta / \eps^2) = \Omega(n \beta' / \eps'^2)$ bits for $\beta' = 2\beta$ and $\eps' = c_2 \eps$.

    We first describe the construction of $G$. We partition the $n$ nodes into $\ell = n / k \ge 2$ disjoint sets $V_1, V_2, \cdots, V_\ell$, each containing $k$ nodes.
    Let $s_1, s_2, \cdots, s_h \in \{0, 1\}^{1/\eps^2}$ be Alice's random strings where $h = (t - 1)(\beta^2/\eps^2) = \Omega((n/k)(\beta^2 / \eps^2)) = \Omega(n\beta)$. We partition the $h$ strings into $(t - 1)$ disjoint sets $S_1, S_2, \cdots, S_{t - 1}$, each with $(\beta^2 / \eps^2)$ strings. We then follow the same procedure as in Lemma~\ref{lem:for_all_unit} to encode $S_i$ into a complete bipartite graph between $V_i$ and $V_{i + 1}$. Notice that $S_i$ has $(\beta^2/\eps^2)$ strings and $|V_i| = |V_{i + 1}| = k = \beta/\eps^2$, which is the same setting as in Lemma~\ref{lem:for_all_unit}. 

    We can verify that $G$ is $(2\beta)$-balanced.
    This is because every edge $e$ has a reverse edge whose weight is at most $2 \beta$ times the weight of $e$.
    For every $u \in V_i$ and $v \in V_{i+1}$, the edge $(u, v)$ has weight $1$ or $2$, while the edge $(v, u)$ has weight $1/\beta$.

    We next show how Bob can distinguish between the two cases. Suppose Bob's index $q$ specifies a string encoded by the subgraph between $V_{i}$ and $V_{i + 1}$. Similar to the proof of Lemma~\ref{lem:for_all_unit}, we only need to show that given a $(1 \pm c_2\eps)$ for-all cut sketch, Bob can approximate $w(U, T)$ with additive error $O(\beta/\eps^3)$ for every $U \subset V_{i}$ with $|U| = \frac{|V_i|}{2} = \frac{\beta}{2\eps^2}$ and for some $T \subset V_{i+1}$ with $|T| = \frac{1}{2\eps^2}$. To see this, consider $S = U \cup \left(V_{i + 1} \setminus T\right) \bigcup_{j = i + 2}^{t} V_j$. The edges from $S$ to $(V \setminus S)$ are
\begin{itemize}[leftmargin=*]
    \item $\frac{\beta}{4\eps^4}$ forward edges from $U$ to $T$, each with weight $1$ or $2$.
    \item $\big(\frac{\beta}{\eps^2} - \frac{1}{2\eps^2}\big)\big(\frac{\beta}{2\eps^2}\big)$ backward edges from $(V_{i + 1} \setminus T)$ to $(V_{i} \setminus U)$, each with weight $\frac{1}{\beta}$.
    \item $\frac{\beta^2}{2\eps^4}$ backward edges from $U$ to $V_{i - 1}$ when $i \ge 2$, each with weight $\frac{1}{\beta}$.
\end{itemize}
    The total weight of these edges is $w(S, V \setminus S) = O(\beta / \eps^4)$. Consequently, given a $(1 \pm c_2\eps)$ cut sketch, Bob can subtract the fixed weight of the backward edges and approximate $w(U, T)$ with $O(c_2 \eps (\beta / \eps^4)) = O(c_2 \beta / \eps^3)$ additive error. Similar to the proof of Lemma~\ref{lem:for_all_unit}, for sufficiently small constant $c_2 > 0$, this will allow Bob to distinguish between the two cases $\Delta(s_q, t) \le \frac{1}{2\eps^2} - \frac{c}{\eps}$ or $\Delta(s_q, t) \ge \frac{1}{2\eps^2} + \frac{c}{\eps}$ with probability at least $2/3$.
\end{proof} % .tex
\section{Local Query Complexity of Min-Cut}
\label{sec:lower-bound}

In this section, we present an $\Omega\big(\min\{m, \frac{m}{\eps^2k}\}\big)$ lower bound on the query complexity of approximating the global minimum cut of an undirected graph $G$ to a $(1 \pm \eps)$ factor in the local query model. Formally, we have the following theorem.

\MinCutLocalQuery*
% \begin{theorem}
% \label{thm:min_cut}
% Any algorithm \mathcal{A} that estimates the size of the global minimum cut of a graph $G$ up to a $(1 \pm \eps)$ factor requires $\Omega(\min\{m, \frac{m}{\eps^2k}\})$ queries in expectation in the local query model, where $k$ is the size of the minimum cut and $m$ is the number of edges in $G$.
% \end{theorem}

To achieve this, we define a variant of the \textsc{2-SUM} communication problem in Section \ref{2sumprelims}, show a graph construction in Section \ref{graph}, and show that approximating \textsc{2-SUM} can be reduced to the minimum cut problem using our graph construction in Section \ref{reduction}.
In Section~\ref{sec:query_ub}, we will show that our lower bound is tight up to logarithmic factors.

\subsection{\textsc{2-SUM} Preliminaries} \label{2sumprelims}

Building off of the work of \cite{2sum}, we define the following variant of the $\textsc{2-SUM}(t, L, \alpha)$ problem.

\begin{definition}
    For binary strings $x = (x_1,\dots, x_L) \in \{0, 1\}^L$ and $y = (y_1, \dots, y_L) \in \{0, 1\}^L$, let $\textsc{INT}(x, y) = \sum_{i=1}^L x_i \wedge y_i$ denote the number of indices where $x$ and $y$ are both $1$.
    Let $\textsc{DISJ}(x, y)$ denote whether $x$ and $y$ are disjoint. That is, $\textsc{DISJ}(x, y) = 1$ if $\textsc{INT}(x, y) = 0$, and $\textsc{DISJ}(x, y) = 0$ if $\textsc{INT}(x, y) \ge 1$.
\end{definition}

\begin{definition}
    Suppose Alice has $t$ binary strings $(X^1, \dots, X^t)$ where each string $X^i \in \{0, 1\}^L$ has length $L$ and likewise Bob has $t$ strings $(Y^1, \dots, Y^t)$ each of length $L$. $\textsc{INT}(X^i, Y^i)$ is guaranteed to be either $0$ or $\alpha \geq 1$ for each pair of strings $(X^i, Y^i).$ Furthermore, at least $1/1000$ of the $(X^i, Y^i)$ pairs are guaranteed to satisfy $\textsc{INT}(X^i, Y^i) = \alpha.$ In the $\textsc{2-SUM}(t, L, \alpha)$ problem, Alice and Bob want to approximate $\sum_{i \in [t]} \textsc{DISJ}(X^i, Y^i)$ up to additive error $\sqrt{t}$ with high constant probability.
\end{definition}

% tmp theorem, TODO: formalize
\begin{lemma}
 \label{thm:basic2sum}
    To solve $\textsc{2-SUM}(t, L, 1)$ with high constant probability,
    the expected number of bits Alice and Bob need to communicate is $\Omega(tL)$.
\end{lemma}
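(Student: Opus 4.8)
The plan is to reduce $t$ independent copies of the two-party set-disjointness problem $\textsc{DISJ}$ on strings of length $L$ to a single instance of $\textsc{2-SUM}(t,L,1)$, and then invoke the well-known $\Omega(L)$ randomized communication lower bound for $\textsc{DISJ}$ together with a direct-sum argument. First I would recall that for $\textsc{DISJ}$ on length-$L$ strings, the two parties can be given inputs drawn from the standard hard distribution of Kalyanasundaram--Schmitt / Razborov, under which $\textsc{INT}(x,y)\in\{0,1\}$ always (the ``unique intersection'' promise), a constant fraction of instances are intersecting, and deciding $\textsc{DISJ}(x,y)$ requires $\Omega(L)$ bits; crucially this lower bound holds in the \emph{distributional} sense and even against protocols that only need to be correct with probability $2/3$ on a $1-o(1)$ fraction of inputs. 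Note the parameter $\alpha=1$ in the lemma statement matches exactly the unique-intersection promise, which is why this particular $\alpha$ is the right one to state the clean $\Omega(tL)$ bound.

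Next I would set up the direct sum / information complexity argument. Draw the $t$ pairs $(X^i,Y^i)$ i.i.d.\ from the hard distribution $\mu$ for $\textsc{DISJ}$; with overwhelming probability at least a $1/1000$ fraction of the pairs are intersecting (a Chernoff bound suffices, since $\Pr_\mu[\textsc{INT}=1]$ is a fixed constant much larger than $1/1000$), so the promise of $\textsc{2-SUM}$ is met except with exponentially small probability, and we may condition on it. A protocol $\Pi$ solving $\textsc{2-SUM}(t,L,1)$ outputs an estimate $\widehat{Z}$ of $Z=\sum_i \textsc{DISJ}(X^i,Y^i)$ with $|\widehat Z - Z|\le \sqrt t$ with high constant probability. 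Write $\mathrm{IC}_\mu(\Pi)$ for the internal information cost; by the standard fact $\mathrm{CC}(\Pi)\ge \mathrm{IC}_\mu(\Pi)$ it suffices to lower bound the information cost. Since the $t$ coordinates are independent under $\mu$, superadditivity of information gives $\mathrm{IC}_\mu(\Pi)\ge \sum_{i=1}^t I(X^i,Y^i; \Pi \mid X^{<i},Y^{<i},\text{public coins})$, and it remains to argue each per-coordinate term is $\Omega(L)$.

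The main obstacle — and the step I would spend the most care on — is that $\textsc{2-SUM}$ only asks for the \emph{sum} up to additive error $\sqrt t$, not the individual bits $\textsc{DISJ}(X^i,Y^i)$, so a single coordinate's value is hidden inside aggregate noise; a naive ``run $\Pi$ and read off coordinate $i$'' reduction fails. The fix is the standard embedding trick from the $\textsc{2-SUM}$/direct-sum literature: to solve one $\textsc{DISJ}$ instance $(x,y)$, Alice and Bob plant it in a uniformly random coordinate $i\in[t]$ and use shared randomness to fill the other $t-1$ coordinates with \emph{self-reducible} samples from $\mu$ for which the parties already know the answers $b_j=\textsc{DISJ}(X^j,Y^j)$ without communication (this is possible because $\mu$ is a mixture over ``one party's set is empty'' type distributions whose disjointness value is known to that party). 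Running $\Pi$ yields $\widehat Z$ with $|\widehat Z - (b + \sum_{j\ne i} b_j)| \le \sqrt t$ where $b=\textsc{DISJ}(x,y)$; subtracting the known $\sum_{j\ne i}b_j$ leaves an estimate of $b\in\{0,1\}$ with additive error $\sqrt t$, which is useless for a single run — so instead we amplify by noting that the \emph{gap} version suffices, or more cleanly we argue at the level of information rather than a black-box reduction: the above embedding shows $I(x,y;\Pi\mid \text{context}) \gtrsim$ the information any $\textsc{DISJ}$ protocol correct on average must reveal, which is $\Omega(L)$ by the information complexity lower bound for $\textsc{DISJ}$ (Bar-Yossef et al.). Summing over the $t$ coordinates gives $\mathrm{IC}_\mu(\Pi)=\Omega(tL)$ and hence $\mathrm{CC}(\Pi)=\Omega(tL)$. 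I would present this via the cleanest available route — either citing a packaged ``$\textsc{2-SUM}$ has communication $\Omega(tL)$'' statement from \cite{2sum} and checking the promise parameters match, or spelling out the information-complexity direct sum as above — and flag the self-reducibility of the disjointness distribution as the one technical point that needs the collision/embedding argument rather than a plain reduction.
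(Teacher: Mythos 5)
Your proposal takes a genuinely different route from the paper, and the difference matters. The paper's proof of Lemma~\ref{thm:basic2sum} is essentially a citation: it observes that \cite{2sum} already proves an expected $\Omega(tL)$ bound for $\textsc{2-SUM}(t,L,1)$ \emph{without} the ``$1/1000$-fraction intersecting'' promise, and then argues that adding the promise cannot lower the bound, because an adversary can pad any instance with a batch of extra coordinate pairs of known disjointness status to force the promise to hold, and later subtract their known contribution (this costs only $\Theta(\sqrt{t})$ extra additive slack, which is absorbed). Your proposal instead tries to re-derive the $\Omega(tL)$ bound from scratch via an information-complexity direct sum, and handles the promise differently --- by choosing a hard distribution under which the promise holds with overwhelming probability (a Chernoff bound). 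Your handling of the promise is fine and is a legitimate alternative to the paper's padding argument; indeed a hard distribution that satisfies the promise w.h.p.\ suffices for a distributional lower bound.

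The problem is in the core of your from-scratch argument, and you yourself put your finger on it: the naive embedding of a single $\textsc{DISJ}$ instance into coordinate $i$, after subtracting the known contributions $\sum_{j\neq i} b_j$, leaves an estimate of a bit $b\in\{0,1\}$ with additive error $\sqrt{t}$, which is vacuous. You then wave at two fixes --- ``amplify by noting that the gap version suffices'' and ``argue at the level of information rather than a black-box reduction'' --- but neither is actually carried out, and neither is trivial. The missing step is precisely the nontrivial content of \cite{2sum}: one must show that a protocol with only $\sqrt{t}$ additive error on the sum is nonetheless forced, by an anti-concentration argument, to ``know'' (in an information-theoretic sense) the answers to all but $O(\sqrt{t})$ of the coordinates; otherwise the variance of its implicit per-coordinate guesses would blow the error budget. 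Only then does the per-coordinate information bound $\Omega(L)$ kick in for a $1-o(1)$ fraction of the coordinates, summing to $\Omega(tL)$. As written, your sketch identifies the obstacle but does not close it, so it does not constitute a proof. Your fallback --- ``citing a packaged statement from \cite{2sum} and checking the promise parameters match'' --- is the route the paper actually takes, and if you go that way you should also supply the padding (or hard-distribution) argument to reconcile the promise, which the paper does explicitly.
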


\begin{proof}
    \cite{2sum} proved an expected communication complexity of $\Omega(tL)$ for $\textsc{2-SUM}(t, L, 1)$ without the promise that at least a $1/1000$ fraction of the $t$ string pairs intersect. Adding this promise does not change the communication complexity, because if $(X^1, \dots, X^t)$ and $(Y^1, \dots, Y^t)$ do not satisfy the promise, we can add a number of new $X^i$ and $Y^i$ to satisfy the promise and later subtract their contribution to approximate $\sum_{i \in [t]} \textsc{DISJ}(X^i, Y^i)$ with additive error $\Theta(\sqrt{t})$.
\end{proof}

\begin{theorem} \label{thm:ext2sum}
 To solve $\textsc{2-SUM}(t, L, \alpha)$ with high constant probability,
    the expected number of bits Alice and Bob need to communicate is $\Omega(tL/\alpha)$.    
\end{theorem}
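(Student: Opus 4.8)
The plan is to derive Theorem~\ref{thm:ext2sum} from Lemma~\ref{thm:basic2sum} by a simple, communication-free reduction: an instance of $\textsc{2-SUM}(t, L_0, 1)$ with $L_0 = \lfloor L/\alpha \rfloor$ is blown up into an instance of $\textsc{2-SUM}(t, L, \alpha)$ by duplicating coordinates $\alpha$ times. So first I would set up the gadget: given Alice's strings $X^1,\dots,X^t \in \{0,1\}^{L_0}$ and Bob's strings $Y^1,\dots,Y^t \in \{0,1\}^{L_0}$, let $\hat X^i \in \{0,1\}^L$ be obtained by replacing each coordinate $X^i_j$ with $\alpha$ identical copies and zero-padding the trailing $L - \alpha L_0$ coordinates, and define $\hat Y^i$ analogously from $Y^i$.

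Next I would check the three things this reduction must preserve. (a) Intersection size: a coordinate $j$ of the original contributes $\alpha$ to $\textsc{INT}(\hat X^i,\hat Y^i)$ if $X^i_j = Y^i_j = 1$ and $0$ otherwise, so $\textsc{INT}(\hat X^i,\hat Y^i) = \alpha\cdot\textsc{INT}(X^i,Y^i) \in \{0,\alpha\}$, as required by the definition of $\textsc{2-SUM}(t,L,\alpha)$. (b) Disjointness and the target sum: $\textsc{DISJ}(\hat X^i,\hat Y^i) = \textsc{DISJ}(X^i,Y^i)$, hence $\sum_i \textsc{DISJ}(\hat X^i,\hat Y^i) = \sum_i \textsc{DISJ}(X^i,Y^i)$ and the same $\sqrt t$ additive-error target carries over verbatim. (c) The promise: the fraction of indices $i$ with $\textsc{INT}(\hat X^i,\hat Y^i) = \alpha$ equals the fraction with $\textsc{INT}(X^i,Y^i) = 1$, which is at least $1/1000$ by the promise built into $\textsc{2-SUM}(t,L_0,1)$ (cf. Lemma~\ref{thm:basic2sum}).

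Then the reduction itself is immediate: if $\Pi$ solves $\textsc{2-SUM}(t,L,\alpha)$ with high constant probability using expected communication $C$, Alice and Bob each locally compute $(\hat X^i)$ and $(\hat Y^i)$ from their own inputs (no communication) and run $\Pi$, obtaining an estimate of $\sum_i\textsc{DISJ}(X^i,Y^i)$ within additive error $\sqrt t$. Thus $C = \Omega(tL_0) = \Omega(tL/\alpha)$ by Lemma~\ref{thm:basic2sum}, using $L \ge \alpha$ — which is forced, since $\textsc{INT}$ must be able to attain the value $\alpha$ in length-$L$ strings — so that $\lfloor L/\alpha\rfloor \ge L/(2\alpha)$.

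I do not expect a genuine obstacle: the reduction is local and exactly error-preserving, so the argument is essentially bookkeeping. The only points that deserve an explicit line are that zero-padding together with the floor $L_0 = \lfloor L/\alpha\rfloor$ handles the case $\alpha \nmid L$, and that coordinate duplication preserves the "$\ge 1/1000$ intersecting pairs'' promise on the nose.
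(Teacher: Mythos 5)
Your proposal is correct and takes essentially the same approach as the paper: reduce from $\textsc{2-SUM}(t,L/\alpha,1)$ (Lemma~\ref{thm:basic2sum}) by an $\alpha$-fold blow-up of each string that multiplies intersection sizes by $\alpha$ while preserving disjointness and the $1/1000$ promise. The only cosmetic difference is that the paper concatenates $\alpha$ copies of the whole string while you duplicate each coordinate $\alpha$ times (and also handle $\alpha \nmid L$ via flooring and zero-padding, which the paper implicitly assumes away); the two gadgets are interchangeable.
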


\begin{proof}
    Consider an instance of $\textsc{2-SUM}(t, L/\alpha, 1)$ with Alice's strings $(X^1, \dots, X^t)$ and Bob's strings $(Y^1, \dots, Y^t)$ each with length $L/\alpha.$ For each of Alice's strings $X^i$ with length $L/\alpha$, we produce $X^{i,\alpha}$ (with length $L$) by concatenating $\alpha$ copies of $X^i$, and likewise we produce $Y^{i,\alpha}$ for each of Bob's strings $Y^i$. The setup where Alice has strings $(X^{1,\alpha}, \dots, X^{t, \alpha})$ and Bob has strings $(Y^{1,\alpha}, \dots, Y^{t,\alpha})$ is an instance of $\textsc{2-SUM}(t, L, \alpha)$. From Lemma~\ref{thm:basic2sum}, the communication complexity of $\textsc{2-SUM}(t, L/\alpha, 1)$ is $\Omega(tL/\alpha)$. Thus, the communication complexity of $\textsc{2-SUM}(t, L, \alpha)$ is $\Omega(tL/\alpha).$
\end{proof}

\subsection{Graph Construction} \label{graph}

Inspired by the graph construction from \cite{eden2017lower}, given two strings $x, y \in \{0, 1\}^N$, we construct a graph $G_{x,y}(V,E)$ such that $V$ is partitioned into $A$, $A'$, $B$ and $B'$, where $\card{A} = \card{A'} = \card{B} = \card{B'} = \sqrt{N} = \ell$. Note that since $\ell^2 = N$, we can index the bits in $x$ by $x_{i,j}$, where $1 \leq i, j \leq \ell$. We construct the edges $E$ according to the following rule:
\begin{align*}
    \begin{cases}
        (a_i, b_j'), (b_i, a_j') \in E & \text{if } x_{i,j} = y_{i,j} = 1 \\
        (a_i, a_j'), (b_i, b_j') \in E & \text{otherwise}
    \end{cases}
\end{align*}%Figure~
\Cref{fig:graph-example} illustrates an example of the graph $G_{x,y}(V,E)$ when $x = 000000100$ and $y = 100010100$.

\begin{figure}[h]
    \centering
    \includegraphics[width=8cm]{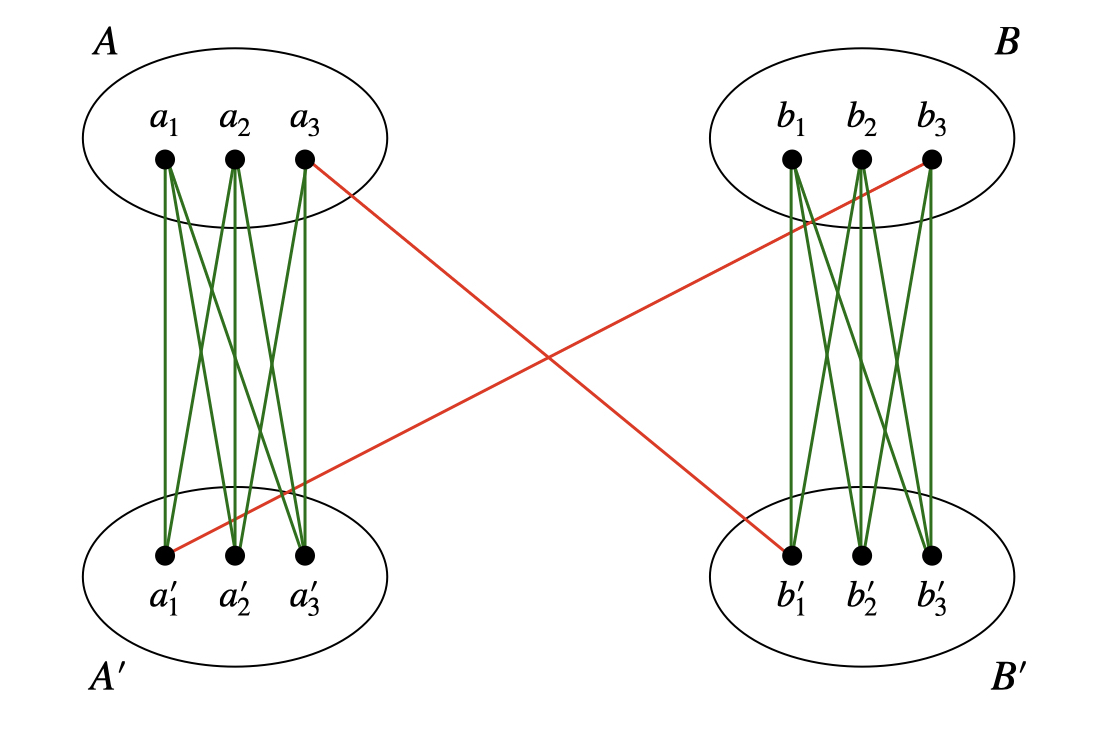}
    \caption{Example of $G_{x,y}(V,E)$ where $x = 000000100$ and $y = 100010100$. The red edges represent the intersection at $x_{31} = y_{31} = 1$. The green edges represent all the non-intersections in $x$ and $y$.}
    \label{fig:graph-example}
\end{figure}

  We will show that under certain assumptions about $N$ and $\mathrm{INT}(x,y)$, the number of intersections in $x,y$ is twice the size of the minimum cut in $G_{x,y}$.

\begin{lemma}
    \label{thm:k-connected}
    Given $x,y \in \{0,1\}^N$, if $\sqrt{N} \geq 3 \cdot \mathrm{INT}(x,y)$, then $\mathrm{MINCUT}(G_{x,y}) = 2 \cdot \mathrm{INT}(x,y)$. 
\end{lemma}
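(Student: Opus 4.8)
\medskip
\noindent\textbf{Proof proposal.}
The plan is to reduce $\mathrm{MINCUT}(G_{x,y})$ to a closed-form expression for the size of an arbitrary cut, then read off both directions. Write $V = A\sqcup A'\sqcup B\sqcup B'$ with $A=\{a_1,\dots,a_\ell\}$, $A'=\{a'_1,\dots,a'_\ell\}$, and similarly $B,B'$; set $t_{ij}:=\mathbf{1}[x_{ij}=y_{ij}=1]$, so $\mathrm{INT}(x,y)=\sum_{i,j}t_{ij}=:k$. For $S\subseteq V$ let $\alpha_i=\mathbf{1}[a_i\in S]$, $\beta_i=\mathbf{1}[b_i\in S]$, $\alpha'_j=\mathbf{1}[a'_j\in S]$, $\beta'_j=\mathbf{1}[b'_j\in S]$, and $a=\sum_i\alpha_i$, $b=\sum_i\beta_i$, $a'=\sum_j\alpha'_j$, $b'=\sum_j\beta'_j$. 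An edge crosses $(S,V\setminus S)$ exactly when its endpoint indicators differ; summing over the four edges contributed by each pair $(i,j)$ (namely $\{a_i,a'_j\},\{b_i,b'_j\}$ when $t_{ij}=0$ and $\{a_i,b'_j\},\{b_i,a'_j\}$ when $t_{ij}=1$), rewriting the sum over non-intersecting pairs as ``all pairs minus intersecting pairs'', and applying the bit identity $-(u\oplus v)-(w\oplus z)+(u\oplus z)+(w\oplus v)=2(u-w)(v-z)$, I would obtain
\[
|E(S,V\setminus S)| \;=\; P+Q+2\sum_{i,j}t_{ij}\,(\alpha_i-\beta_i)(\alpha'_j-\beta'_j),
\]
where $P=a(\ell-a')+(\ell-a)a'$ and $Q=b(\ell-b')+(\ell-b)b'$ are exactly the boundary sizes coming from the two (rewired) copies of $K_{\ell,\ell}$ on $A\cup A'$ and on $B\cup B'$. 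Since $\alpha_i-\beta_i,\alpha'_j-\beta'_j\in\{-1,0,1\}$ and only $k$ of the coefficients $t_{ij}$ are nonzero, the last term is always $\ge -2k$. Taking $S=A\cup A'$ gives $P=Q=0$ and $\alpha_i-\beta_i=\alpha'_j-\beta'_j=1$ for all $i,j$, so $|E(S,V\setminus S)|=2k$; hence $\mathrm{MINCUT}(G_{x,y})\le 2k$.

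For the matching lower bound I would show $|E(S,V\setminus S)|\ge 2k$ for every $\varnothing\neq S\subsetneq V$, splitting on the size of $P+Q$. If $P+Q\ge 4k$, the identity together with the bound $\ge -2k$ finishes it. If $P+Q<4k$, the hypothesis $\ell=\sqrt{N}\ge 3k$ enters through an elementary analysis of $f(a,a')=a(\ell-a')+(\ell-a)a'$: its only values strictly below $4k$ are $0$ (attained exactly at $(a,a')\in\{(0,0),(\ell,\ell)\}$) and $\ell$ (attained only at near-extreme $(a,a')$, and only when $\ell<4k$), and symmetrically for $Q$. Moreover, whenever $P=0$ (resp.\ $Q=0$) the blocks $A,A'$ (resp.\ $B,B'$) each lie entirely inside or entirely outside $S$, which forces every product $(\alpha_i-\beta_i)(\alpha'_j-\beta'_j)\ge 0$, hence the correction term $\ge 0$. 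Combining: a cut with $P+Q<4k$ either has $P=Q=0$, in which case (after discarding $S=\varnothing$ and $S=V$) it is $A\cup A'$ or $B\cup B'$ and has size exactly $2k$; or it has $\{P,Q\}=\{0,\ell\}$ and size $\ge\ell\ge 3k>2k$; the case $P=Q=\ell$ cannot occur since it would force $P+Q=2\ell\ge 6k$. Together with the upper bound this yields $\mathrm{MINCUT}(G_{x,y})=2k$.

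The main obstacle is discovering the clean cut identity; once it is in hand the rest is bookkeeping, and the only delicate point is the regime $P+Q<4k$, where the crude bound $\ge-2k$ on the correction term is insufficient and one must combine the constraint $\ell\ge 3k$ (to trap $P,Q\in\{0,\ell\}$) with the sign-alignment observation (to make the correction term nonnegative).
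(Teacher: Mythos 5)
Your proof is correct and takes a genuinely different route from the paper's. The paper establishes $\mathrm{MINCUT}(G_{x,y}) \ge 2\gamma$ via Menger's theorem: it splits into cases according to which of the four blocks $A,A',B,B'$ the endpoints $u,v$ lie in, and explicitly constructs $2\gamma$ edge-disjoint paths between every pair, with a somewhat delicate bookkeeping step in the cross-block cases to ensure the two families of paths avoid each other. Your approach instead derives a closed-form expression
\[
|E(S,V\setminus S)| = P + Q + 2\sum_{i,j} t_{ij}\,(\alpha_i-\beta_i)(\alpha'_j-\beta'_j),
\]
and reads off both bounds directly. I verified the bit identity $-(u\oplus v)-(w\oplus z)+(u\oplus z)+(w\oplus v)=2(u-w)(v-z)$, the observation that $P=0$ (i.e.\ $(a,a')\in\{(0,0),(\ell,\ell)\}$, and only then) forces every term $(\alpha_i-\beta_i)(\alpha'_j-\beta'_j)$ to be nonnegative, and the number-theoretic fact that for integer $a,a'\in[0,\ell]$ with $\ell\ge 3k\ge 3$ the only values of $f(a,a')=a(\ell-a')+(\ell-a)a'$ strictly below $4k$ are $0$ and possibly $\ell$ (and $P=Q=\ell$ is ruled out by $2\ell\ge 6k$). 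The resulting case split on $P+Q\gtrless 4k$ cleanly closes the argument, with $P=Q=0$ giving exactly the cut $A\cup A'$ or $B\cup B'$ of size $2k$. Your algebraic formulation is arguably more self-contained and less error-prone than the paper's path-construction argument; what it costs is having to discover the cut identity, and the mild arithmetic trapping $P,Q$ into $\{0,\ell\}$, whereas the paper's argument is conceptually closer to the ``$\ell-\gamma\ge 2\gamma$ common neighbors'' intuition already present in \cite{eden2017lower}.
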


\begin{proof}
    
To prove this, we use some properties about $\gamma$-connectivity of a graph. A graph is \textit{$\gamma$-connected} if at least $\gamma$ edges must be removed from $G$ to disconnect it. In other words, if a graph $G$ is $\gamma$-connected, then $\mathrm{MINCUT}(G) \geq \gamma$. Equivalently, a graph $G$ is $\gamma$-connected if for every $u, v \in V$, there are at least $\gamma$ edge-disjoint paths between $u$ and $v$. Therefore, given $\mathrm{INT}(x,y) = \gamma$, if we can show that $G_{x,y}$ is $2 \gamma$-connected and there exists one cut of size exactly $2\gamma$, then we can show $\mathrm{MINCUT}(G_{x,y}) \ge 2 \gamma$. 
By the construction of the graph, it is easy to see that $\mathrm{CUT}(A \cup A', B \cup B')$ has size $2 \gamma$, since each intersection of $x,y$ produces two crossing edges in between. Therefore, all we need to show here is that if $\sqrt{N} \geq 3 \cdot \gamma$, then $G_{x,y}$ is $2 \gamma$-connected.

Similar to \cite{eden2017lower}, we prove this by looking at each pair of $u,v \in V$. Our goal is to show that for every $u,v \in V$, there exist at least $2\gamma$ edge-disjoint paths from $u$ to $v$.
\setcounter{case}{0} % reset the case counter
\begin{case} 
    \label{case:case1}
    $u, v \in A$ (or symmetrically $u,v \in A', B, B'$).
    For each pair $u, v \in A$, we have that there are at least $\ell - \gamma$ distinct common neighbors in $A'$. This is because one intersection at $x_{ij}$ and $y_{ij}$ implies that the edge $(a_i, a_j')$ is not contained in $E$, and would remove at most one common neighbor in $A'$. Since $\ell = \sqrt{N} \geq 3 \gamma$, we have that there are at least $\ell - \gamma \geq 2 \gamma$ distinct common neighbors in $A'$, which we denote by $u^{A'}_{1}, u^{A'}_{2}, \dots, u^{A'}_{2\gamma}$. Therefore, each path $u \rightarrow u^{A'}_i \rightarrow v$ is edge-disjoint, and we have at least $2\gamma$ edge-disjoint paths from $u$ to $v$, as shown in \Cref{fig:case1}.
    \begin{figure}[h]
        \centering
        \includegraphics[width=0.7\linewidth]{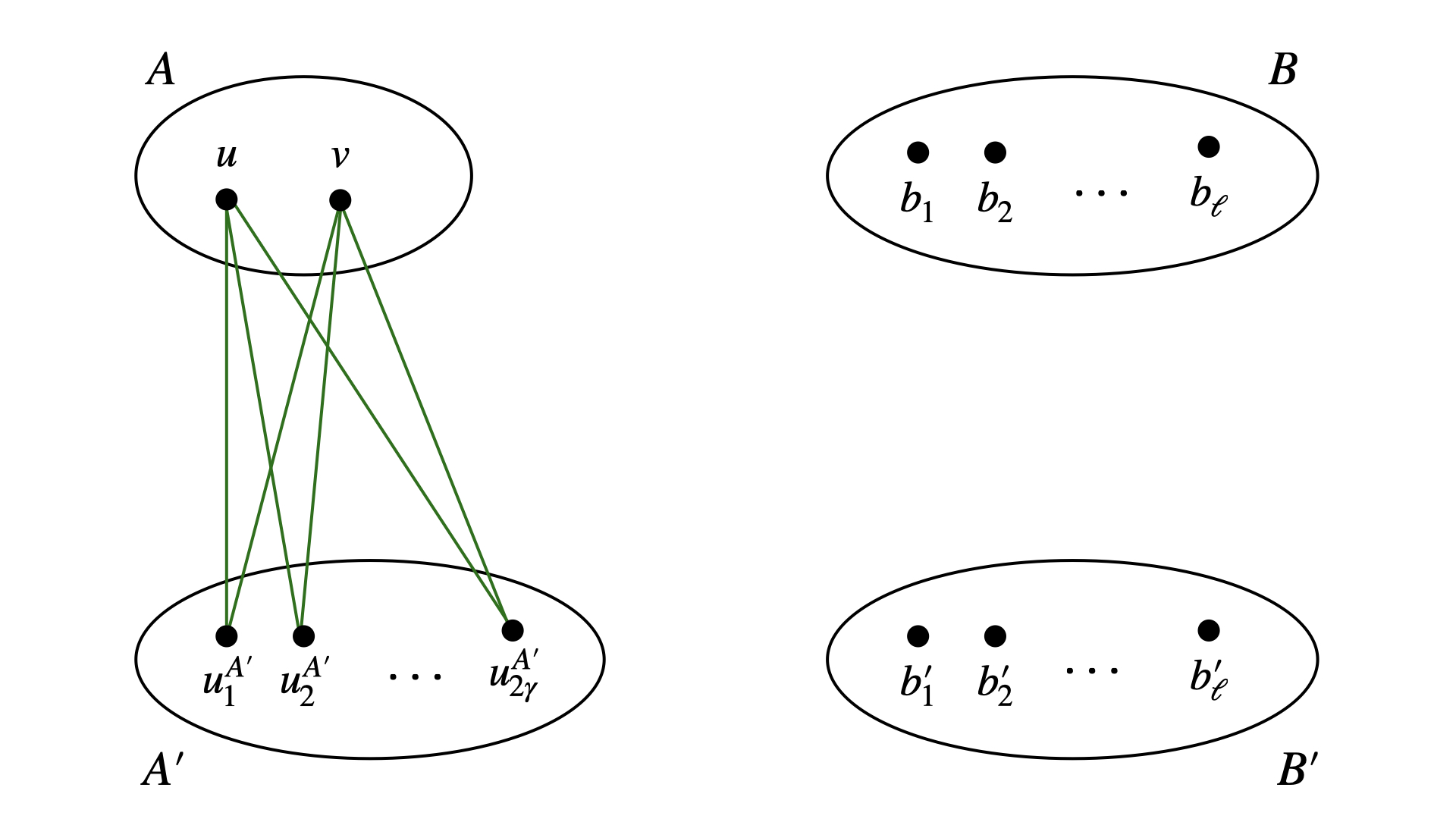}
        \caption{$u, v \in A$. We omit all the $(a_i, b_j')$, $(b_i, b_j')$, and $(b_i, a_j')$ edges.}
        \label{fig:case1}
    \end{figure}
\end{case}

\begin{case}
\label{case:case2}
    $u \in A, v \in A'$ (or symmetrically $u \in B, v \in B' $).
    Since $\ell - \gamma \geq 2 \gamma$, we have that $v$ has at least $2\gamma$ distinct neighbors in $A$, which we denote by $u^{A}_1, u^{A}_2, \dots, u^{A}_{2\gamma}$. From Case~\ref{case:case1}, we also have that each $u^{A}_i$ has at least $2 \gamma$ distinct common neighbors in $A'$. Therefore, we can choose $v^{A'}_1, v^{A'}_2, \dots, v^{A'}_{2\gamma}$ such that each path $u \rightarrow v^{A'}_i \rightarrow u^{A}_i \rightarrow v$ is edge-disjoint, so we have at least $2\gamma$ edge-disjoint paths from $u$ to $v$, as shown in \Cref{fig:case2}. Note that it may be the case where $u^{A}_i = u$. In this case, we can simply take the edge $(u, v)$ to be one of the edge-disjoint paths.

    \begin{figure}[h]
        \centering
        \includegraphics[width=0.7\linewidth]{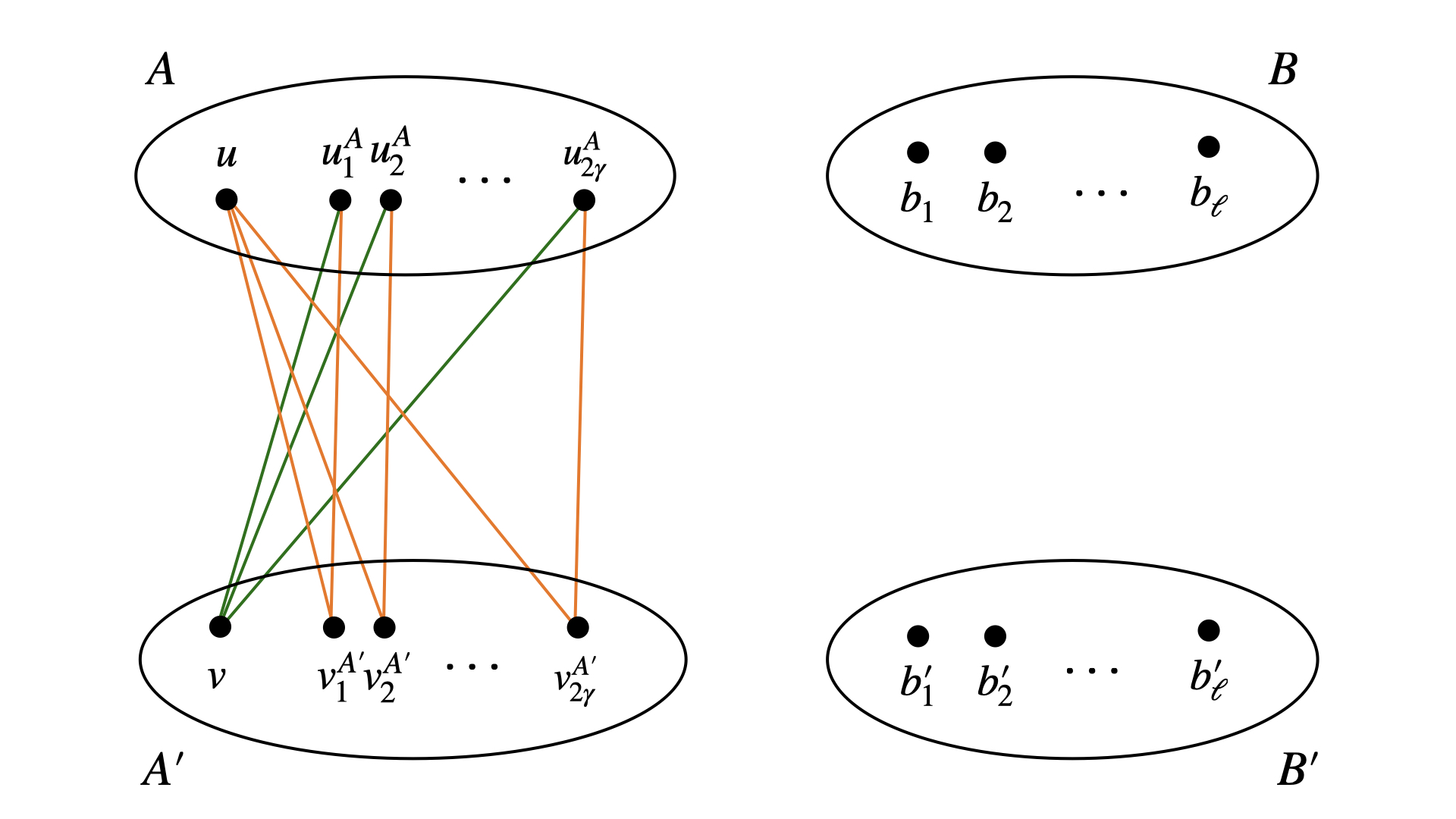}
        \caption{$u \in A, v \in A'$. We omit all the $(a_i, b_j')$, $(b_i, b_j')$, and $(b_i, a_j')$ edges. The green edges exist since $v$ has at least $2\gamma$ neighbors in $A$. The orange edges exist since $u^{A}_i$ and $u$ have at least $2\gamma$ common neighbors in $A'$. }
        \label{fig:case2}
    \end{figure}
\end{case}

\begin{case}
    \label{case:case3}
    $u \in A, v \in B'$ (or symmetrically $u \in A', v \in B$). 
    In this case, we show two sets of edge-disjoint paths, where each set has at least $\gamma$ edge-disjoint paths from $u$ to $v$, and the two sets of paths do not overlap. Overall, we have at least $2 \gamma$ edge-disjoint paths.

    The first set of paths $S_1$ uses the edges between $A'$ and $B$. Let $(w_1, x_1), (w_2, x_2), \dots, (w_{\gamma}, x_{\gamma}) \in A' \times B$ be the edges between $A'$ and $B$. Each of these edges represents one intersection in $x$ and $y$. Therefore, there are exactly $\gamma$ of them. From Case~\ref{case:case2}, we have that for every $w_i$, there are $2\gamma$ edge-disjoint paths from $u$ to $w_i$. Hence, for every $w_i$, we can choose a path from $u$ to $w_i$ and these $\gamma$ paths are edge-disjoint.
    \Cref{fig:case3.1} illustrates the paths $u \rightarrow u_i \rightarrow u'_i \rightarrow w_i \rightarrow x_i$. By symmetry, we can extend the paths from $x_i$ to $v$. This gives us $\gamma$ edge-disjoint paths from $u$ to $v$.

    \begin{figure}[h]
        \centering
        \includegraphics[width=0.7\linewidth]{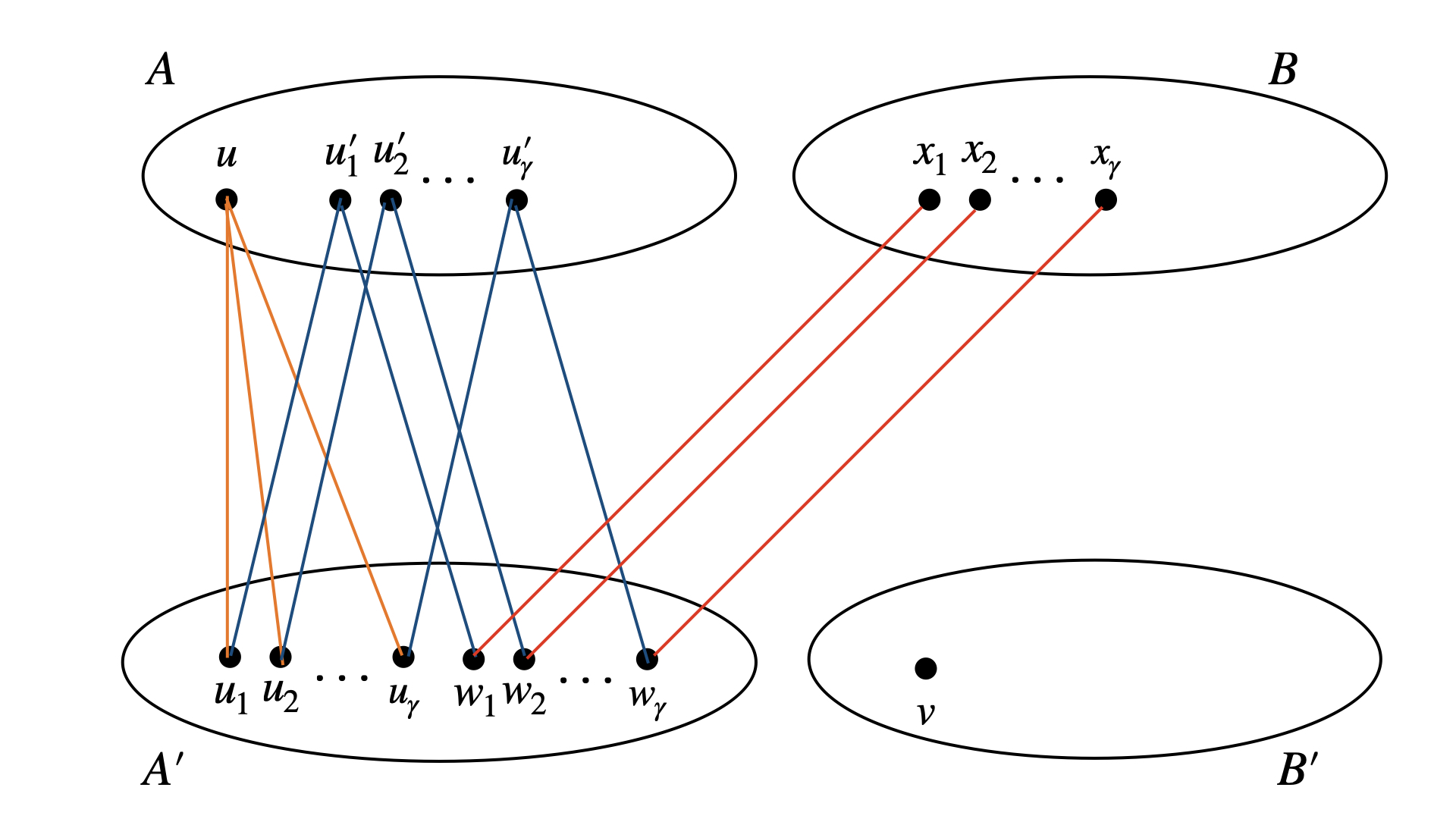}
        \caption{$u \in A, v \in B'$. The first set of paths $S_1$ goes from $u \rightarrow u_i \rightarrow u'_i \rightarrow w_i \rightarrow x_i$. We omit the paths from $x_i$ to $v$, as they are symmetric to the paths from $w_i$ to $u$. Once we extend the paths from $x_i$ to $v$, we have $\gamma$ edge-disjoint paths from $u$ to $v$. Note that the $w_i$ and $x_i$ may not be distinct.}
        \label{fig:case3.1}
    \end{figure}

    We now consider the second set of paths $S_2$. Let $$(y_1, z_1), (y_2, z_2), \dots, (y_{\gamma}, z_{\gamma}) \in A \times B'$$ be the distinct edges between $A$ and $B'$. Once again, it suffices to prove that there are $2\gamma$ edge-disjoint paths from $u$ to $y_i$, since the paths between $v$ to $z_i$ would be symmetric. From Case~\ref{case:case1}, we have that for every $y_i$, there are at least $2\gamma$ common neighbors between $y_i$ and $u$. Therefore, we can always find distinct $u''_1, u''_2, \dots, u''_{\gamma}$ such that the paths $u \rightarrow u''_i \rightarrow y_i$ are edge-disjoint, as shown in \Cref{fig:case3.2}. Once we extend the paths from $z_i$ to $v$, we have $\gamma$-edge disjoint paths in the second set.

    \begin{figure}[h]
        \centering
        \includegraphics[width=0.7\linewidth]{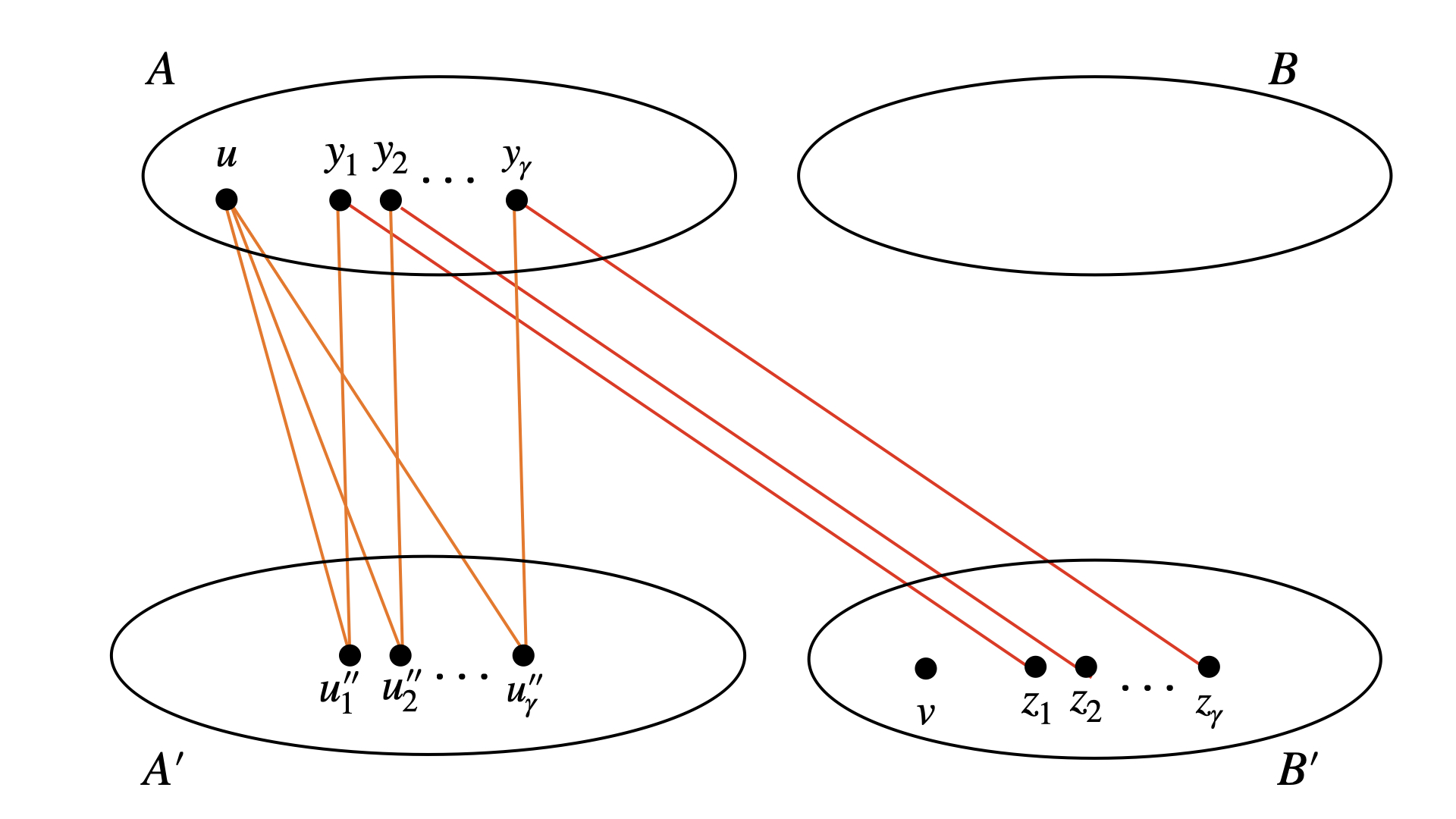}
        \caption{$u \in A, v \in B'$. The second set of paths $S_2$ goes from $u \rightarrow u''_i \rightarrow y_i \rightarrow z_i$. We omit the paths from $z_i$ to $v$, as they are symmetric to the paths from $y_i$ to $u$. Once we extend the paths from $x_i$ to $v$, we have $\gamma$ edge-disjoint paths from $u$ to $v$. Note that the $y_i$ and $z_i$ may not be distinct.}
        \label{fig:case3.2}
    \end{figure}

    Now we have two sets of paths $S_1$ and $S_2$, where both sets have at least $\gamma$ edge-disjoint paths. It remains to show that the paths in $S_1$ and $S_2$ can be edge-disjoint. Observe that the only possible edge overlaps between the paths from $u$ to the $w_i$ and paths from $u$ to the $y_i$ are $u \rightarrow u''_i$ and $u \rightarrow u_i$, since they are both neighbors of $u$. However, note that what we have shown is that for every $w_i$ or $y_i$, there are at least $2\gamma$ edge-disjoint paths from $u$ to $w_i$ or $y_i$. Therefore, one can choose $2\gamma$ edge-disjoint paths from $u$ to $w_i$ and $y_i$ such that $u_i'$ and $u_i''$ do not overlap. And similarly one can choose $2\gamma$ edge-disjoint paths from $v$ to the $z_i$ and the $x_i$. Overall, we have $2 \gamma$ edge-disjoint paths from $u$ to $v$.
\end{case}
    
\begin{case}
    $u \in A, v \in B$ (or symmetrically $u \in A', v \in B'$).
    This case is similar to Case~\ref{case:case3}, where we have two edge-disjoint sets $S'_1$ and $S'_2$. Consider the set of paths $S'_1$, where we use the edges $$(w_1, x_1), (w_2, x_2), \dots, (w_{\gamma}, x_{\gamma}) \in A' \times B.$$ We can construct the paths from $u$ to $w_i$ using the same way as for $S_1$ in Case~\ref{case:case3} (\Cref{fig:case3.1}). For the paths from $x_i$ to $v$, however, we construct them using the same way as in $S_2$ in Case~\ref{case:case3} (\Cref{fig:case3.2}). By connecting these paths, we obtain at least $\gamma$ edge-disjoint paths in $S'_1$. Similarly, we can also construct at least $\gamma$ edge-disjoint paths in $S'_2$, where we use the edges $$(y_1, z_1), (y_2, z_2), \dots, (y_{\gamma}, z_{\gamma}) \in A \times B'.$$ We follow the same way of choosing the paths in $S'_1$ and $S'_2$ that are edge-disjoint. \qedhere 
\end{case} 
\end{proof}

\subsection{Reducing 2-SUM to MINCUT}
\label{reduction}
In this section, we use the graph constructions in Section \ref{graph} to reduce the $\textsc{2-SUM}(t, L, \alpha)$ problem to $\textsc{MINCUT}$ and derive a lower bound on the number of queries in the local query model.

\begin{lemma} \label{lem:reduce}
    Given $M, \lambda > 0$, and $0 < \eps < 1$, suppose that we have any algorithm $\calA$ that can estimate the size of the minimum cut of a graph up to a $(1 \pm \eps)$ multiplicative factor with $T$ expected queries in the local query model. Then there exists an algorithm $\calB$ that can approximate $\textsc{2-SUM}(\eps^{-2}, \eps^2 M, \max\{\eps^2 \lambda, 1\})$ up to an additive error $\sqrt{\eps^{-2}} = \eps^{-1}$ using at most $O(T)$ bits of communication in expectation given $\sqrt{M} \geq 3 \max\{\lambda, \eps^{-2}\}$.
\end{lemma}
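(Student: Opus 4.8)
\textbf{The plan.} I would prove the lemma by a direct reduction: turn a $\textsc{2-SUM}(\eps^{-2}, \eps^2 M, \max\{\eps^2\lambda,1\})$ instance into a single graph $G_{x,y}$ from Section~\ref{graph}, run $\calA$ on it, and read off the answer. Write $t = \eps^{-2}$, $L = \eps^2 M$, and $\alpha = \max\{\eps^2\lambda,1\}$; assume w.l.o.g.\ that $t,L,\alpha$ are integers and $M$ is a perfect square. Given Alice's strings $X^1,\dots,X^t \in \{0,1\}^L$ and Bob's strings $Y^1,\dots,Y^t \in \{0,1\}^L$, let $x \in \{0,1\}^N$ be the concatenation $X^1\cdots X^t$ and $y$ the concatenation $Y^1\cdots Y^t$, so $N = tL = M$. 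Using the promise $\textsc{INT}(X^i,Y^i)\in\{0,\alpha\}$,
\[
\textsc{INT}(x,y) = \sum_{i\in[t]} \textsc{INT}(X^i,Y^i) = \alpha\,(t-D), \qquad D := \sum_{i\in[t]}\textsc{DISJ}(X^i,Y^i).
\]
Since $\textsc{INT}(x,y) \le \alpha t = \max\{\lambda,\eps^{-2}\}$, the hypothesis $\sqrt M \ge 3\max\{\lambda,\eps^{-2}\}$ gives $\sqrt N \ge 3\,\textsc{INT}(x,y)$, so Lemma~\ref{thm:k-connected} applies and yields $\textsc{MINCUT}(G_{x,y}) = 2\,\textsc{INT}(x,y) = 2\alpha(t-D)$. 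Hence if $\hat k$ is a $(1\pm\eps)$-multiplicative estimate of the min-cut, then $\hat D := t - \hat k/(2\alpha)$ satisfies $|\hat D - D| \le \eps(t-D) \le \eps t = \eps^{-1} = \sqrt t$, exactly the accuracy the $\textsc{2-SUM}$ instance demands.

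\textbf{The protocol $\calB$.} Alice and Bob publicly fix the vertex set $A\cup A'\cup B\cup B'$ of $G_{x,y}$ (it has $4\sqrt M$ vertices and $2M$ edges, independent of $x,y$), and one party, say Alice, simulates $\calA$ on $G_{x,y}$, answering its local queries on the fly. The crucial structural fact is that \emph{every} vertex of $G_{x,y}$ has degree exactly $\ell = \sqrt M$: each $a_i$ has exactly one incident edge per index $j$ (either $(a_i,b_j')$ or $(a_i,a_j')$), and symmetrically for $b_i$, $a_j'$, $b_j'$, regardless of $x,y$. Therefore a degree query needs \emph{no} communication. An adjacency query on a pair such as $(a_i,b_j')$ has answer $x_{ij}\wedge y_{ij}$, and on $(a_i,a_j')$ its negation, so Alice sends the single bit $x_{ij}$ and Bob returns the one-bit answer; pairs with both endpoints on the same side ($A\cup B$ or $A'\cup B'$) are never edges and cost nothing. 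An edge query $(a_i,j)$ — ordering the neighbors of $a_i$ by the index $j$ — returns $b_j'$ if $x_{ij}\wedge y_{ij}=1$ and $a_j'$ otherwise, again decided by one bit from Bob. So each query of $\calA$ is simulated with $O(1)$ bits of communication, and since $\calA$ makes $T$ queries in expectation, $\calB$ communicates $O(T)$ bits in expectation; Alice then outputs $\hat D$ (and, if needed, sends it to Bob with $O(\log t)$ further bits). Whenever $\calA$'s estimate is within $(1\pm\eps)$, which happens with high constant probability, $\hat D$ is within $\sqrt t$ of $D$, so $\calB$ solves $\textsc{2-SUM}$ with high constant probability.

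\textbf{Loose ends and the main difficulty.} A few routine checks remain: that the $\textsc{2-SUM}$ instance fed in is well-formed, i.e.\ $L \ge \alpha$ (which follows from $M \ge 9\eps^{-4}$ and $\lambda \le \sqrt M/3$, both consequences of the hypothesis); that $G_{x,y}$ is connected so $\textsc{MINCUT}$ is well-defined and positive (the $\tfrac{1}{1000}$-fraction promise forces $\textsc{INT}(x,y)\ge 1$, and Lemma~\ref{thm:k-connected} then shows $G_{x,y}$ is at least $2$-connected); and the elementary error arithmetic above. The only step with genuine content is the uniform-degree observation, which is what lets us answer degree queries for free and keeps every query's cost at $O(1)$ rather than $\Theta(\log M)$ bits — without it the simulation would leak far too much per query and the reduction would be lossy. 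Given that observation, the lemma is a direct composition of Lemma~\ref{thm:k-connected} with the query-by-query simulation. (Combining with Theorem~\ref{thm:ext2sum} then gives $T = \Omega(tL/\alpha) = \Omega(\min\{M, M/(\eps^2\lambda)\})$, which is the form used to prove Theorem~\ref{thm:min_cut}.)
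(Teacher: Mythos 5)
Your proof is correct and matches the paper's argument essentially line for line: the same concatenation into $x,y\in\{0,1\}^M$, the same chain $\sqrt M\ge 3\max\{\lambda,\eps^{-2}\}=3\alpha t\ge 3\,\textsc{INT}(x,y)$ to invoke Lemma~\ref{thm:k-connected}, the same output formula $\hat D=t-\hat k/(2\alpha)$ with error $\le\eps(t-D)\le\eps^{-1}$, and the same $O(1)$-bit query simulation keyed on the observation that $G_{x,y}$ is $\sqrt M$-regular so degree queries are free. The only additions (checking $L\ge\alpha$ and connectedness) are harmless sanity checks the paper leaves implicit.
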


\begin{proof}
    We will show that the following algorithm $\calB$ satisfies the above conditions:
    \begin{enumerate}
        \item Given Alice's strings $(X^1, \dots, X^{\eps^{-2}})$ each of length $\eps^2 M$, let $x$ be the concatenation of Alice's strings having total length $\eps^{-2} (\eps^2 M) = M$. Similarly let $y \in \{0, 1\}^M$ be the concatenation of Bob's strings.

        \item Construct a graph $G_{x,y}$ as in Section \ref{graph} using the above concatenated strings as $x, y.$

        \item Run $\calA(G_{x,y})$ and output $\big(\frac{1}{\eps^2} - \frac{\calA(G_{x,y})}{2\max\{\eps^2 \lambda, 1\}} \big)$ as the solution to $\textsc{2-SUM}(\eps^{-2}, \eps^2 M, \max\{\eps^2 \lambda, 1\}).$
    \end{enumerate}

    For the \textsc{2-SUM} problem, let $r = \eps^{-2} - \sum_{i \in [\eps^{-2}]} \textsc{DISJ}(X^i, Y^i)$ be the number of string pairs with intersections. Since there are $\eps^{-2}$ pairs $(X^i, Y^i)$, $r$ is at most $\eps^{-2}$. From our definition of \textsc{2-SUM}, each intersecting string pair has $\max\{\eps^2 \lambda, 1\}$ intersections. $x, y$ are formed by concatenations, so $\textsc{INT}(x, y) = r \max\{\eps^2 \lambda, 1\}$.  Since $\sqrt{M} \geq 3 \max\{\lambda, \eps^{-2}\} = 3\eps^{-2} \max\{\eps^2 \lambda, 1\} \geq 3r \max\{\eps^2 \lambda, 1\} = 3 \cdot \textsc{INT}(x, y)$, Lemma \ref{thm:k-connected} is applicable to $G_{x, y}$ so that $$\textsc{MINCUT}(G_{x,y}) = 2r \max\{\eps^2 \lambda, 1\}.$$
    Since $\calA$ approximates $\textsc{MINCUT}$ up to a $(1 \pm \eps)$ factor, $\calA(G_{x,y}) = 2r (1 \pm \eps) \max\{\eps^2 \lambda, 1\}$. Thus, $\calB$'s output to the $\textsc{2-SUM}$ problem is within $(\eps^{-2} - r) \pm r\eps = \sum_{i \in [\eps^{-2}]} \textsc{DISJ}(X^i, Y^i) \pm r\eps$. Recall that $r \leq \eps^{-2}$. We can see that $\calB$ approximates $\textsc{2-SUM}(\eps^{-2}, \eps^2 M, \max\{\eps^2 \lambda, 1\})$ up to additive error $\eps^{-1}$.
    
    To compare the complexities of $\calA$ and $\calB$, recall $\calA$ is measured by degree, neighbor, and pair queries,  whereas $\calB$ is measured by bits of communication. Given the construction of $G_{x,y}$, as shown in~\cite{eden2017lower}, degree, neighbor, and pair queries can each be simulated using at most $2$ bits of communication:

    \begin{itemize}[leftmargin=*]
        \item Degree queries: each vertex in $G_{x,y}$ has degree $\sqrt{M}$ so Alice and Bob do not need to communicate to simulate degree queries.

        \item Neighbor queries: assuming an ordering where $a_i$'s $j$'th neighbor is either $a'_j$ or $b'_j$, Alice and Bob can exchange $x_{i,j}$ and $y_{i,j}$ with $2$ bits of communication to simulate a neighbor query.

        \item Pair queries: Alice and Bob can exchange $x_{i,j}$ and $y_{i,j}$ with $2$ bits of communication to determine whether edges $(a_i, b'_j)$ and $(b_i, a'_j)$ exist.
    \end{itemize}
    As each of $\calA$'s queries can be simulated using up to $2$ bits of communication in $\calB$, $\calB$ can use $O(T)$ bits of communication to simulate $T$ queries in $\calA$. So we have established a reduction from approximating $\textsc{2-SUM}(\eps^{-2}, \eps^2 M, \max\{\eps^2 \lambda, 1\})$ up to additive error $\eps^{-1}$ to approximating $\textsc{MINCUT}$ up to a $(1 \pm \eps)$ multiplicative factor.
\end{proof}

We are now ready to prove Theorem~\ref{thm:min_cut}.

\begin{proof}[Proof of Theorem~\ref{thm:min_cut}]
    Given an instance of $\textsc{2-SUM}(\eps^{-2}, \eps^2 m,$ $\max\{\eps^2 k, 1\})$, consider the same way of constructing the graph $G_{x, y}$ in Lemma~\ref{lem:reduce}. From the construction of $G_{x,y}$, the number of edges is $2m$ since each of pair $(x_i, y_i)$ corresponds to $2$ edges. Using the promise from $\textsc{2-SUM}$, we get that $r \ge \eps^{-2}/1000$, where $r = \sum_{i \in [\eps^{-2}]} \textsc{DISJ}(X^i, Y^i)$, which means that the size of the minimum cut of $G_{x, y}$ is $2r\cdot\max\{\eps^{2}k, 1\} \ge \Omega( \max\{k,\eps^{-2}\})$. When $k \ge \eps^{-2}$, we have that the size of the minimum cut of $G_{x,y}$ is $\Omega(k)$, and from Lemma~\ref{lem:reduce} we obtain that any algorithm $\mathcal{A}$ that satisfies the guarantee on the distribution of $G_{x,y}$ must have $\Omega(m/(\eps^{2}k))$ queries in expectation. When $k < \eps^{-2}$, the size of the minimum cut of $G_{x,y}$ is $\Omega(\eps^{-2})$ and similarly we get that any algorithm $\mathcal{A}$ that satisfies the guarantee on the distribution of $G_{x,y}$ must use $\Omega(m)$ queries in expectation. Combining the two, we finally obtain an $\Omega(\min\{m, \frac{m}{\eps^2k}\})$ lower bound on the expected number of queries in the local query model. 
\end{proof}

\subsection{Almost Matching Upper Bound}
\label{sec:query_ub}

In this section, we will show that our lower bound is tight up to logarithmic factors. In the work of~\cite{globalmincut}, the authors presented an algorithm that uses $O(\frac{m}{k}\cdot \poly(\log n, 1/\eps))$ queries, where $k$ is the size of the minimum cut. We will show that, despite their analysis giving a dependence of $1/\eps^4$, a slight modification of their algorithm yields a dependence of $1/\eps^2$. Formally, we have the following theorem. 

\begin{theorem}[essentially~\cite{globalmincut}]
\label{thm:cut_query_ub}
    There is an algorithm that solves the minimum cut query problem up to a $(1 \pm \eps)$-multiplicative factor with high constant probability in the local query model. Moreover, the expected number of queries used by this algorithm is $\tilde{O}\big(\frac{m}{\eps^2 k}\big)$.
\end{theorem}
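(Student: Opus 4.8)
The plan is to keep the algorithmic skeleton of~\cite{globalmincut}---estimate $k$ up to a constant factor, sparsify $G$ by uniform edge sampling, and return the rescaled minimum cut of the sparsifier---but to run the sampling step at the rate that is actually needed, namely $\Theta(\log n/(\eps^2 k))$, rather than the more conservative rate that causes~\cite{globalmincut} to lose an extra $1/\eps^2$ factor. Concretely: (1)~run the algorithm of~\cite{globalmincut} with constant error to obtain $k_0$ with $k_0=\Theta(k)$, at a cost of $\tilde O(m/k)$ queries; since every singleton vertex forms a cut we have $k\le\delta_{\min}(G)$ and hence $2m=\sum_v\deg(v)\ge nk$, so $n=O(m/k)$ and this first step is within the $\tilde O(m/(\eps^2 k))$ budget. (2)~Set $p=\min\{1,\,C\log n/(\eps^2 k_0)\}$ for a sufficiently large constant $C$, and use the local-query edge-sampling simulation of~\cite{eden2017lower,globalmincut} to produce the subgraph $G_p$ in which each edge of $G$ is kept independently with probability $p$. (3)~Compute $\mathrm{MINCUT}(G_p)$ exactly---a purely computational step requiring no further queries, e.g.\ by Stoer--Wagner---and output $\mathrm{MINCUT}(G_p)/p$. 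One can also avoid step~(1) and instead run steps~(2)--(3) for geometrically decreasing guesses $\hat k=\delta_{\min}(G),\,\delta_{\min}(G)/2,\dots$, stopping at the first $\hat k$ whose rescaled sampled minimum cut is $\Theta(\log n/\eps^2)$; the last iteration then dominates the cost.

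For correctness I would invoke the standard cut-sparsification-by-uniform-sampling theorem (Karger): once $p\ge C\log n/(\eps^2 k)$, with high probability \emph{every} cut of $G_p$---in particular the global minimum cut---has value within a $(1\pm\eps)$ factor of $p$ times its value in $G$, so $\mathrm{MINCUT}(G_p)/p=(1\pm\eps)\,k$. This is precisely where we improve on~\cite{globalmincut}: their analysis is conservative by a $1/\eps^2$ factor (it effectively samples at a rate larger by $1/\eps^2$, or settles for an additive-error estimate of the sparsified minimum cut), whereas the all-cuts guarantee shows that rate $\Theta(\log n/(\eps^2 k))$ already $(1\pm\eps)$-preserves the minimum cut, and computing the minimum cut of $G_p$ \emph{exactly} incurs no further loss. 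For the query count, $\mathbb{E}[|E(G_p)|]=O(pm)=\tilde O(m/(\eps^2 k))$, and the simulation spends $O(1)$ local queries per retained edge plus the $O(m/k)$ up-front degree queries, for $\tilde O(m/(\eps^2 k))$ queries in expectation; when $\eps^2 k\le 1$ this bound is vacuous and we instead read all of $G$ in $O(m+n)=O(m)$ queries, so the query complexity is in fact $\tilde O(\min\{m,\,m/(\eps^2 k)\})$, matching Theorem~\ref{thm:min_cut}.

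The main obstacle is re-checking the two ingredients we are leaning on at the new, coarser rate. First, the edge-sampling simulation of~\cite{eden2017lower,globalmincut} must still cost only $O(1)$ queries per kept edge and still yield a genuine independent-$p$ subgraph when $p$ is as small as $\Theta(\log n/(\eps^2 k))$; the delicate point is that the two endpoints of a candidate edge must consistently agree on whether the edge is retained (via shared randomness attached to the edge) while the algorithm pays only for edges it actually keeps. Second, because we now need the exact minimum cut of $G_p$ rather than one particular near-minimum cut estimated to within additive $\eps k$, the simulation must recover the entire edge set of $G_p$, and the low-probability bad events (an atypical $k_0$, or a disconnected $G_p$) have to be absorbed by the usual repetition-and-median wrapper, which costs only logarithmic factors. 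The remaining pieces---the Chernoff and union-bound argument behind Karger's sampling theorem over the $n^{O(1)}$ cuts within a constant factor of $k$, and the bound on $\mathbb{E}[|E(G_p)|]$---are routine.
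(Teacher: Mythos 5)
Your high-level strategy coincides with the paper's: first obtain a constant-factor estimate of $k$ cheaply, and only pay the $\eps^{-2}$ factor in a final, calibrated step. But the execution is genuinely different. The paper treats the $\textsc{Verify-Guess}(D,t,\eps)$ subroutine of~\cite{globalmincut} as a black box, observes that the $\tilde O(\eps^{-2} m/t)$ cost makes the \emph{binary search} the bottleneck, and simply runs the search with a constant error parameter $\beta_0$ so that acceptance is certified at $t=\Theta(k/\log n)$ rather than $t = \Theta(\eps^2 k/\log n)$; a single final call $\textsc{Verify-Guess}(D, t/(c\log n), \eps)$ then costs $\tilde O(m/(\eps^2 k))$ and inherits correctness from~\cite{globalmincut} with no further argument. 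You instead rebuild the refinement step from first principles---explicitly materialize a sampled subgraph $G_p$ at rate $p=\Theta(\log n/(\eps^2 k_0))$, compute its exact minimum cut, and invoke Karger's uniform-sampling cut-sparsification theorem for correctness. This is cleaner conceptually and makes the source of the $\eps^{-2}$ completely transparent, but it carries a real burden you have flagged but not discharged: you need a local-query procedure that realizes an \emph{independent-per-edge} $p$-subgraph (or a close enough surrogate) at cost $O(1)$ queries per retained edge plus $O(n)=O(m/k)$ degree queries, with endpoint-consistent retention decisions. Neither~\cite{eden2017lower} nor~\cite{globalmincut} is cited in the paper as providing exactly this primitive, and $\textsc{Verify-Guess}$ in~\cite{globalmincut} does not construct the full sampled edge set. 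The gap is plausibly closeable---for instance, have each vertex $u$ draw $\mathrm{Bin}(d_u,p)$ uniformly random neighbor indices and declare each reached edge retained; the resulting subgraph keeps each edge independently with probability $1-(1-p)^2=\Theta(p)$, costs $O(n+pm)$ queries, and still feeds Karger's theorem after adjusting constants---but as written this step is asserted rather than proved, and it is precisely the step the paper's surgical fix is designed to avoid having to re-argue.
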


To prove Theorem~\ref{thm:cut_query_ub}, we first give a high-level description of the algorithm in~\cite{globalmincut}. The algorithm is based on the following sub-routine.

\begin{lemma}[\cite{globalmincut}]
    There exists an algorithm $\textsc{Verify-Guess}(D, t, \eps)$ which makes $\tilde{O}(\eps^{-2} m / t)$ queries in expectation such that (here $D$ is the degree of each node)
    \begin{enumerate}[leftmargin=*]
        \item If $t \ge \frac{2000 \log n}{\eps^2} \cdot k$, then $\textsc{Verify-Guess}(D, t, \eps)$ rejects $t$ with probability at least $1 - \frac{1}{\poly(n)}$. %,
        \item If $t \le  k$, then $\textsc{Verify-Guess}(D, t, \eps)$ accepts $t$ and outputs a $(1 \pm \eps)$-approximation of $k$ with probability at least $1 - \frac{1}{\poly(n)}$.
    \end{enumerate}
\end{lemma}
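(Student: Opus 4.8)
The plan is to prove this (slightly modified) form of \textsc{Verify-Guess} by the standard ``skeleton sampling'' strategy, the modification relative to~\cite{globalmincut} being to invoke Karger's uniform‑sampling theorem, whose sampling rate $\Theta(\eps^{-2}\log n/k)$ is already tight, in place of the more conservative estimate that underlies their $\eps^{-4}$ bound. We may assume $\eps\le\tfrac{1}{2}$ (for larger $\eps$, run the algorithm with $\eps'=\tfrac{1}{2}$; a $(1\pm\tfrac{1}{2})$-estimate is a fortiori a $(1\pm\eps)$-estimate and the query cost only decreases). Since the graph is $D$-regular we have $m=nD/2$, every node's degree is known, and no degree queries are needed. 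Fix an absolute constant $C_0$ --- larger than the constant in Karger's theorem and large enough to kill a Chernoff tail, yet smaller than $2000$ --- and set $p=\min\{1,\ \tfrac{C_0\log n}{\eps^2 t}\}$ and $\tau=\tfrac{C_0\log n}{2\eps^2}$.

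The subroutine is then: if $p=1$, i.e.\ $t\le C_0\log n/\eps^2$, read all of $G$ (for each node, $D$ neighbor queries, hence $O(m)$ queries overall, which is $\tilde O(\eps^{-2}m/t)$ since $t$ is small, the $\tilde O$ absorbing the $\log n$ factor), compute $k$ exactly, accept, and output $k$. Otherwise form the random subgraph $H$ by including each edge of $G$ independently with probability $p$: for each node $v$ and each of its $D$ incident slots, flip a $p$-biased coin and, on heads, issue a neighbor query, then retain the revealed edge only if $v$ is its lower-ID endpoint (so each edge is retained, independently, with probability exactly $p$); this costs $\tilde O(pm)=\tilde O(\eps^{-2}m/t)$ queries in expectation. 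Finally compute $\kappa=\textsc{MinCut}(H)$ with no further queries; if $\kappa\ge\tau$, accept and output $\kappa/p$, else reject.

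The analysis is two cases. Suppose $t\ge\tfrac{2000\log n}{\eps^2}k$. Since $C_0<2000$ and $k\ge1$ we have $p<1$ (so the sampling branch runs) and $pk\le C_0/2000=O(1)$; hence the number of edges of a fixed minimum cut of $G$ that survive into $H$ is $\mathrm{Bin}(k,p)$ with mean $O(1)$, and a Chernoff bound for sums of Bernoullis of small total mean shows that this count --- and therefore $\kappa=\textsc{MinCut}(H)$, which is at most that count --- is below $\tau$ with probability $1-1/\poly(n)$, so the subroutine rejects. Suppose instead $t\le k$. If $t\le C_0\log n/\eps^2$ we are in the exact branch and output $k$, trivially a $(1\pm\eps)$-approximation. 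Otherwise $p=\tfrac{C_0\log n}{\eps^2 t}\ge\tfrac{C_0\log n}{\eps^2 k}$, so $pk\ge C_0\log n/\eps^2$, and Karger's theorem applies: with probability $1-1/\poly(n)$ every cut of $H$ has value within a $(1\pm\eps)$ factor of $p$ times its value in $G$, so $\kappa=(1\pm\eps)\,pk\ge(1-\eps)C_0\log n/\eps^2\ge\tau$ using $\eps\le\tfrac{1}{2}$. Hence the subroutine accepts and $\kappa/p=(1\pm\eps)k$ is the desired estimate.

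I expect the only genuine work --- not a deep obstacle --- to be choosing $C_0$ and $\tau$ consistently so that the single cutoff $\tau$ lies at once above the $O(\log n)$ ceiling on $\textsc{MinCut}(H)$ in the regime $t\gg k$ and below the $(1-\eps)\,pk=\Omega(\log n/\eps^2)$ lower bound from Karger's theorem in the regime $t\le k$; the gap between these is $\Theta(1/\eps^2)$ wide, which is exactly why the $\eps^{-2}$ sampling rate both suffices and is needed, and where the improvement over the $\eps^{-4}$ analysis of~\cite{globalmincut} lives. Everything else is routine: the failure probabilities are $1/\poly(n)$ because the Chernoff and Karger steps can each be driven there by enlarging $C_0$; the expected query count is uniformly $\tilde O(pm)=\tilde O(\eps^{-2}m/t)$ across all values of $t$ (in particular, once $t$ is so large that $pm=o(1)$, with high probability the subroutine makes no query at all, $H$ is empty, $\kappa=0<\tau$, and it correctly rejects); and $D$-regularity is precisely what keeps this count free of an extra additive $n$ from degree queries.
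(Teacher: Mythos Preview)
The paper does not prove this lemma at all; it is quoted verbatim as a black-box result from~\cite{globalmincut}. The paper's contribution in Section~5.4 is \emph{not} to re-derive \textsc{Verify-Guess}, but to observe that in the outer binary-search loop one may call \textsc{Verify-Guess}$(D,t,\beta_0)$ with a fixed constant $\beta_0$ until the guess is accepted, and only then make a single call with parameter $\eps$. The $\eps^{-4}$ in~\cite{globalmincut} arises because their final call is at $t=\Theta(k\eps^2/\log n)$, not because \textsc{Verify-Guess} itself costs $\eps^{-4}m/t$.

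Your sketch is nonetheless a correct reconstruction of the skeleton-sampling argument behind the lemma, and it would establish the statement. Two small remarks. First, you read ``$D$ is the degree of each node'' as $D$-regularity; more likely it means the degree sequence is provided as input, but your argument is unaffected (the expected number of neighbor queries is still $\sum_v p\cdot\deg(v)=2pm$). Second, your claim that the $\eps^{-2}$ sampling rate is ``where the improvement over the $\eps^{-4}$ analysis of~\cite{globalmincut} lives'' is a misreading: the lemma as cited already carries the $\tilde O(\eps^{-2}m/t)$ bound, and the paper's improvement from $\eps^{-4}$ to $\eps^{-2}$ happens entirely in the outer loop, not inside \textsc{Verify-Guess}.
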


Given the above sub-routine, the algorithm initializes a guess $t = \frac{n}{2}$ for the value of the minimum cut $k$ and proceeds as follows:
\begin{itemize}
    \item if $\textsc{Verify-Guess}(D, t, \eps)$ rejects $t$, set $t = t / 2$ and repeat the process.
    \item if $\textsc{Verify-Guess}(D, t, \eps)$ accepts $t$, set $t = t/\kappa$ where $\kappa = \frac{2000\log n }{\eps^2}$.
    Let $\tilde{k} = \textsc{Verify-Guess}(D, t, \eps)$ and return the value of $\tilde{k}$ as the output.
\end{itemize}

To analyze the query complexity of the algorithm, notice that when $\textsc{Verify-Guess}$ first accepts $t$, we have that 
%\[
$\frac{k}{2} < t < \kappa k$.
%\]
which means that $t/\kappa < k$ and hence one call to $\textsc{Verify-Guess}(D, t/\kappa, \eps)$ will get the desired output. However, at a time in $t = \Theta(k/\kappa)$, the $\textsc{Verify-Guess}$ procedure needs to make $\tilde{O}\big(\frac{m}{\eps^{4}k}\big)$ queries in expectation.

To avoid this, the crucial observation is that, during the above binary search process, the error parameter of $\textsc{Verify-Guess}(D, t, \eps)$ does not have to be set to $\eps$. Using a small constant $\beta_0$ is sufficient. This way, when $\textsc{Verify-Guess}(D, t, \beta_0)$ first accepts $t$, we have 
%\[
$\frac{k}{2} < t < c \log(n) \cdot k$,
%\]
where $c$ is a constant. Consequently, the output of $\textsc{Verify-Guess}(D, t/(c \log n), \eps)$ will satisfy the error guarantee. Using the analysis in~\cite{globalmincut}, we can show that the query complexity of the new algorithm is $\tilde{O}(\frac{m}{\eps^{2}k})$.

\section*{Acknowledgement}
Yu Cheng is supported in part by NSF Award CCF-2307106. Honghao Lin and David Woodruff would like to thank support from the National Institute of Health (NIH) grant 5R01 HG 10798-2, and a Simons Investigator Award. Part of this work was done while D. Woodruff was visiting the Simons Institute for the Theory of Computing.

\bibliography{reference}
\bibliographystyle{alpha}

\end{document}